\def\Ddots{\mathinner{\mkern1mu\raise\p@
\vbox{\kern7\p@\hbox{.}}\mkern2mu
\raise4\p@\hbox{.}\mkern2mu\raise7\p@\hbox{.}\mkern1mu}}
\newtheorem{theorem}{Theorem}
\newtheorem{lemma}[theorem]{Lemma}
\newtheorem{proposition}[theorem]{Proposition}
\numberwithin{theorem}{section}
\newtheorem{definition}{Definition}
\newtheorem{problem}[definition]{Problem}
\numberwithin{definition}{section}
\theoremstyle{remark}
\newtheorem{example}{Example}
\numberwithin{example}{section}
\newcommand{\norm}[1]{\lVert#1\rVert}
\renewcommand{\O}{\mathcal{O}}
\title[The Frobenius and factor universality problems]{The Frobenius and factor universality problems\\of the Kleene star of a finite set of words}
\author{Maksymilian Mika}
\email{mika.maksymilian@gmail.com}
\author{Marek Szyku{\l}a}
\email{msz@cs.uni.wroc.pl}
\address{Institute of Computer Science, University of Wroc{\l}aw, Wroc{\l}aw, Poland}
\begin{document}
\begin{abstract}
We solve open problems concerning the Kleene star $L^*$ of a finite set $L$ of words over an alphabet $\Sigma$.
The \emph{Frobenius monoid} problem is the question for a given finite set of words $L$, whether the language $L^*$ is cofinite.
We show that it is PSPACE-complete.
We also exhibit an infinite family of sets $L$ such that the length of the longest words not in $L^*$ (when $L^*$ is cofinite) is exponential in the length of the longest words in $L$ and subexponential in the sum of the lengths of words in $L$.
The \emph{factor universality} problem is the question for a given finite set of words $L$, whether every word over $\Sigma$ is a factor (substring) of some word from $L^*$.
We show that it is also PSPACE-complete.
Besides that, we exhibit an infinite family of sets $L$ such that the length of the shortest words not being a factor of any word in $L^*$ is exponential in the length of the longest words in $L$ and subexponential in the sum of the lengths of words in $L$.
This essentially settles in the negative the longstanding Restivo's conjecture (1981) and its weak variations.
All our solutions base on one shared construction, and as an auxiliary general tool, we introduce the concept of \emph{set rewriting systems}.
Finally, we complement the results with upper bounds.

\bigskip
\noindent\textsc{Keywords}: cofinite language, complete set, completable word, factor universality, finite list of words, incompletable word, Frobenius monoid, Kleene star, mortality, Restivo's conjecture, universality
\end{abstract}
\maketitle
\section{Introduction}

Given a set of words $L$ over a finite alphabet $\Sigma$, the language $L^*$ contains all finite strings built by concatenating any number of words from $L$.
In general, we can think of $L$ as a dictionary and $L^*$ as the language of all available phrases.
One of the most basic questions that one could ask is whether $L$ generates all words over the alphabet $\Sigma$.
The answer is, however, trivial, because this is the case if and only if $L$ contains all single letters $a \in \Sigma$.
Thus, more useful relaxed questions are considered.
In this paper, we consider two classical such problems, settling their computational complexity and solving the related combinatorial questions.

Let $\norm{L}_1$ denote the sum of the lengths of the words in $L$, and let $\norm{L}_\infty$ denote the maximum length of the words in $L$.
The value $\norm{L}_1$ can be treated as the size of the input.
Note that $\norm{L}_1$ can be exponentially larger than $\norm{L}_\infty$.
We consider complexity and bounds in terms of both values.

\subsection{Frobenius monoid problem}

The classical Frobenius problem is, for given positive integers $x_1,\ldots,x_k$, to determine the largest integer $x$ that is not expressible as a non-negative linear combination of them.
An integer $x$ is expressible as a non-negative linear combination if there are integers $c_1,\ldots,c_k \ge 0$ such that $x = c_1 x_1+\ldots+c_k x_k$.
In a decision version of the problem, we ask whether the largest integer exists, i.e., whether the set of non-expressible positive integers is finite.
It is well known that the answer is ``yes'' if and only if $\gcd(x_1,\ldots,x_k)=1$.

The Frobenius problem was extensively studied and found applications across many fields, e.g., to primitive sets of matrices \cite{Dulmage1964}, the Shellsort algorithm \cite{Incerpi1985}, and counting points in polytopes \cite{Beck2002}.
The problem of computing the largest non-expressible integer is NP-hard \cite{Ramirez1996} when the integers are given in binary, and it can be solved polynomially if the number $k$ of given integers is fixed \cite{Kannan1992}.

A generalization of the Frobenius problem to the setting of languages was introduced by Kao, Shallit, and Xu \cite{KaoShallitXu2008FPFM}.
Instead of a finite set of integers, we are given a finite set of words over some finite alphabet $\Sigma$, and instead of multiplication, we have the usual word concatenation.
The original question becomes whether all but a finite number of words can be expressed as a concatenation of the words from the given set.
If $L$ is our given finite language, then the problem is equivalent to deciding whether $L^*$ is cofinite, i.e., the complement of $L^*$ is finite.

\begin{problem}[Frobenius Monoid Problem for a Finite Set of Words]\label{pbm:fpfl}
Given a finite set of words $L$ over a finite non-empty alphabet $\Sigma$, is $L^*$ cofinite?
\end{problem}

It is a simple observation that, if $\Sigma$ is a unary alphabet, then Problem~\ref{pbm:fpfl} is equivalent to the original Frobenius problem on integers, thus it is polynomially solvable.
There are also efficient algorithms for checking whether a \emph{given} word is in $L^*$ \cite{CDGPR2005ParsingWithAFiniteDictionary}.

\begin{example}
The language $L = \{000, 00000\}$ over $\Sigma = \{0\}$ generates the cofinite language
$L^*$; since $\gcd(3,5)=1$, the language $L^*$ includes all words longer than $3\cdot 5 - 3 - 5 = 7$.
\end{example}

\begin{example}\label{ex:fpfl2}
For the language $L = \{0,01,10,11,101\}$ over $\Sigma = \{0,1\}$, the words in $L^*$ are:
$$ 0, 00, 01, 10, 11, 000, 001, 010, 011, 100, 101, 110, \ldots .$$
We can see that $111 \notin L^*$ and also every word of the form $111(11)^*$ does not belong to $L^*$.
However, if we add $111$ to $L$, the answer becomes that $L^*$ is cofinite; it contains every word except the word~$1$.
\end{example}

The problem can be seen as \emph{almost universality} of the language $L^*$.
Kao, Shallit, and Xu \cite{KaoShallitXu2008FPFM,Xu2009FPFM} showed that, in particular, if $L^*$ is cofinite, then the longest non-expressible words can be exponentially long in $\norm{L}_\infty$; their construction is based on the so-called \emph{multi-shift de Bruijn sequences} \cite{KariXu2012}.
This is in contrast with the classical Frobenius problem, where the largest non-expressible integer is bounded quadratically in the largest given integer \cite{Brauer1942}.
A quadratic bound exists also for a similar problem where the iterated shuffle is used instead of the Kleene star operation \cite{NicholsonRampersad2018}.
Since the shown examples also use exponentially many words in $\norm{L}_\infty$, the question about the bound in terms of $\norm{L}_1$ or $|L|$ remained open.

In 2009, Shallit and Xu posed the open question about the computational complexity of determining whether $L^*$ is cofinite \cite{Xu2009FPFM}.
They also noted that it is NP-hard and in PSPACE when $L$ is given as a regular expression \cite{XuShallit2008}.
This question appears on Shallit's list of open problems \cite{ShallitBC4}.

\subsection{Factor universality problem}

A word $u \in \Sigma^*$ is a \emph{factor} (also called \emph{substring}) of a word $w \in \Sigma^*$ if $v u v'=w$ for some words $v,v' \in \Sigma^*$.
A language $K \subseteq \Sigma^*$ such that every word over $\Sigma$ is a factor of some word from $K$ is called \emph{factor universal}.

\begin{problem}[Factor Universality for a Finite Set of Words]\label{pbm:factor_universality}
Given a finite set of words $L$ over a finite non-empty alphabet $\Sigma$, is $L^*$ factor universal?
\end{problem}

Finite sets $L$ such that $L^*$ is factor universal are one of the basic concepts in the theory of codes \cite[Section~1.5]{BerstelPerrinReutenauer2010}.
They are called \emph{complete sets of words}, and words that are factors of some word in $L^*$ are called \emph{completable}.

\begin{example}
The set $L = \{01, 10, 11, 000\}$ over $\Sigma = \{0,1\}$ is not complete, since the word $100010001$ is not completable.
To create a word that contains $1$ surrounded by $0$s, we have to use either $10$ or $01$, but then there is no way to build the succeeding $001$ or preceding $100$, respectively.
\end{example}

\begin{example}
The set $L = \{00, 01, 10, 11\}$ over $\Sigma = \{0,1\}$ is complete, because every binary sequence of even length is in $L^*$.
We can construct every odd-length binary sequence by removing the first letter of a suitable even-length sequence.
\end{example}

The question about the length of the shortest incompletable words was posed in~1981 by Restivo \cite{Restivo1981}, who conjectured that if a finite set $L$ is not complete, then the shortest incompletable words have length at most $2\norm{L}_\infty^2$.
The conjecture in this form turned out to be false \cite{FiciPribavkinaSakarovitch2010} and $5\norm{L}_\infty^2-\O(\norm{L}_\infty)$ was the best lower bound known so far \cite{GusevPribavkina2011}, but the relaxed question whether there is a quadratic, or at least polynomial, upper bound remained open and became one of the longstanding unsolved problems in automata theory and the theory of codes.
It was generally believed that Restivo's conjecture holds with a larger value of the constant \cite{BerstelPerrinReutenauer2010}.
On the other hand, a sophisticated experimental research dedicated just to this problem \cite{JuliaMalapertProvillard2017} suggested that the tight upper bound is unlikely to be quadratic.
The best known upper bound was a trivial one, exponential in $\norm{L}_1$ thus doubly-exponential in terms of $\norm{L}_\infty$ \cite{GusevPribavkina2011}.

A polynomial upper bound $\O(\norm{L}_1^5)$ was recently derived for the class of sets $L$ called \emph{codes}, which guarantees a unique (unambiguous) factorization of any word to words from $L$ \cite{KieferMascle2019}.
Since $\norm{L}_1$ can be exponentially larger than $\norm{L}_\infty$, so the general question about a polynomial upper bound in $\norm{L}_\infty$ for this subclass still remains open.

The computational complexity of Problem~\ref{pbm:factor_universality} was also an open question.
In a more general setting, where instead of checking the factor universality of $L^*$ we check it for an arbitrary regular language specified by an NFA, the problem was shown to be PSPACE-complete \cite{RampersadShallitXu2012}.
In contrast, it is solvable in linear time when the language is specified by a DFA \cite{RampersadShallitXu2012}.
Also, some upper bound on the length of the shortest incompletable words was recently derived for the case where the language is specified by an unambiguous NFA \cite{BoccutoCarpi2019}.

Both the computational complexity question and finding the tight upper bound on the length also appear as one of Berstel, Perrin, and Reutenauer's research problems \cite[Resarch problems]{BerstelPerrinReutenauer2010} and on Shallit's list \cite{ShallitBC4}.
The problem itself has been connected with a number of different problems, e.g., testing if all bi-infinite words can be generated by a given list of finite words \cite{RampersadShallitXu2012}, synchronizing automata and the famous \v{C}ern{\'y} conjecture \cite{CarpiDAlessandro2017}, and the matrix mortality problem \cite{KieferMascle2019}.
In consequence for the mentioned problems, our solution reveals that the testing problem is PSPACE-complete, that any general weak version of Restivo's conjecture cannot be used to derive good upper bounds for synchronization of automata, and that the matrix mortality problem remains hard when the matrices are restricted to a specific form related to a list of words.

\subsection{Contribution}

We show that both Problem~\ref{pbm:fpfl} and Problem~\ref{pbm:factor_universality} are PSPACE-complete.
We also show exponential in $\norm{L}_\infty$ and subexponential in $\norm{L}_1$ lower bounds for the related length questions.
The complexity and the bounds hold even when the alphabet is binary.
Since as the input we take a list of words, this also settles the complexity of all problem variants where $L$ is given as a DFA, a regular expression, or an NFA.

To make the reduction feasible, we construct it in several steps.
We introduce a rewriting system called \emph{set rewriting} (Section~\ref{sec:set_rewriting}), which is a basis for intermediate problems that we reduce from.
We translate a set rewriting system first to a DFA, then to a binary DFA, and finally to a binary list of words.
The solutions for both problems are based on the same construction of the reduction (Section~\ref{sec:fpfl} and~\ref{sec:fu}), with some technical differences.
Thus, it seems that the methods may be applicable to some other problems concerning the Kleene star.

The answer for the Frobenius monoid problem can be surprising because the problem is equally hard when $L$ is represented by other common representations that are exponentially more succinct (i.e.\ DFA, regular expression, or NFA).
Kao et al.~\cite{KaoShallitXu2008FPFM} gave examples of finite languages $L$ such that the longest words not present in the generated cofinite language $L^*$ are of exponential length in $\norm{L}_\infty$.
However, the number of words in $L$ is also exponential in these examples, thus they do not imply a large lower bound in terms of the size of the input $\norm{L}_1$.
We strengthen this result by exhibiting examples such that the longest words not present in cofinite $L^*$ are of subexponential length in $\norm{L}_1$.
The examples are derived from our reduction and its complexity analysis.

The solution for the factor universality problem uses a similar construction.
As well, as a corollary, we exhibit a family of sets $L$ of binary words whose shortest incompletable words are of exponential length in $\norm{L}_\infty$ and subexponential in $\norm{L}_1$.
This settles in the negative all weak variations of Restivo's conjecture and essentially closes the longstanding problem.

Finally (Section~\ref{sec:upper_bounds}), we note that both problems can be solved in exponential time in $\norm{L}_\infty$ while remaining polynomial in $|L|$ thus in $\norm{L}_1$.
This means that they can be effectively solved when the given set of words is dense, that is, $\norm{L}_\infty$ is much smaller (e.g., logarithmic) than $|L|$.
We also derive upper bounds on the length of the same order.

We conclude that for a finite list $L$ of words over a fixed alphabet, $2^{\O(\norm{L}_\infty)}$ is a tight upper bound on both the length of the longest words that are not in $L^*$ when $L^*$ is cofinite and the length of the shortest incompletable words when $L^*$ is not factor universal.
Furthermore, in terms of $\norm{L}_1$, the subexponential length $2^{\varTheta(\sqrt[5]{\norm{L}_1})}$ is attainable.

\section{Preliminaries}\label{sec:preliminaries}

Let $\varepsilon$ denote the empty word.

A \emph{nondeterministic finite automaton} (\emph{NFA}) is a quintuple $\mathcal{A} = (Q_\mathcal{A},\Sigma,\delta_\mathcal{A},q_0,F_\mathcal{A})$, where $Q_\mathcal{A}$ is a finite non-empty set of \emph{states}, $\Sigma$ is a finite non-empty \emph{alphabet}, $\delta_\mathcal{A}\colon Q_\mathcal{A} \times (\Sigma \cup \{\varepsilon\}) \to 2^{Q_\mathcal{A}}$ is the \emph{transition function}, $q_0 \in Q_\mathcal{A}$ is the \emph{initial} state, and $F_\mathcal{A} \subseteq Q_\mathcal{A}$ is the set of \emph{final} states.
If for a state $q \in Q_\mathcal{A}$, the set of $\varepsilon$-transitions $\delta_\mathcal{A}(q,\varepsilon)$ is not explicitly defined, we assume $\delta_\mathcal{A}(q,\varepsilon)=\emptyset$ (no such transitions).

We extend $\delta_\mathcal{A}$ to a function $2^{Q_\mathcal{A}} \times \Sigma^* \to 2^{Q_\mathcal{A}}$ as usual.
We assume that the extended $\delta_\mathcal{A}$ is complemented by $\varepsilon$-transitions, i.e., for a subset $C \subseteq Q_\mathcal{A}$ and a word $w \in \Sigma^*$, $\delta_\mathcal{A}(C,w)$ is the set of all the states that can be obtained from a state in $C$ by applying sequentially a transition of each of the consecutive letters of $w$ interleaved with any number of $\varepsilon$-transitions, which can be used also at the beginning and the end.

A state $q' \in Q_\mathcal{A}$ is \emph{reachable from a state} $q \in Q_\mathcal{A}$ if there exists a word $w \in \Sigma^*$ such that $q' \in \delta_\mathcal{A}(\{q\},w)$.
Similarly, a subset $S' \subseteq Q_\mathcal{A}$ is \emph{reachable} from $S \subseteq Q_\mathcal{A}$ if there exists a word $w \in \Sigma^*$ such that $\delta_\mathcal{A}(S,w)=S'$.
Then we say that $q'$ (resp.\ $S'$) is \emph{reachable by the word} $w$ \emph{from} $q$ (resp.\ $S$).

An automaton \emph{accepts} a word $w \in \Sigma^*$ if $\delta_\mathcal{A}(\{q_0\},w) \cap F_\mathcal{A} \neq \emptyset$.
The set of all accepted words is the \emph{language} of the automaton.

A state $q \in Q_\mathcal{A}$ is called \emph{dead} if no final state is reachable from it, i.e., $\delta_\mathcal{A}(\{q\},w) \cap F_\mathcal{A} = \emptyset$ for all words $w \in \Sigma^*$.
Without affecting the language of an NFA, we can remove all dead states and remove all the transitions to them, i.e., replace $Q_\mathcal{A}$ with $Q_\mathcal{A} \setminus D$ and replace each $\delta_\mathcal{A}(q,a)$ with $\delta_\mathcal{A}(q,a) \setminus D$, where $q \in Q_\mathcal{A}$, $a \in \Sigma \cup \{\varepsilon\}$, and $D \subseteq Q_\mathcal{A}$ is the set of all dead states.

A special case of an NFA is a \emph{deterministic finite automaton} (\emph{DFA}), where for all $q \in Q_\mathcal{A}$ and $a \in \Sigma$ we have $|\delta_\mathcal{A}(q,a)|=1$ and there are no $\varepsilon$-transitions (i.e., $\delta_\mathcal{A}(q,\varepsilon)=\emptyset$).
In this case, we write $\delta_\mathcal{A}(q,a)=q'$ instead of $\delta_\mathcal{A}(q,a)=\{q'\}$.

\textbf{Automaton recognizing the Kleene star.}
We will use the well-known standard construction of an NFA recognizing the Kleene star of the language specified by a DFA (see, e.g., \cite{YuZhuangSalomaa1994}).
Let $\mathcal{A}=(Q_\mathcal{A},\Sigma,\delta_\mathcal{A},q_0,F_\mathcal{A})$ be a DFA.
Then $\mathcal{A}^*=(Q_{\mathcal{A}^*},\Sigma,\delta_{\mathcal{A}^*},q'_0,\{q'_0\})$ is the NFA obtained from $\mathcal{A}$ as follows.
The set of states $Q_{\mathcal{A}^*}$ is $Q_\mathcal{A} \cup \{q'_0\}$, where $q'_0$ is a fresh state.
The transition function $\delta_{\mathcal{A}^*}$ is defined as $\delta_\mathcal{A}(q,a)$ with some additional $\varepsilon$-transitions: we add a $\varepsilon$-transition from $q'_0$ to $q_0$ and from every final state in $F_\mathcal{A}$ to $q'_0$.
The language of the obtained NFA is $L^*$, where $L$ is the language of $\mathcal{A}$.

We further simplify the construction by removing all the dead states, thus our $\mathcal{A}^*$ do not contain them.
Also, if in $\mathcal{A}$ the initial state $q_0$ is not reachable from itself by any non-empty word (automata with this property are called \emph{non-returning} in the literature), then we can identify $q'_0$ with $q_0$, which is then the initial state and the unique final state.
We will use both simplifications in the constructions in this paper.

\section{Set rewriting system}\label{sec:set_rewriting}

We introduce \emph{set rewriting systems}, which are an auxiliary intermediate formalism for our further reductions.

\begin{definition}\label{def:set_rewriting_system}
A \emph{set rewriting system} is a pair $(P,R)$, where $P$ is a finite non-empty set of \emph{elements} and $R$ is a finite non-empty set of \emph{rules}.
A \emph{rule} is a function $r\colon P \to 2^P \cup \{\bot\}$.
\end{definition}

Given a set rewriting system and a subset $S \subseteq P$, a rule $r$ is \emph{legal} if $\bot \notin r(S)$ (i.e., there is no $s \in S$ such that $r(s)=\bot$).
The \emph{resulting subset} from applying a legal rule $r$ to $S$ is $S\cdot r = \bigcup_{s \in S} r(s)$.
Analogously, we inductively define that a sequence of rules $r_1,\ldots,r_k$ is \emph{legal} if $r_1,\ldots,r_{k-1}$ is legal for $S$ and $r_k$ is legal for $S \cdot r_1 \cdot \ldots \cdot r_{k-1}$.
The \emph{resulting subset} from applying a legal sequence of rules is $S \cdot r_1 \cdot \ldots \cdot r_k$.

Note that when a set rewriting system is given as the input for a problem, we can assume polynomial input size in terms of $|P|$ and $|R|$ (e.g., a straightforward encoding requires writing at most $|R|\cdot|P|^2$ elements from $P \cup \{\bot\}$).

\subsection{Immortality}

In general, immortality is a classical problem of whether there exists any configuration such that there is an infinite sequence of legally applied transitions to it.
This is in contrast to the usual setting, where the initial configuration is given and we ask about reachability.
In the case of systems with a bounded configuration space, this is equivalent to the existence of a cycle in the configuration space.
For instance, mortality (also called \emph{structural termination} in the literature) problems have been considered for Turing machines \cite{BlondelCassaigneNichitiu2002}, where the problem is undecidable, and for linearly bounded Turing machines with a counter \cite{BenAmram2015}, where the problem is PSPACE-complete.

Considering our setting, every set rewriting system contains a trivial cycle which is a loop on the empty set.
Therefore, in our mortality problem, we have to exclude the empty set as the cycle.
A set rewriting system $(P,R)$ is \emph{immortal} if there exists a non-empty subset $S \subseteq P$ and a non-empty sequence of rules $r_1,\ldots,r_k$ that is legal and yields $S$, i.e., $S\cdot r_1\cdot\ldots\cdot r_k = S$.
It is called \emph{mortal} otherwise.

Furthermore, we add the restriction that the empty set is not reachable from any non-empty subset; this will simplify reasoning in further reductions because otherwise reaching the empty set would need to be treated in a special way, as it does not count as a cycle but does not imply any restriction on any further rule applications.
A set rewriting system is \emph{non-emptiable} if for every non-empty subset $S \subseteq P$ and every rule $r \in R$, either $S\cdot r \neq \emptyset$ or $r$ is illegal for $S$.
This condition is equivalent to that for every element $p \in P$ and every rule $r \in R$, we have $r(p) \ne \emptyset$.
Hence, it is easy to check if a set rewriting system is non-emptiable.

\begin{problem}[Immortality of Set Rewriting]\label{pbm:immortality_set_rewriting}
Given a non-emptiable set rewriting system, is it immortal?
\end{problem}

First, we show that a mortal set rewriting system can admit exponentially long sequences of legal rules.

\begin{theorem}\label{thm:set_rewriting_maximal_legal_length}
For a mortal non-emptiable set rewriting system $(P,R)$, for every non-empty subset of $P$, the length of any legal sequence of rules is at most $2^{|P|}-2$.
For every $n \ge 1$, there exist a set rewriting system $(P,R)$ with $|P|=|R|=n$ and some subset of $P$ that meet the upper bound.
\end{theorem}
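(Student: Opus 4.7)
The plan is to establish the upper bound by a pigeonhole argument on distinct non-empty subsets, and to match it with an explicit binary-counter construction.

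For the upper bound, I would first observe that non-emptiability combined with legality preserves non-emptiness: if $S \neq \emptyset$ and $r$ is legal for $S$, then each $r(p)$ with $p \in S$ is a non-empty subset of $P$, so $S \cdot r = \bigcup_{p \in S} r(p) \neq \emptyset$. Consequently every configuration along a legal sequence that starts from a non-empty subset is non-empty. Mortality then forces the visited configurations to be pairwise distinct, since any repeat would produce a non-trivial cycle. As there are only $2^{|P|} - 1$ non-empty subsets, a legal sequence of length $k$ visits $k+1$ of them, yielding $k \le 2^{|P|} - 2$.

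For the lower bound, I would construct $(P, R)$ with $P = \{p_1, \ldots, p_n\}$ and $R = \{\rho_1, \ldots, \rho_n\}$, identifying each subset $S \subseteq P$ with the integer $K(S) = \sum_{p_j \in S} 2^{j-1}$. Define
\[
\rho_i(p_j) =
\begin{cases}
\{p_i\} & \text{if } j < i,\\
\bot & \text{if } j = i,\\
\{p_i, p_j\} & \text{if } j > i.
\end{cases}
\]
Rule $\rho_i$ is designed to simulate the binary-increment carry step that sets bit $i$ and clears lower bits. Non-emptiability is immediate. Starting from $S_0 = \{p_1\}$ and repeatedly applying $\rho_{i(K)}$, where $i(K)$ is the position of the lowest $0$-bit of the current counter value $K$, the sequence walks through $K = 1, 2, \ldots, 2^n - 1$ in exactly $2^n - 2$ steps.

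The main verification, which I see as the principal obstacle, is mortality. The rule format cannot natively enforce the universal condition ``bits $1, \ldots, i-1$ are all $1$'' needed for $\rho_i$ to match, so $\rho_i$ is also legal on non-matching subsets with $p_i \notin S$. The key observation is that $\rho_i$ still strictly increases the counter on any legal application: writing $S = S_{\mathrm{low}} \cup S_{\mathrm{high}}$ with $S_{\mathrm{low}} \subseteq \{p_1,\ldots,p_{i-1}\}$ and $S_{\mathrm{high}} \subseteq \{p_{i+1},\ldots,p_n\}$, a direct computation yields $S \cdot \rho_i = \{p_i\} \cup S_{\mathrm{high}}$, so
\[
K(S \cdot \rho_i) - K(S) = 2^{i-1} - K(S_{\mathrm{low}}) \ge 2^{i-1} - (2^{i-1} - 1) = 1.
\]
Hence along any legal rule application the counter strictly increases, ruling out cycles and establishing mortality.
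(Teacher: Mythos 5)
Your proposal is correct and follows essentially the same route as the paper: the same pigeonhole upper bound (non-emptiability keeps configurations non-empty, mortality forces them to be distinct among the $2^{|P|}-1$ non-empty subsets), and the same binary-counter construction, with the same strict-increase computation $K(S\cdot\rho_i)-K(S)=2^{i-1}-K(S_{\mathrm{low}})\ge 1$ giving mortality and the lowest-zero-bit walk from $\{p_1\}$ giving attainability of $2^n-2$.
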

\begin{proof}
The upper bound is clear since there are $2^{|P|}-1$ distinct non-empty subsets and a legal sequence of length $2^{|P|}-2$ involves all of them.

To show tightness, we construct a set rewriting system $(P,R)$ with $n=|P|$ rules.
The elements will encode a specific binary counter.
Let $P = \{b_0,\ldots,b_{n-1}\}$.
For a subset $S \subseteq P$, we define $\mathit{val}(S,i) = 2^i$ if $b_i \in S$ and $\mathit{val}(S,i)=0$ otherwise, and we set the \emph{counter value} $\mathit{val}(S) = \sum_{0 \le i \le n-1}\ \mathit{val}(S,i)$.
For every $j \in \{0,\ldots,n-1\}$, we introduce a rule $r_j$ that, if it is legal, will increase the value of the counter by at least $1$.
The rules $r_j$ are defined as follows:
\begin{itemize}
\item $r_j(b_j) = \bot$;
\item $r_j(b_i) = \{b_j\}$ for $i \in \{0,1,\ldots,j-1\}$;
\item $r_j(b_i) = \{b_j,b_i\}$ for $i \in \{j+1,j+2,\ldots,n-1\}$.
\end{itemize} 

First, we observe that each legal rule $r_j$ applied to a non-empty set $S \subseteq P$ increases the counter value by at least $1$, i.e., $\mathit{val}(S) < \mathit{val}(S\cdot r_j)$.
It is because we know that $\mathit{val}(S,j) = 0$, as otherwise the rule would be illegal, and
\begin{alignat*}{4}
\mathit{val}(S\cdot r_j) & = \sum_{j < i < n}\ \mathit{val}(S\cdot r_j,i) + 2 ^ {j} && = \sum_{j < i < n}\ \mathit{val}(S,i) + 2^{j}\ && > \\
& > \sum_{j < i < n}\ \mathit{val}(S,i) + \sum_{0 \leq i < j}\ 2^{i}\ && \geq \sum_{0 \leq i < n}\ \mathit{val}(S,i) && =\ \mathit{val}(S).
\end{alignat*}

Second, we observe that for every non-empty $S \subsetneq P$, there exists a rule $r_j$ that increases the counter value exactly by $1$.
We choose the rule $r_j$ for $j$ being the smallest index such that $b_j \notin S$, and we have $\mathit{val}(S\cdot r_j) = \mathit{val}(S)+1$.
Furthermore, for $S = P$ there is no legal rule.

It follows that the set rewriting system is mortal, and for $S = \{b_0\}$, the longest possible legal sequence of rules has length $2^n-2$.
\end{proof}

Now, we show the PSPACE-completeness of the immortality problem.
The idea is a reduction from the non-universality of an NFA.
The NFA is combined with the counter developed for the proof of Theorem~\ref{thm:set_rewriting_maximal_legal_length}.
The NFA is encoded within the set rewriting system together with a counter incrementing its value with each transition.
The counter can be reset only if there exists a non-accepted word by the NFA, in which case it allows repeating a subset in the set rewriting system.

\begin{theorem}\label{thm:immortality_set_rewriting}
Problem~\ref{pbm:immortality_set_rewriting} (Immortality of Set Rewriting) is PSPACE-complete.
\end{theorem}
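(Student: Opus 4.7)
The plan is to prove membership in PSPACE by a standard subset-simulation argument, and PSPACE-hardness by reducing from the non-universality problem for nondeterministic finite automata (NFAs), which is classically PSPACE-complete. The reduction uses the counter from Theorem~\ref{thm:set_rewriting_maximal_legal_length} as a time bound and introduces a reset rule that is legal only when the current NFA state-subset avoids all accepting states, so that an immortal cycle is forced to contain a reset and hence to witness a non-accepted word.

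For the upper bound, a subset $S \subseteq P$ is encoded in $|P|$ bits. I would run a nondeterministic algorithm that guesses an initial non-empty $S_0$, then repeatedly guesses a rule $r \in R$, verifies legality, and updates $S \gets S \cdot r$, accepting upon the first return to $S_0$ after a positive number of steps. Since any cycle in the configuration space has length at most $2^{|P|}$, a $|P|$-bit step counter suffices; the procedure runs in NPSPACE and hence in PSPACE by Savitch's theorem.

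For the lower bound, given an NFA $A = (Q, \Sigma, \delta, q_0, F)$, which I may assume is complete (so that $\delta(q,a) \neq \emptyset$ for all $q, a$) after adding a non-accepting sink, I would construct a non-emptiable set rewriting system $(P, R)$ with $P = Q \cup \{b_0, \ldots, b_{n-1}\}$ and $n = |Q|+1$, so that $2^n-2$ exceeds the length of any shortest non-accepted word. The rules are: for each $a \in \Sigma$ and each $j \in \{0, \ldots, n-1\}$, a combined rule $r_{a,j}$ that acts on $Q$ by $q \mapsto \delta(q,a)$ and on the counter bits by the rule $r_j$ from Theorem~\ref{thm:set_rewriting_maximal_legal_length}; plus a single reset rule $r_{\mathrm{reset}}$ defined by $r_{\mathrm{reset}}(b_i) = \{b_0\}$, $r_{\mathrm{reset}}(q) = \{q_0\}$ for $q \in Q \setminus F$, and $r_{\mathrm{reset}}(q) = \bot$ for $q \in F$. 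Every image is either $\bot$ or non-empty, so the system is non-emptiable.

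If $L(A) \neq \Sigma^*$, I pick a shortest $w = a_1 \cdots a_k \notin L(A)$, for which $k \leq 2^{|Q|}-1 \leq 2^n-2$, and from $S = \{b_0, q_0\}$ I apply $r_{a_t, j_t}$ at each step $t$ with $j_t$ the smallest index such that $b_{j_t}$ is absent from the current counter; by the argument in Theorem~\ref{thm:set_rewriting_maximal_legal_length}, this sequence is legal and leaves the NFA part at $\delta(q_0, w) \subseteq Q \setminus F$, after which $r_{\mathrm{reset}}$ is legal and returns us to $S$, closing a non-trivial cycle. Conversely, any infinite legal sequence must use $r_{\mathrm{reset}}$ infinitely often (otherwise the counter value strictly increases and bounds the sequence by $2^n-2$ applications), and between consecutive resets the NFA part evolves from $\{q_0\}$ (the image of $r_{\mathrm{reset}}$) to $\delta(q_0, w)$ for some word $w$; legality of the next reset forces $\delta(q_0, w) \cap F = \emptyset$, i.e., $w \notin L(A)$, so $A$ is not universal. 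The trickiest step will be ensuring that the NFA and counter components of $P$ stay fully decoupled under every rule, and sizing $n$ so that a shortest non-accepted word, whose length can be exponential in $|Q|$, still fits within the counter budget.
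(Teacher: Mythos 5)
Your membership argument and the high-level plan (reduce from NFA non-universality, couple the automaton with the counter of Theorem~\ref{thm:set_rewriting_maximal_legal_length}, and use a reset rule whose legality certifies a rejected word) match the paper, and your trick of taking $n=|Q|+1$ counter bits is a legitimate and even simpler way to sidestep the length issue than the paper's observation that a shortest non-accepted word never revisits $q_0$. However, the reduction as you have specified it is unsound, for two concrete reasons. First, because you keep the NFA part and the counter part fully decoupled ($r_{a,j}(q)=\delta(q,a)$ with no counter bit added), a subset $S\subseteq Q$ containing no counter elements stays inside $Q$ under every rule $r_{a,j}$, all of which are legal for it, and the counter value remains $0$ forever. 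In particular, after you complete the NFA with a non-accepting sink $s$, the singleton $\{s\}$ satisfies $\{s\}\cdot r_{a,j}=\{s\}$, a legal non-trivial cycle; more generally the deterministic subset dynamics of any complete NFA is eventually periodic, so your system is immortal for every input, independently of universality. Your "counter strictly increases between resets" claim only holds for subsets whose counter part is non-empty, and immortality quantifies existentially over all non-empty starting subsets, so you cannot restrict attention to those. The paper avoids exactly this by setting $r_{a,j}(q)=\delta(q,a)\cup\{b_j\}$, which simultaneously forces the counter to tick whenever any NFA state is present and guarantees non-emptiability without completing the NFA (so the decoupling you flagged as the "trickiest step" is in fact what must be abandoned).

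Second, your reset rule is also broken on its own: with $r_{\mathrm{reset}}(b_i)=\{b_0\}$, the set $\{b_0\}$ contains no final states, so $r_{\mathrm{reset}}$ is legal for it and $\{b_0\}\cdot r_{\mathrm{reset}}=\{b_0\}$, an unconditional immortal cycle. Moreover, in your converse direction the set reached right after a reset need not contain $q_0$ (if the pre-reset set had empty NFA part), so the word read between two resets is not necessarily traced from $q_0$ and its rejection is not certified. The paper's reset maps every non-final element of $P$, including every $b_i$, to $\{q_0,b_0\}$; then any reset applied to a non-empty set produces exactly $\{q_0,b_0\}$, the segment between two resets simulates a run of the NFA from $q_0$ with the counter ticking at every step, and legality of the second reset yields $\delta(q_0,w)\cap F=\emptyset$. (The resulting one-step cycle on $\{q_0,b_0\}$ when $q_0\notin F$ is harmless, since then the NFA already rejects the empty word.) With these two repairs your argument becomes essentially the paper's proof; as written, both directions of the equivalence fail.
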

\begin{proof}
To solve the problem in NPSPACE thus in PSPACE, given a set rewriting system $(P,R)$, it is enough to guess a subset $S \subseteq P$ and a length $k \le 2^{|P|}$, and then to guess at most $k$ rules (storing only the current one), verifying whether the resulted subset is the same as $S$.

For PSPACE-hardness, we reduce from the non-universality problem for an NFA.
Given an NFA $\mathcal{A} = (Q_\mathcal{A},\Sigma,\delta_\mathcal{A},q_0,F_\mathcal{A})$, the question whether there exists any not accepted word over $\Sigma$ is PSPACE-complete (e.g., \cite[Section~10.6]{Aho&Hopcroft&Ullman:1974}).
We can assume that $\mathcal{A}$ does not have $\varepsilon$-transitions.

Let $n = |Q_\mathcal{A}|$.
We construct a set rewriting system $(P,R)$ of size polynomial in $n$.
As an ingredient, we use the counter from the proof of Theorem~\ref{thm:set_rewriting_maximal_legal_length}.
Let $P$ be the disjoint union of $Q_\mathcal{A}$ and $C = \{b_i \mid i \in \{0,1,\ldots,n-1\}\}$.
The elements of $C$ will encode the binary counter and for a subset $S \subseteq P$, we define $\mathrm{val}(S,i) = 2^i$ if $b_i \in S$ and $\mathrm{val}(S,i)=0$ otherwise, and we set $\mathrm{val}(S) = \sum_{0 \le i \le n-1}\ \mathrm{val}(S,i)$.

For every letter $a \in \Sigma$ and every $j \in \{0,1,\ldots,n-1\}$, we introduce a rule $r_{a,j}$ that acts as $a$ in the NFA on $Q_\mathcal{A}$ and, on the counter part, sets the $j$-th position of the counter.
The rules $r_{a,j}$ are defined as follows:
\begin{itemize}
\item $r_{a,j}(b_j) = \bot$;
\item $r_{a,j}(b_i) = \{b_j\}$ for $i \in \{0,1,2,\ldots,j-1\}$;
\item $r_{a,j}(b_i) = \{b_j,b_i\}$ for $i \in \{j+1,j+2,\ldots,n-1\}$;
\item $r_{a,j}(q) = \delta_\mathcal{A}(q,a) \cup \{b_j\}$ for $q \in Q_\mathcal{A}$.
\end{itemize} 
We also introduce the \emph{reset rule} that is defined as:
\begin{itemize}
\item $r_\mathrm{reset}(q) = 
\begin{cases}
    \bot,& \text{if}\ q \in F_\mathcal{A}; \\
    \{q_0,b_0\} & \text{otherwise}.
\end{cases}$
\end{itemize}

Now we observe the correctness.
Assume that there is a word not accepted by $\mathcal{A}$.
Note that if $w$ is a shortest non-accepted word, then $q_0 \notin \delta_\mathcal{A}(q_0,u)$ for all non-empty prefixes $u$ of $w$.
Hence, there exists a non-accepted word $w = a_1 a_2 \ldots a_m$ of length at most $2^{n-1}$.
As observed in the proof of Theorem~\ref{thm:set_rewriting_maximal_legal_length}, we know that for each value $x$ of the counter, there exists a rule that increments the counter value exactly by $1$.
Let $f(x)$ be the smallest index of a zero in the binary representation of $x$, where the least significant position of the binary representation is indexed by zero; hence a rule $r_{a_i,f(x)}$, if it is legal for $S$, increments the counter value of $S$ by $1$.
Then the set $S = \{b_0,q_0\} \cdot r_{a_1,f(1)} \cdot r_{a_2,f(2)} \cdot \ldots \cdot r_{a_m,f(m)}$ has the property that $\mathrm{val}(S) = m < 2^n$ and $S \cap F = \emptyset$, because $w$ is not accepted by $\mathcal{A}$.
Thus, rule $r_\mathrm{reset}$ is legal, so $\{b_0,q_0\} \cdot r_{a_1,f(1)} \cdot r_{a_2,f(2)} \cdot \ldots \cdot r_{a_m,f(m)} \cdot r_\mathrm{reset} = \{b_0,q_0\}$.
Hence the set rewriting system is immortal.

For the converse, assume that there exists a subset $S \subseteq P$ and a non-empty sequence of rules $r_{j_1}, r_{j_2}, \ldots, r_{j_m}$ such that $S \cdot r_{j_1} \cdot r_{j_2} \cdot \ldots \cdot r_{j_m} = S$.
As observed in the proof of Theorem~\ref{thm:set_rewriting_maximal_legal_length}, we know that every rule different from $r_\mathrm{reset}$ increments the counter value by at least $1$.
Hence, there must be some index $1 \le k \le m$ such that $r_{j_k} = r_\mathrm{reset}$.
Consider the sequence of rules $r_{j_1}, r_{j_2}, \ldots, r_{j_m}, r_{j_1}, r_{j_2} \ldots, r_{j_m}$.
In this sequence, $r_\mathrm{reset}$ appears at least twice.
Taking a shortest sequence of rules between any two $r_\mathrm{reset}$ rules (not including the reset rules), we get a sequence $r_{a_1,i_1}, r_{a_2,i_2}, \ldots, r_{a_d,i_d}$ such that $\{q_0, b_0\} \cdot r_{a_1,i_1} \cdot r_{a_2,i_2} \cdot \ldots \cdot r_{a_d,i_d} r_\mathrm{reset} = \{q_0, b_0\}$.
Since $r_\mathrm{reset}$ is legal when applied, the word $a_1 a_2 \ldots a_d$ is such that $\delta_\mathcal{A}(q_0,a_1 a_2 \ldots a_d) \cap F = \emptyset$ thus is not accepted by $\mathcal{A}$.
\end{proof}

\begin{lemma}\label{lem:set_rewriting_singleton}
If a rule $r$ is legal for a subset $S \subseteq P$, then it is also legal for every subset $S' \subseteq S$ and $S'\cdot r \subseteq S\cdot r$.
\end{lemma}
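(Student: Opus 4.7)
The plan is to unpack the two definitions involved (legality and the resulting subset) and observe that both parts of the lemma reduce to the monotonicity of a union indexed over a subset. There is no real obstacle here; the statement is essentially a definitional sanity check that will be used as a lemma when transferring immortality from arbitrary non-empty starting subsets down to singletons.

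For the first claim, I would recall that, by definition, $r$ is legal for $S$ iff $r(s)\neq\bot$ for every $s\in S$. Since $S'\subseteq S$, every $s\in S'$ already belongs to $S$, so $r(s)\neq\bot$ for every $s\in S'$, and hence $r$ is legal for $S'$. For the second claim, I would just expand
\[
S'\cdot r \;=\; \bigcup_{s\in S'} r(s) \;\subseteq\; \bigcup_{s\in S} r(s) \;=\; S\cdot r,
\]
since extending the index set of a union can only enlarge it. Together these two facts give exactly what the lemma asks for.

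If I wanted to emphasize the intended use, I could add a one-line remark that by iterating the lemma along a legal sequence $r_1,\ldots,r_k$, any legal sequence for a non-empty $S$ is also legal for any non-empty $S'\subseteq S$, with $S'\cdot r_1\cdots r_k \subseteq S\cdot r_1\cdots r_k$; in particular one can always restrict to a singleton $S'=\{s\}$ for some $s\in S$. This is the form in which the lemma will actually be applied after the statement.
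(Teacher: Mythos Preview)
Your proof is correct and is exactly the natural argument; the paper in fact states this lemma without proof, treating it as an immediate observation from the definitions. Your added remark about iterating along a legal sequence to restrict to singletons also matches precisely how the paper uses the lemma immediately afterward.
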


By this observation, when showing if the system is immortal, it is enough to consider only singleton subsets $S$ from which we start applying rules to find a cycle.
Although a singleton does not necessarily occur in a cycle, a non-emptiable set rewriting system is immortal if and only if, for some singleton, there exist arbitrary long legal sequences of rules.

\subsection{Emptying}\label{subsec:set_rewriting_emptying}

The second problem is the reachability of the empty set. 
This is related to factor universality and is necessary for our further reduction.

For a subset $S \subseteq P$, a sequence of rules $r_1,\ldots,r_k$ such that $S\cdot r_1\cdot \ldots \cdot r_k = \emptyset$ is called \emph{$S$-emptying}.

We call a set rewriting system \emph{permissive} if all rules are always legal.
In other words, all rules are legal for $P$.
A permissive set rewriting system $(P,R)$ is equivalent to the \emph{semi-NFA} whose set of states is $P$ and the alphabet is $R$; the NFA initial and final states are irrelevant.

\begin{problem}[Emptying Set Rewriting]\label{pbm:emptying_set_rewriting}
For a given permissive set rewriting system $(P,R)$, does there exist a $P$-emptying sequence of rules?
\end{problem}

Let $\mathcal{A} = (Q_\mathcal{A},\Sigma,\delta_\mathcal{A},q_0,F_\mathcal{A})$ be an NFA.
Analogously to set rewriting, for a subset $S \subseteq Q_\mathcal{A}$, a word $w \in \Sigma^*$ is called \emph{$S$-emptying} if $\delta_\mathcal{A}(S,w)=\emptyset$.

The following criterion for the factor universality of a language represented by an NFA is known.

\begin{proposition}[{\rm \cite{RampersadShallitXu2012}}]\label{pro:fu_criterion}
Let $\mathcal{A}=(Q_\mathcal{A},\Sigma,\delta_\mathcal{A},q_0,F_\mathcal{A})$ be an NFA such that every state is reachable from the initial state $q_0$ and there are no dead states.
Then a word $w$ is not a factor of a word accepted by $\mathcal{A}$ if and only if $w$ is $Q_\mathcal{A}$-emptying.
The language of $\mathcal{A}$ is factor universal if and only if there does not exist a $Q_\mathcal{A}$-emptying word.
\end{proposition}

It is known that the problem of whether a given language specified by an NFA is factor universal is PSPACE-complete \cite{RampersadShallitXu2012}.
Since it is also easy to solve Problem~\ref{pbm:emptying_set_rewriting} in PSPACE, we have:

\begin{proposition}\label{pro:emptying_PSPACE}
Problem~\ref{pbm:emptying_set_rewriting} (Emptying Set Rewriting) is PSPACE-complete.
\end{proposition}

Additionally, we will need an exponential lower bound on the length of the shortest $P$-emptying sequences of rules.
For this, we also develop a specific counter, but now counting downwards and allowing to decrease the value by at most $1$.
Instead of rules being illegal, undesired rule applications reset the counter to the maximal value.

\begin{theorem}\label{thm:long_emptying_set_rewriting}
For a permissive set rewriting system $(P,R)$, if there exists a $P$-emptying sequence of rules, then the shortest such sequences have length at most $2^{|P|}-1$.
For every $n \ge 1$, there exists a set rewriting system $(P,R)$ with $|P|=|R|=n$ that meets the bound.
\end{theorem}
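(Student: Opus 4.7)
For the upper bound, I plan a short pigeonhole argument. Take a shortest $P$-emptying sequence $r_1,\ldots,r_k$ and consider the reached subsets $P = S_0, S_1, \ldots, S_k = \emptyset$. If $S_i = S_j$ for some $i < j$, deleting the rules $r_{i+1},\ldots,r_j$ yields a strictly shorter $P$-emptying sequence, contradicting minimality. Hence $S_0,\ldots,S_k$ are $k+1$ pairwise distinct subsets of $P$, so $k+1 \le 2^{|P|}$, giving $k \le 2^{|P|}-1$.

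For the lower bound, I plan to mirror the binary counter from the proof of Theorem~\ref{thm:set_rewriting_maximal_legal_length}, but reversed: the counter should start at value $2^n-1$ (the value of $P$) and decrement to $0$ (the value of $\emptyset$), with each rule lowering the counter by at most $1$. Concretely, let $P=\{b_0,\ldots,b_{n-1}\}$ and $\mathrm{val}(S)=\sum_{b_i\in S} 2^i$. For each $j\in\{0,\ldots,n-1\}$ define a rule $r_j$ that performs the binary decrement when $b_j$ is the lowest set bit, and otherwise resets the state to $P$:
\[
  r_j(b_j) = \{b_0,\ldots,b_{j-1}\},\qquad r_j(b_i) = \{b_i\}\ \text{for}\ i>j,\qquad r_j(b_i) = P\ \text{for}\ i<j.
\]
The ``reset'' branch $r_j(b_i)=P$ for $i<j$ plays the role that $\bot$ played in Theorem~\ref{thm:set_rewriting_maximal_legal_length}: since forbidding a rule is not allowed in a permissive system, any premature application (one where some bit below $j$ is still set) must instead push the counter back up to its maximum value $2^n-1$.

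The core claim to verify is that $\mathrm{val}(S\cdot r_j)\ge \mathrm{val}(S)-1$ for every $S$ and $j$, which I would prove by a three-case split: if some $b_i$ with $i<j$ lies in $S$, then $S\cdot r_j = P$ and $\mathrm{val}(S\cdot r_j) = 2^n-1 \ge \mathrm{val}(S)-1$; if no such lower bit is set and $b_j\in S$, then the rule performs a genuine binary decrement, giving $\mathrm{val}(S\cdot r_j)=\mathrm{val}(S)-1$; and if no lower bit is set and $b_j\notin S$, then $r_j$ acts as the identity on $S$. Consequently any $P$-emptying sequence requires at least $2^n-1$ applications, and applying, at each step, the rule $r_j$ indexed by the position of the lowest set bit produces an emptying sequence of exactly that length. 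The main obstacle is precisely this permissiveness constraint: choosing the right ``overshoot'' for unintended applications so that the potential drops by at most $1$ per step while an ideal decrement run still reaches $\emptyset$ in exactly $2^n-1$ steps.
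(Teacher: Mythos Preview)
Your proposal is correct and essentially identical to the paper's proof: the same pigeonhole argument for the upper bound, and the same reversed binary counter construction for the lower bound, with the same three rule clauses $r_j(b_j)=\{b_0,\ldots,b_{j-1}\}$, $r_j(b_i)=\{b_i\}$ for $i>j$, and $r_j(b_i)=P$ for $i<j$. Your three-case analysis of $\mathrm{val}(S\cdot r_j)\ge \mathrm{val}(S)-1$ is exactly the paper's argument, just organized around the rule index rather than around the position of the lowest set bit of $S$.
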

\begin{proof}
The upper bound $2^{|P|}-1$ is trivial.

For every $n \ge 1$, we construct a permissive set rewriting system $(P,R)$, which represents a binary counter of length $n$.
Let $P = \{b_i \mid i \in \{0,1,\ldots,n-1\}\}$.
For a subset $S \subseteq P$, we define $\mathrm{val}(S,i) = 2^i$ if $b_i \in S$ and $\mathrm{val}(S,i) = 0$ otherwise, and $\mathrm{val}(S) = \sum_{0 \le i \le n-1}\ \mathrm{val}(S,i)$.

We define the rules that allow decreasing the value of the counter by $1$.
If a wrong rule is used, the counter is reset to its maximal value.
The set of rules $R$ consists of rules $r_j$ for $j \in \{0,1,\ldots,n-1\}$, where each $r_j$ is defined as follows:
\begin{enumerate}
\item $r_j(b_i) = P$ for $i \in \{0,1,\ldots,j-1\}$;
\item $r_j(b_j) = \{b_i \mid i \in \{0,1,\ldots,j-1\}\}$;
\item $r_j(b_i) = \{b_i\}$ for $i \in \{j+1,j+2,\ldots,n-1\}$.
\end{enumerate}

We observe that emptying this set rewriting system corresponds to setting the counter value to $0$.
For a subset $S$, let $i$ be the smallest index such that $b_i \in S$.
Then for all the smaller positions $j < i$, $b_j \notin S$.
Notice that for all rules $r_k$ for $k \in \{1,2,\ldots,n-1\} \setminus \{i\}$, we have $\mathrm{val}(S\cdot r_k) \geq \mathrm{val}(S)$.
This is because if $k < i$, then $S \cdot r_k = S$, and if $k > i$, then $S \cdot r_k = P$.
Hence, the only rule that decreases the counter is $r_i$, and then $\mathrm{val}(S\cdot r_i) = \mathrm{val}(S) - 1$.
Thus, the shortest sequence of rules that is $P$-emptying has length $2^n-1$.
\end{proof}

\section{The Frobenius monoid problem}\label{sec:fpfl}

We note the known result about the PSPACE-membership.

\begin{proposition}[{\rm\cite[Corollary~5.5.8]{Xu2009FPFM}}]
Problem~\ref{pbm:fpfl} is in PSPACE.
\end{proposition}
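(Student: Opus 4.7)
The plan is to reduce cofiniteness of $L^*$ to a reachability question on an exponentially large subset DFA and then solve that question in polynomial space by on-the-fly guessing. First, I would build a standard NFA $\mathcal{N}$ for $L^*$ with $n = O(||L||_\mathrm{sum})$ states, arranged so that a single state $q_0$ serves as both initial and the unique accepting state (lay down each word of $L$ as a disjoint path and connect its endpoint back to $q_0$). With this NFA, a word $w$ lies in $L^*$ if and only if $q_0 \in \delta_\mathcal{N}(q_0, w)$, so membership can be read off from the current subset of active NFA states.

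The key observation is that $L^*$ fails to be cofinite if and only if there exists a word $w \notin L^*$ of length strictly greater than $2^n$. One direction is obvious, since a non-cofinite $L^*$ has an infinite complement. For the converse, while reading such a long $w$ letter by letter, the subset of active NFA states must repeat at two positions $i < j$; pumping the loop between $i$ and $j$ preserves the final subset, which does not contain $q_0$, and thus produces arbitrarily many witnesses outside $L^*$. Consequently, $L^*$ is not cofinite exactly when there is some $w \notin L^*$ whose length lies in the window $(2^n, 2 \cdot 2^n]$.

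This characterization yields an NPSPACE algorithm for non-cofiniteness: guess the letters of $w$ one at a time while maintaining (i) the current subset $S \subseteq Q_\mathcal{N}$ as a bit vector and (ii) a binary length counter bounded by $2^{n+1}$, both of which fit in $O(n)$ space; accept at the first step inside the length window at which $q_0 \notin S$. By Savitch's theorem, $\mathrm{NPSPACE} = \mathrm{PSPACE} = \mathrm{coNPSPACE}$, so cofiniteness of $L^*$ itself is in PSPACE. I expect the only mildly subtle point to be the pumping argument that justifies the exponential length bound; the rest is a routine on-the-fly simulation of the subset construction.
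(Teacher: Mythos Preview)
Your proposal is correct and follows essentially the same route as the paper's proof: build an NFA for $L^*$, use the fact that non-cofiniteness is witnessed by a rejected word of length exceeding an exponential bound (you supply the pumping argument explicitly, while the paper just cites it), and then test for such a witness in NPSPACE by on-the-fly subset simulation with a polynomial-length counter. The only difference is presentational---your version is more self-contained, whereas the paper defers the length bound and the NPSPACE verification to the cited references.
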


For PSPACE-hardness, we reduce from Problem~\ref{pbm:immortality_set_rewriting} (Immortality of Set Rewriting) to Problem~\ref{pbm:fpfl} (Frobenius Monoid Problem for a Finite Set of Words).
In the first step, we reduce to the case where $L$ is specified as a DFA instead of a list of words.
Then we binarize the DFA, and finally, we count the number of words in the language to bound the size of the list of words.

\subsection{The DFA construction}\label{subsec:fpfl-dfa}

As the input for the reduction, we take a non-emptiable set rewriting system $(P,R)$.
Without loss of generality, we assume the set of elements $P = \{p_1,p_2,\ldots,p_\ell\}$ and the rules $R = \{r_1,r_2,\ldots,r_m\}$.

We construct a DFA $\mathcal{A} = (Q_\mathcal{A},\Sigma,\delta_\mathcal{A},q_0,F)$ recognizing a finite language $L$ such that $L^*$ is not cofinite if and only if $(P,R)$ is immortal.
The number and the lengths of words in $L$ will be polynomial, which will allow further polynomial reduction to the case of a list of words.
First, we define the DFA; then we describe its mechanism, and finally, we prove the correctness formally.

\begin{figure}[htb]\begin{center}
\resizebox{12cm}{!}{\large
\tikzset{every state/.style={minimum size=1.35cm}}
\tikzset{every path/.style={-{Latex[width=1mm,length=1.75mm]}}}

\tikzstyle{accepting}=[path picture={
\draw let 
  \p1 = (path picture bounding box.east),
  \p2 = (path picture bounding box.center)
  in
    (\p2) circle (\x1 - \x2 - 2pt);
 }]
\begin{tikzpicture}
\node[state, inner sep=0pt] (main_0) {$q_0$};
\node[draw=none] (tmp_draw_arrow) [left=0.4cm of main_0] {};
\path[] (tmp_draw_arrow) edge (main_0);
\node[state, inner sep=0pt] (main_1) [right= of main_0]  {$p_1$};
\node[state, inner sep=0pt] (main_2) [right= of main_1]  {$p_2$};
\node[state, inner sep=0pt, draw=none] (main_3) [right= of main_2]  {$\cdots$};
\node[state, inner sep=0pt] (main_4) [right= of main_3]  {$p_\ell$};
\path[]
(main_0) edge node [above] {$\alpha$} (main_1)
(main_1) edge node [above] {$\alpha$} (main_2)
(main_2) edge node [above] {$\alpha$} (main_3)
(main_3) edge node [above] {$\alpha$} (main_4)
;
\foreach \i in {1,2}
{
\node[state, inner sep=0pt,fill=white] (s^{\i,4}_l) [below right=1.05 and -0.6cm of main_\i] {$s^{\i,m}_{\ell}$};
\node[state, inner sep=0pt,draw=none] (s^{\i,4}_l-1) [below= of s^{\i,4}_l] {$\vdots$};
\node[state, inner sep=0pt,fill=white] (s^{\i,4}_1) [below= of s^{\i,4}_l-1] {$s^{\i,m}_1$};
\path[]
(main_\i) edge node [ below left=-0.1cm and 0.23cm] {\scalebox{0.75}{$r_m$}} (s^{\i,4}_l)
(s^{\i,4}_l) edge node [below left=0.15cm] {} (s^{\i,4}_l-1)
;
\node[state, inner sep=0pt, fill=white] (s^{\i,3}_l) [below right=1.3 and -0.85cm of main_\i] {\scalebox{1.0}{$\Ddots$}};
\node[state, inner sep=0pt,draw=none] (s^{\i,3}_l-1) [below= of s^{\i,3}_l] {$\vdots$};
\node[state, inner sep=0pt, draw=none] (s^{\i,3}_1) [below= of s^{\i,3}_l-1] {};
\path[]
(main_\i) edge node [ below left=-0.10cm and 0.4cm] {\scalebox{0.6}{$\Ddots$}} (s^{\i,3}_l)
(s^{\i,3}_l) edge node [below left=0.15cm] {} (s^{\i,3}_l-1)
;
\node[state, inner sep=0pt,fill=white] (s^{\i,2}_l) [below right=1.55 and -1.10cm of main_\i] {$s^{\i,2}_{\ell}$};
\node[state, inner sep=0pt,draw=none] (s^{\i,2}_l-1) [below= of s^{\i,2}_l] {$\vdots$};
\node[state, inner sep=0pt, draw=none] (s^{\i,2}_1) [below= of s^{\i,2}_l-1] {};
\path[]
(main_\i) edge node [ below left=0cm and 0.4cm] {\scalebox{0.75}{$r_2$}} (s^{\i,2}_l)
(s^{\i,2}_l) edge node [below left=0.15cm] {} (s^{\i,2}_l-1)
;
\node[state, inner sep=0pt,fill=white] (s^{\i,1}_l) [below right=1.80 and -1.35cm of main_\i] {$s^{\i,1}_{\ell}$};
\node[state, inner sep=0pt,draw=none] (s^{\i,1}_l-1) [below= of s^{\i,1}_l] {$\vdots$};
\node[state, inner sep=0pt,draw=none] (s^{\i,1}_1) [below= of s^{\i,1}_l-1] {};
\path[]
(main_\i) edge node [below left=0cm and 0.4cm] {\scalebox{0.75}{$r_1$}} (s^{\i,1}_l)
(s^{\i,1}_l) edge node [ left=0.15cm] {$\alpha$} (s^{\i,1}_l-1)
;
}
\foreach \i in {4}
{
\node[state, inner sep=0pt,fill=white] (s^{\i,4}_l) [below right=1.05 and -0.6cm of main_\i] {$s^{\ell,m}_{\ell}$};
\node[state, inner sep=0pt,draw=none] (s^{\i,4}_l-1) [below= of s^{\i,4}_l] {$\vdots$};
\node[state, inner sep=0pt,fill=white] (s^{\i,4}_1) [below= of s^{\i,4}_l-1] {$s^{\ell,m}_1$};
\path[]
(main_\i) edge node [ below left=-0.1cm and 0.23cm] {\scalebox{0.75}{$r_m$}} (s^{\i,4}_l)
(s^{\i,4}_l) edge node [below left=0.15cm] {} (s^{\i,4}_l-1)
;
\node[state, inner sep=0pt,fill=white] (s^{\i,3}_l) [below right=1.3 and -0.85cm of main_\i] {};
\node[state, inner sep=0pt,draw=none] (s^{\i,3}_l-1) [below= of s^{\i,3}_l] {$\vdots$};
\node[state, inner sep=0pt, draw=none] (s^{\i,3}_1) [below= of s^{\i,3}_l-1] {};
\path[]
(main_\i) edge node [ below left=-0.10cm and 0.4cm] {\scalebox{0.6}{$\Ddots$}} (s^{\i,3}_l)
(s^{\i,3}_l) edge node [below left=0.15cm] {} (s^{\i,3}_l-1)
;
\node[state, inner sep=0pt,fill=white] (s^{\i,2}_l) [below right=1.55 and -1.10cm of main_\i] {$s^{\ell,2}_{\ell}$};
\node[state, inner sep=0pt,draw=none] (s^{\i,2}_l-1) [below= of s^{\i,2}_l] {$\vdots$};
\node[state, inner sep=0pt, draw=none] (s^{\i,2}_1) [below= of s^{\i,2}_l-1] {};
\path[]
(main_\i) edge node [ below left=0cm and 0.4cm] {\scalebox{0.75}{$r_2$}} (s^{\i,2}_l)
(s^{\i,2}_l) edge node [below left=0.15cm] {} (s^{\i,2}_l-1)
;
\node[state, inner sep=0pt,fill=white] (s^{\i,1}_l) [below right=1.80 and -1.35cm of main_\i] {$s^{\ell,1}_{\ell}$};
\node[state, inner sep=0pt,draw=none] (s^{\i,1}_l-1) [below= of s^{\i,1}_l] {$\vdots$};
\node[state, inner sep=0pt,draw=none] (s^{\i,1}_1) [below= of s^{\i,1}_l-1] {};
\path[]
(main_\i) edge node [below left=0cm and 0.4cm] {\scalebox{0.75}{$r_1$}} (s^{\i,1}_l)
(s^{\i,1}_l) edge node [ left=0.15cm] {$\alpha$} (s^{\i,1}_l-1)
;
}
\node[state, inner sep=0pt,draw=none] (s^{3,4}_l) [below right=1.05 and -0.6cm of main_3] {$\cdots$};
\node[state, inner sep=0pt,draw=none] (s^{3,4}_l-1) [below= of s^{3,4}_l] {$\ddots$};
\node[state, inner sep=0pt,draw=none] (s^{3,4}_1) [below= of s^{3,4}_l-1] {$\cdots$};
\node[state, inner sep=0pt,draw=none] (s^{3,3}_l) [below right=1.3 and -0.85cm of main_3] {$\cdots$};
\node[state, inner sep=0pt,draw=none] (s^{3,3}_l-1) [below= of s^{3,3}_l] {$\ddots$};
\node[state, inner sep=0pt,draw=none] (s^{3,3}_1) [below= of s^{3,3}_l-1] {$\cdots$};
\node[state, inner sep=0pt,draw=none] (s^{3,2}_l) [below right=1.55 and -1.10cm of main_3] {$\cdots$};
\node[state, inner sep=0pt,draw=none] (s^{3,2}_l-1) [below= of s^{3,2}_l] {$\ddots$};
\node[state, inner sep=0pt,draw=none] (s^{3,2}_1) [below= of s^{3,2}_l-1] {$\cdots$};
\node[state, inner sep=0pt,draw=none] (s^{3,0}_l) [below right=1.80 and -1.35cm of main_3] {$\cdots$};
\node[state, inner sep=0pt,draw=none] (s^{3,0}_l-1) [below= of s^{3,0}_l] {$\ddots$};
\node[state, inner sep=0pt,draw=none] (s^{3,0}_1) [below= of s^{3,0}_l-1] {$\cdots$};
\node[state, inner sep=0pt] (g) [below right=1.5cm and 0.5cm of s^{2,1}_1] {$q_\mathrm{g}$};
\node[state, inner sep=0pt] (q_s) [right =2.3cm of g] {$q_\mathrm{s}$};
\path[]
(g) edge node [above] {$R$} (q_s)
;
\foreach \i in {1,2}
{
\foreach \j in {1,3,4}
{
\path[]
(s^{\i,\j}_1) edge node [below] {} (g)
;
}
}
\path[]
(s^{1,2}_1) edge node [below left=0.08cm] {$\alpha$} (g)
;
\path[]
(s^{2,2}_1) edge node [right=0.2925cm] {$\alpha$} (g)
;
\path[]
(s^{4,4}_1)+(-2.75mm,0) edge node [below] {} ($(g)+(0.49cm,0.49cm)$)
;
\path[]
(s^{4,3}_1)+(-2mm,0) edge node [below] {} ($(g)+(0.528cm,0.447cm)$)
;
\path[]
(s^{4,2}_1)+(-1mm,0) edge node [below] {} ($(g)+(0.565cm,0.400cm)$)
;
\path[]
(s^{4,1}_1)+(0.25mm,0) edge node [below=0.08cm] {$\alpha$} ($(g)+(0.597cm,0.350cm)$) 
;
\foreach \j in {3,2,1}
{
\foreach \i in {1,2}
{
\node[state, inner sep=0pt,fill=white] (ts^{\i,\j}_1) [below= of s^{\i,\j}_l-1] {$s^{\i,\j}_1$};
}
\node[state, inner sep=0pt,fill=white] (ts^{4,\j}_1) [below= of s^{4,\j}_l-1] {$s^{\ell,\j}_1$};
}
\path[] (q_s) edge [out=20,in=-20,distance=1cm] node [below=.65cm,left] {$\Sigma$} (q_s);
\node[state, accepting,  inner sep=0pt] (f_0) [below =11.0cm of main_0] {$f_0$};
\node[state, accepting,  inner sep=0pt] (f_1) [right=of f_0] {$f_1$};
\node[state , draw=none] (f_2) [right=of f_1] {$\cdots$};
\node[state, accepting,  inner sep=0pt,] (f_3) [right=of f_2]  {$f_{\ell-1}$};
\node[state, accepting,  inner sep=0pt] (f_4) [right=of f_3]  {$f_\ell$};
\path[]
(f_0) edge node [above] {$\alpha$} (f_1)
(f_1) edge node [above] {$\alpha$} (f_2)
(f_2) edge node [above] {$\alpha$} (f_3)
(f_3) edge node [above] {$\alpha$} (f_4)
;
\foreach \i in {1,2,4}
{
\foreach \j in {4,3,2}
{
\path[]
(s^{\i,\j}_l-1) edge node [ left=0.15cm] {} (s^{\i,\j}_1);
}
\path[]
(s^{\i,1}_l-1) edge node [ left=0.15cm] { $\alpha$ } (s^{\i,1}_1);
}
\foreach \i in {0}
{
\path[]
(f_\i) edge [out=50, in=205] node [above] {$R$} (q_s)
;
}
\path[]
(f_1) edge [out=50, in=220] node [below] {$R$} (q_s)
;
\path[]
(f_3) edge [out=50, in=250] node [above = 0.001cm] {$R$} (q_s)
;
\path[]
(f_4) edge node [right] {$\Sigma$} (q_s)
;
\path[] (main_0) edge node [left] {$R$} (f_0);
\path[] (g) edge [out=180, in=80] node [above left] {$\alpha$} (f_0);

\node[draw=none,coordinate] (main_4_right) [right=of main_4] {};
\node[draw=none,coordinate] (main_4_right_below) [below=of main_4_right] {};
\node[draw=none,coordinate] (f_4_right) [right=of f_4] {};
\node[draw=none,coordinate] (f_4_below) [below=of f_4] {};
\node[draw=none,coordinate] (f_4_right_below) [below=of f_4_right] {};
\node[draw=none,coordinate] (f_0_below) [below=of f_0] {};
\node[draw=none,coordinate] (f_0_below_right) [right=of f_0_below] {};
\path[every edge] (main_4.east) to[out=0,in=90] (main_4_right_below) -- (f_4_right) to[out=270,in=0] (f_4_below) -- (f_0_below_right) to[out=180,in=270] (f_0.south);
\node[draw=none] (arrow_label) [below right=0.3cm and 0.3cm of f_4] {$\alpha$};
\end{tikzpicture}}
\end{center}\caption{The scheme of the DFA $\mathcal{A}$ for a set rewriting system. The transitions from all $s^{i,j}_h$ to $f_0$ on $R$ are not drawn.}\label{fig:fpfl_DFA}
\end{figure}

The DFA is presented in Fig.~\ref{fig:fpfl_DFA}.
Since this is a DFA construction, for every letter and every state there should be a transition; for a clearer picture, we have omitted drawing the transitions from the setting states to $f_0$ on $R$.

The alphabet of $\mathcal{A}$ is $\Sigma = R \cup \{\alpha\}$, where the letters from $R$ are \emph{rule letters}, and $\alpha$ is a fresh special letter that will be used to shift the states and separate applications of rule letters.
The set of states $Q_\mathcal{A}$ is the disjoint sum of the following sets:
\begin{itemize}
\item $\{q_0\}$; the initial state.
\item $Q_\mathrm{P} = P$; the elements of the set rewriting system.
\item $Q_\mathrm{F} = \{f_i \mid i \in \{0,1,\ldots,\ell\}\}$; the \emph{forcing states}.
\item $\{s^{i,j}_h \mid i,h \in \{1,2,\ldots,\ell\} \land j \in \{1,2,\ldots,m\} \land r_j(p_i) \neq \bot\}$; the \emph{setting states}; a setting state $s^{i,j}_h$ is dedicated to the element $p_i$ and the rule $r_j$.
\item $\{q_\mathrm{g}\}$; the \emph{guard state}.
\item $\{q_\mathrm{s}\}$; the \emph{sink state}, which is the unique dead state.
\end{itemize}

The transition function $\delta_\mathcal{A}$ is defined as follows (see also Fig.~\ref{fig:fpfl_DFA}):
\begin{itemize}
\item $\delta_\mathcal{A}(q_0,\alpha) = p_1$.
\item $\delta_\mathcal{A}(p_i,\alpha) = p_{i+1}$ for all $i \in \{0,1,\ldots,\ell-1\}$.
\item $\delta_\mathcal{A}(p_\ell,\alpha) = f_0$.
\item $\delta_\mathcal{A}(p_i,r_j) = \begin{cases}
  s^{i,j}_\ell, & \text{if}\ r_j(p_i) \neq \bot \\
  f_0, & \text{otherwise}
\end{cases}$ \\
  for all $i \in \{1,2,\ldots,\ell\}$ and $j \in \{1,2,\ldots,m\}$; 
  the transition of a rule letter maps the elements from $Q_\mathrm{P}$ either to the first setting state in the dedicated chain, when the rule is legal, or directly to $f_0$, otherwise.
\item $\delta_\mathcal{A}(q_0,r_j) = f_0$ for all $j \in \{1,2,\ldots,m\}$.
\item $\delta_\mathcal{A}(s^{i,j}_h,\alpha) = s^{i,j}_{h-1}$ for all $i \in \{1,2,\ldots,\ell\}$, $j \in \{1,2,\ldots,m\}$, and $h \in \{\ell,\ell-1,\ldots,2\}$.
\item $\delta_\mathcal{A}(s^{i,j}_1,\alpha) = q_\mathrm{g}$ for all $i \in \{1,2,\ldots,\ell\}$ and $j \in \{1,2,\ldots,m\}$.
\item $\delta_\mathcal{A}(s^{i,j}_h, r_k) = f_0$ for all $i,h \in \{1,2,\ldots,\ell\}$ and $j,k \in \{1,2,\ldots,m\}$.
\item $\delta_\mathcal{A}(q_\mathrm{g},\alpha) = f_0$.
\item $\delta_\mathcal{A}(q_\mathrm{g},r_j) = q_\mathrm{s}$ for all $j \in \{1,2,\ldots,m\}$.
\item $\delta_\mathcal{A}(f_i,\alpha) = f_{i+1}$ for all $i \in \{0,1,\ldots,\ell-1\}$.
\item $\delta_\mathcal{A}(f_i,r_j) = q_\mathrm{s}$ for all $i \in \{0,1,\ldots,\ell\}$ and $j \in \{1,2,\ldots,\ell\}$.
\item $\delta_\mathcal{A}(f_\ell,\alpha) = q_\mathrm{s}$.
\end{itemize}

The set of final states $F_\mathcal{A}$ is the disjoint union of the following sets:
\begin{itemize}
\item $Q_\mathrm{F}$; all forcing states are final.
\item $\{s^{i,j}_h \mid i,h \in \{1,2,\ldots,\ell\} \land j \in \{1,2,\ldots,m\} \land r_j(p_i) \neq \bot \land p_h \in r_j(p_i)\}$;
states in a setting chain are final according to the rule $r_j$ applied to the element $p_i$.
\end{itemize}

\subsubsection{The mechanism}

We describe the idea of the construction and introduce a few necessary notions for the analysis.

To reason about $L^*$, we construct the NFA $\mathcal{A}^*=(Q_{\mathcal{A}^*},\Sigma,\delta_{\mathcal{A}^*},\{q_0\},\{q_0\})$ recognizing the language $L^*$ as described in Section~\ref{sec:preliminaries}.
This NFA is $\mathcal{A}$ with added $\varepsilon$-transitions from every final state to $q_0$ and with the dead state $q_\mathrm{s}$ removed.

Given a word $w$ and a subset $C$ by the context, the states that are obtained by applying the action of $w$ to $C$, i.e., $\delta_{\mathcal{A}^*}(C,w)$, are called \emph{active}.

A word $w \in \Sigma^*$ is \emph{irrevocably accepted} if for every $u \in \Sigma^*$, the word $wu$ belongs to $L^*$.
One of the key properties of our construction is that all words such that the state $f_0$ becomes active when starting from $\{q_0\}$ are irrevocably accepted.
This means that for a non-accepted word $w$, state $f_0$ (as every other forcing state) cannot be activated by the application of any prefix of $w$ to the initial subset $\{q_0\}$.

In general, $\mathcal{A}^*$ simulates the set rewriting system.
A subset of $Q_\mathrm{P}$ corresponds to the same subset of elements in the set rewriting system.
Sequences of rules translate to words of a specific form.
When a sequence of rules is not legal for a subset, the corresponding word activates $f_0$ at some point.
The same holds for a word that violates the specific form.
This provides a correspondence between sequences of legal rules and possibly non-accepted words.

For a subset $S \subseteq Q_\mathrm{P}$, applying the word $r_j \alpha^\ell$, for some $r_j \in R$, corresponds to applying the rule $r_j$ in the set rewriting system for $S$, i.e., $S' = \delta_{\mathcal{A}^*}(S,r_j \alpha^\ell)=(S\cdot r_j) \cup \{q_\mathrm{g}\}$ when $r_j$ is legal for $S$.
The chain of the setting states for a state $p_i \in Q_\mathrm{P}$ has its final states in the DFA defined accordingly to the action of the rule $r_j$ for the element $p_i$.
The indices keep the correspondence that if a state $s^{i,j}_h$ is final in the DFA, then after applying $r_j \alpha^\ell$, the state $p_h$ becomes active if and only if $p_i \in S$ (assuming that the rule $r_j$ is legal for $S$).
In the end, the guard state $q_\mathrm{g}$ is additionally activated, which ensures that we must use a rule letter as the next one since the transition of $\alpha$ maps the guard state to $f_0$.
Without the guard state, sometimes one could use more $\alpha$ letters to shift the states within $Q_\mathrm{P}$ and in this way cheat by obtaining a different subset of $Q_\mathrm{P}$.
If the rule $r_j$ is not legal for $S$, then the transitions of $r_j$ directly activate $f_0$.

A word $u$ is \emph{simulating for a subset} $S \subseteq Q_\mathrm{P}$ if it is in the form of $r_{i_1} \alpha^\ell r_{i_2} \alpha^\ell \cdot \ldots \cdot r_{i_k} \alpha^\ell$ (for $k \ge 0$) and the sequence of the rules $r_{i_1}, r_{i_2}, \ldots, r_{i_k}$ is legal for $S$ in the set rewriting system.
The actions of these words correspond to applying the contained sequence of rules.
The construction ensures that, for a subset containing a non-empty $S \subseteq Q_\mathrm{P}$ together with the guard state $q_\mathrm{g}$, using a simulating word is the only possibility to avoid $f_0$.

A special case occurs at the beginning, i.e., for the initial subset $\{q_0\}$.
To avoid activation of $f_0$, we must start with $\alpha^i$ for any $1 \le i \le \ell$.
Then we obtain the subset $\delta_{\mathcal{A}^*}(\{q_0\},\alpha^i)=\{p_i\}$; this corresponds to the selection of the initial singleton in the set rewriting system (cf.\ Lemma~\ref{lem:set_rewriting_singleton}).
After that, a simulating word must be used, unless $f_0$ is activated.
Note that $Q_\mathrm{P} \cup \{q_\mathrm{g}\}$ does not contain final states, thus in this way we get non-accepted words.
It follows that we can find arbitrarily long such words if and only if the set rewriting system is immortal.

\subsubsection{Correctness}

We prove the correctness formally through the following lemmas.

The first lemma states that whenever $f_0$ becomes active, all subsequent words will be accepted, thus $f_0$ must be avoided when constructing a non-accepted word.

A word $w \in \Sigma^*$ is \emph{$f_0$-omitting for a subset} $C \subseteq Q_{\mathcal{A}^*}$ if there is no prefix $u$ of $w$ such that $f_0 \in \delta_{\mathcal{A}^*}(C,u)$.
It is simply \emph{$f_0$-omitting} if it is $f_0$-omitting for $\{q_0\}$.

\begin{lemma}\label{lem:fpfl-irrevocably_accepted}
If a word $w \in \Sigma^*$ is not $f_0$-omitting, then it is irrevocably accepted.
\end{lemma}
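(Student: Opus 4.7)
The plan is to show that once $f_0$ enters the active set of $\mathcal{A}^*$, every continuation keeps a state of $F_{\mathcal{A}^*}$ active. Suppose $w$ is not $f_0$-omitting and pick a prefix $v$ of $w$ with $f_0 \in \delta_{\mathcal{A}^*}(\{q_0\}, v)$; since $f_0 \in F$, the $\varepsilon$-transition back to $q_0$ gives $\{q_0, f_0\} \subseteq \delta_{\mathcal{A}^*}(\{q_0\}, v)$. For an arbitrary $u \in \Sigma^*$ write $wu = vz$; monotonicity yields $\delta_{\mathcal{A}^*}(\{q_0\}, wu) \supseteq \delta_{\mathcal{A}^*}(\{q_0, f_0\}, z)$, so it suffices to verify that $\delta_{\mathcal{A}^*}(\{q_0, f_0\}, z) \cap F_{\mathcal{A}^*} \neq \emptyset$ for every $z \in \Sigma^*$.

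The key step is to exhibit a cyclic family of ``safe'' witness subsets: $B_0 = \{q_0, f_0\}$ and $B_k = \{q_0, p_k, f_k\}$ for $k \in \{1, \ldots, \ell\}$. I will prove by induction on $|z|$ that for every $z$ there is some $k \in \{0, 1, \ldots, \ell\}$ with $B_k \subseteq \delta_{\mathcal{A}^*}(\{q_0, f_0\}, z)$. Each $B_k$ contains the final state $f_k$, so via $\varepsilon$-closure it also contains $q_0 \in F_{\mathcal{A}^*}$, which finishes the reduction. The base case $z = \varepsilon$ holds with $k = 0$.

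For the inductive step, append a letter $a$. If $a = r_j$, then $\delta(q_0, r_j) = f_0$ restores $f_0$, giving $B_0$ in the new set regardless of the current $k$. If $a = \alpha$ and $k \in \{0, 1, \ldots, \ell - 1\}$, the transitions $\delta(q_0, \alpha) = p_1$ (used when $k = 0$), $\delta(p_k, \alpha) = p_{k+1}$ (used when $k \geq 1$), and $\delta(f_k, \alpha) = f_{k+1}$ together place us in $B_{k+1}$. The delicate sub-case is $k = \ell$ under $\alpha$: there $\delta(f_\ell, \alpha) = q_\mathrm{s}$ eliminates the forcing state, but the construction's explicit transition $\delta(p_\ell, \alpha) = f_0$ regenerates $f_0$, dropping us back into $B_0$ and closing the cycle.

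The main obstacle is exactly this $k = \ell$ case, which is why the invariant is phrased to track $p_k$ alongside $f_k$: without the hand-off $p_\ell \xrightarrow{\alpha} f_0$, a single $\alpha$-step from $B_\ell$ would purge every final state and the induction would collapse. Every other transition preserves the invariant either by shifting $k$ up by one under $\alpha$, or by resetting to $k = 0$ via $\delta(q_0, r_j) = f_0$ under any rule letter, so the induction goes through and the lemma follows.
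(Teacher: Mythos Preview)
Your proof is correct. Both your argument and the paper's rely on the same two key transitions---$\delta(q_0,r_j)=f_0$ for any rule letter and $\delta(p_\ell,\alpha)=f_0$---to guarantee that a forcing state is perpetually reactivated once $f_0$ appears. The paper packages this as a minimal-counterexample argument: it supposes a shortest word $v$ with $\delta_{\mathcal{A}^*}(\{f_0\},v)\cap Q_\mathrm{F}=\emptyset$, argues that no proper prefix of $v$ can reactivate $f_0$, hence $v$ must begin with $\alpha^{\ell+1}$, and then notes that $q_0\xrightarrow{\alpha^\ell}p_\ell\xrightarrow{\alpha}f_0$ gives a contradiction. You instead run a direct induction on an explicit cyclic family of witness subsets $B_0,\ldots,B_\ell$, carrying $p_k$ along with $f_k$ precisely so that the $k=\ell$ step can hand off via $p_\ell\xrightarrow{\alpha}f_0$. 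Your version is slightly longer but more transparent about why $p_k$ must be tracked; the paper's version is terser but leaves that bookkeeping implicit in the phrase ``through the chain on $Q_\mathrm{P}$.'' Either way, the substance is the same.
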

\begin{proof}
If $w \in \Sigma^*$ is not $f_0$-omitting, then there is a prefix $u$ of $w$ such that $f_0 \in \delta_{\mathcal{A}^*}(\{q_0\},u)$.
It is enough to observe that for every word $v \in \Sigma^*$, the set $\delta_{\mathcal{A}^*}(\{f_0\},v)$ contains a forcing state.
All forcing states are final, thus $uv$ and, in particular, all words containing $w$ as a prefix will be accepted.
Suppose this is not the case, and let $v$ be a shortest word such that $\delta_{\mathcal{A}^*}(\{f_0\},v)$ does not contain any forcing states.
Then for every non-empty proper prefix $v'$ of $v$, $\delta_{\mathcal{A}^*}(\{f_0\},v')$ does not contain $f_0$, as otherwise the suffix $v''$, where $v'v''=v$, would be a shorter word than $v$ with the same property.
Note that $\delta_{\mathcal{A}^*}(\{f_0\},\varepsilon)=\{f_0,q_0\}$, and in general, whenever a forcing state is active, $q_0$ is also active.
Thus the only possibility for $v$ is to start with $\alpha^{\ell+1}$; otherwise, active state $q_0$ would be mapped to $f_0$ by the transition of a rule letter after $\alpha^i$ for an $i \le \ell$.
However, the action of $\alpha^{\ell+1}$ through the chain on $Q_\mathrm{P}$ also maps $q_0$ to $f_0$, which yields a contradiction.
\end{proof}

Applying a simulating word corresponds to applying the sequence of rules that is contained in it.
The following lemma formalizes this claim.

\begin{lemma}\label{lem:fpfl-simulating_word}
Let $C \subseteq Q_\mathrm{P} \cup \{q_\mathrm{g}\}$ be such that $S = C \cap Q_\mathrm{P}$ is non-empty, and let $w = r_{i_1} \alpha^\ell \ldots r_{i_k} \alpha^\ell$ be a simulating word for $S$.
Then $\delta_{\mathcal{A}^*}(C,w) = (S \cdot r_{i_1}\cdot\ldots\cdot r_{i_k}) \cup \{q_\mathrm{g}\}$.
\end{lemma}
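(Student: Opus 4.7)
The plan is to proceed by induction on $k$. Once the base case $k=1$ is settled, non-emptiability of the system guarantees that every intermediate set $S \cdot r_{i_1}\cdots r_{i_{j}}$ (for $1 \le j \le k-1$) is non-empty, so the hypothesis of the lemma is preserved after processing the first $k-1$ blocks $r_{i_{j}}\alpha^\ell$, and the base case can then be applied to the final block. Thus the whole argument reduces to analyzing a single block $r_{i_1}\alpha^\ell$ applied to $C$.

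For the base case I would carefully track how $\delta_{\mathcal{A}^*}(C,u)$ evolves as $u$ runs over prefixes of $r_{i_1}\alpha^\ell$. Since every state in $C \subseteq Q_\mathrm{P} \cup \{q_\mathrm{g}\}$ is non-final, the $\varepsilon$-closure of $C$ equals $C$ itself; in particular $q_0 \notin C$. Applying $r_{i_1}$ then sends each $p_i \in S$ to $s^{i,i_1}_\ell$ (the legality of $r_{i_1}$ for $S$ is what excludes the jump to $f_0$) and sends $q_\mathrm{g}$, if present, to the sink $q_\mathrm{s}$, which is pruned in the NFA.

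The heart of the argument is the evolution under $\alpha^\ell$. Every setting state $s^{i,i_1}_\ell$ slides down its chain $s^{i,i_1}_\ell \to \cdots \to s^{i,i_1}_1 \to q_\mathrm{g}$, reaching $q_\mathrm{g}$ in exactly $\ell$ steps. At step $j$ for $0 \le j \le \ell-1$ the $\varepsilon$-closure introduces $q_0$ iff $s^{i,i_1}_{\ell-j}$ is final for some $i$ with $p_i \in S$, equivalently iff $p_{\ell-j} \in S \cdot r_{i_1}$; the remaining $\ell - j$ applications of $\alpha$ then push this newly introduced $q_0$ along $q_0 \to p_1 \to p_2 \to \cdots$, so that it lands exactly at $p_{\ell-j}$ after a total of $\ell$ letters. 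Collecting over all $j$, the active set after $r_{i_1}\alpha^\ell$ is $(S \cdot r_{i_1}) \cup \{q_\mathrm{g}\}$, as required.

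The main obstacle I anticipate is the bookkeeping needed to rule out unintended activations, particularly of $f_0$. I would go through the four transitions that can produce $f_0$---namely $\delta(q_0,r_j)$, $\delta(s^{i,j}_x,r_y)$, $\delta(p_\ell,\alpha)$, and $\delta(q_\mathrm{g},\alpha)$---and verify case by case that none is triggered: no rule letter is ever read while $q_0$ or a setting state is active, and $p_\ell$ and $q_\mathrm{g}$ first appear only at step $\ell$, when no further letter is read. I would also note that the $\varepsilon$-closure at the final step $\ell$ does not re-introduce $q_0$, because the raw active set at that point consists only of $q_\mathrm{g}$ together with states from $Q_\mathrm{P}$, none of which is final.
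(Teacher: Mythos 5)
Your proposal is correct and follows essentially the same route as the paper's proof: analyze a single block $r_{i_1}\alpha^\ell$ (states of $S$ move into the setting chains, final setting states $s^{i,i_1}_h$ inject $q_0$ via $\varepsilon$-transitions so that it lands on $p_h$ after the remaining $\alpha^h$, and the chains end in $q_\mathrm{g}$), then iterate by induction on $k$ using non-emptiability to keep the intermediate sets non-empty. Your additional case analysis ruling out activations of $f_0$ only makes explicit what the paper leaves implicit.
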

\begin{proof}
Let $C$ and $S$ be as in the lemma, and let $r_j$ be a rule that is legal for $S$.
The transitions of the letter $r_j$ map each state $p_i \in S$ to $s^{i,j}_\ell$.
Then the action of $\alpha^\ell$ maps these active states along the setting chains, activating state $q_0$ whenever an active setting state is final.
Eventually, they are mapped to $q_\mathrm{g}$.
A state $s^{i,j}_h$ is final if and only if $p_h \in r_j(p_i)$.
From the construction, if $s^{i,j}_h$ is final, then $q_0$ becomes active after applying $\alpha^{\ell-h}$.
Then $q_0$ is mapped to $p_h$ by the action of the remaining $\alpha^h$.
After the last occurrence of $\alpha$, the last active setting states are mapped to the guard state $q_\mathrm{g}$, and there is at least one such active state since $S$ is non-empty.
Finally, if besides $S$, $C$ contains the guard state, the action of $r_j$ deactivates it (maps to the empty set).
Hence, we have $\delta_{\mathcal{A}^*}(C,r_j \alpha^\ell) = (S \cdot r_j) \cup \{q_\mathrm{g}\}$.

Since the set rewriting system is non-emptiable, the set $S \cdot r_j$ is non-empty, thus we can apply the argument iteratively.
Hence, the lemma follows by induction on $k$.
\end{proof}

We show that, unless $f_0$ is activated, a word applied to a subset $C \subseteq Q_\mathrm{P} \cup \{q_\mathrm{g}\}$ must be a prefix of a simulating word for $S = C \cap Q_\mathrm{P}$.
The required condition is that the guard state is also present in $C$, so one cannot shift the states on $Q_\mathrm{P}$ by using $\alpha$.

\begin{lemma}\label{lem:fpfl-must_be_simulating}
Let $C = S \cup \{q_\mathrm{g}\}$, where $S \subseteq Q_\mathrm{P}$ is non-empty.
If $w$ is $f_0$-omitting for $C$, then $w$ is a prefix of a simulating word for $S$.
\end{lemma}
\begin{proof}
First, we observe that every word $w$ which does not activate $f_0$ starting from $C$, unless it is the empty word, must start with a rule letter $r_j$, since using $\alpha$ maps $q_\mathrm{g}$ to $f_0$ and we have assumed $q_\mathrm{g} \in C$.
Additionally, $r_j$ must be legal for $S$, as otherwise $f_0$ would be activated.
Afterwards, at least one of the first setting states must be active, because $S \neq \emptyset$.
Hence $\alpha^\ell$ must be used, unless $w$ ends before that and thus is a prefix of this pattern.
By Lemma~\ref{lem:fpfl-simulating_word} for $C$ and $w = r_j \alpha^\ell$, we know that the set of active states is now $(S\cdot r_j) \cup \{q_\mathrm{g}\}$.
By iterating this argument, we conclude that between each rule letter there must be exactly $\ell$ letters $\alpha$, it must start with a rule letter, and at the end, there are at most $\ell$ letters $\alpha$.
Furthermore, the rule letters must form a legal sequence of rules for $S$.
Therefore, we know that word $w$ has to be a prefix of some simulating word for $S$.
\end{proof}

In the beginning, before we may apply a simulating word, we can choose an arbitrary singleton $\{p_i\}$ as the initial subset.
Then a simulating word must be applied, as otherwise $f_0$ is activated.

\begin{lemma}\label{lem:fpfl-initial_must_be_simulating}
If a word $w$ is $f_0$-omitting, then $w$ is a prefix of $\alpha^i w'$, where $1 \le i \le \ell$ and $w'$ is a simulating word for $\{p_i\}$.
\end{lemma}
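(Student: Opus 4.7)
The plan is to mirror the structure of Lemma~\ref{lem:fpfl-must_be_simulating}, while accounting for the fact that the initial active subset $\{q_0\}$ does not contain the guard state $q_\mathrm{g}$. This absence permits an initial run of $\alpha$ letters, which picks out the singleton $\{p_i\}$ from which the simulation will start; once the first rule letter is used, the situation reduces to the one to which Lemma~\ref{lem:fpfl-must_be_simulating} applies directly.

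First we handle the $\alpha$-prefix. If $w=\varepsilon$, take $i=1$ and $w'=\varepsilon$ (the empty simulating word). Otherwise let $i$ denote the length of the longest $\alpha$-prefix of $w$. Since $\delta(q_0,r_j)=f_0$ for every rule letter $r_j$, the first letter of $w$ must be $\alpha$, so $i\ge 1$; and since the chain $q_0\xrightarrow{\alpha}p_1\xrightarrow{\alpha}\cdots\xrightarrow{\alpha}p_\ell\xrightarrow{\alpha}f_0$ makes $f_0$ become active on any $\alpha^{\ell+1}$-prefix, we have $i\le\ell$. After reading $\alpha^i$ from $\{q_0\}$ the set of active states is exactly $\{p_i\}$, since no state along this chain is final and hence no $\varepsilon$-transition fires. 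If $w=\alpha^i$, the claim holds immediately with $w'=\varepsilon$.

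Otherwise write $w=\alpha^i u$, where by the maximality of $i$ the tail $u$ starts with a rule letter $r_j$ and is $f_0$-omitting for $\{p_i\}$. From the construction, $\delta(p_i,r_j)$ is $s^{i,j}_\ell$ if $r_j(p_i)\ne\bot$ and $f_0$ otherwise, so avoiding $f_0$ forces $r_j$ to be legal for $\{p_i\}$. From $\{s^{i,j}_\ell\}$, every rule letter maps to $f_0$, so the portion of $u$ after $r_j$ must start with a block of $\alpha$'s of length at most $\ell$. If $u$ ends inside that block, then $u$ is a prefix of the simulating word $r_j\alpha^\ell$ for $\{p_i\}$, and we are done. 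Otherwise $u=r_j\alpha^\ell u'$ with $u'\ne\varepsilon$, and Lemma~\ref{lem:fpfl-simulating_word} applied with $C=\{p_i\}$ (which lies in $Q_\mathrm{P}\cup\{q_\mathrm{g}\}$) tells us that after reading $r_j\alpha^\ell$ the active subset is $(\{p_i\}\cdot r_j)\cup\{q_\mathrm{g}\}$.

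This subset has exactly the form required by Lemma~\ref{lem:fpfl-must_be_simulating}. Applying that lemma to $u'$ shows that $u'$ is a prefix of a simulating word $v$ for $\{p_i\}\cdot r_j$, so $u=r_j\alpha^\ell u'$ is a prefix of the simulating word $r_j\alpha^\ell v$ for $\{p_i\}$, and $w=\alpha^i u$ has the desired form. The main obstacle will be this ``bootstrapping'' first simulation step: Lemma~\ref{lem:fpfl-must_be_simulating} requires $q_\mathrm{g}$ to be active and so cannot be invoked directly on $\{p_i\}$; this gap must be closed by hand by reading $r_j\alpha^\ell$ explicitly, after which $q_\mathrm{g}$ appears and the iterative machinery of Lemma~\ref{lem:fpfl-must_be_simulating} takes over.
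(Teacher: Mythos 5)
Your proof is correct and takes essentially the same route as the paper: peel off the maximal $\alpha^i$ prefix, process the first block $r_j\alpha^\ell$ by hand using Lemma~\ref{lem:fpfl-simulating_word}, and then hand the remaining suffix to Lemma~\ref{lem:fpfl-must_be_simulating} applied to $(\{p_i\}\cdot r_j)\cup\{q_\mathrm{g}\}$. The only detail worth making explicit is that $\{p_i\}\cdot r_j\neq\emptyset$ (by non-emptiability of the set rewriting system), which is what licenses the application of Lemma~\ref{lem:fpfl-must_be_simulating}.
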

\begin{proof}
Let $w$ be a $f_0$-omitting word, and write $w = \alpha^i w'$ for $i \ge 0$ and $w' \in \Sigma^*$ that does not start with $\alpha$.
Since we start from $\{q_0\}$, we know that $1 \le i \le \ell$ unless $w$ is empty.
We have $\delta_{\mathcal{A}^*}(\{q_0\},\alpha^i) = \{p_i\}$.
Then $w'$ begins with some rule letter $r_j$, which must be a legal rule for $\{p_i\}$ in the set rewriting system, followed by $\alpha^\ell$, unless $w'$ is shorter and thus is a prefix of this pattern.

Hence $w = \alpha^i r_j \alpha^\ell u$.
By Lemma~\ref{lem:fpfl-simulating_word}, we have $C = \delta_{\mathcal{A}^*}(\{q_0\},\alpha^i r_j \alpha^\ell) = S \cup \{q_\mathrm{g}\}$ for $S = \{p_i\}\cdot r_j$.
Since the set rewriting is non-emptiable, $S \neq \emptyset$.
By Lemma~\ref{lem:fpfl-must_be_simulating} applied to $C$, we know that $u$ must be a prefix of a simulating word for $S$.
It follows that $r_j \alpha^\ell u$ is a prefix of a simulating word for $\{p_i\}$.
\end{proof}

Finally, we show the equivalence between the immortality of the set rewriting system and the non-cofiniteness of the language of $\mathcal{A}^*$.

\begin{lemma}\label{lem:fpfl-equivalence}
The set rewriting system $(P,R)$ is immortal if and only if there are infinitely many words not accepted by $\mathcal{A}^*$.
\end{lemma}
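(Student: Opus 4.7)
The plan is to prove both directions by combining the ``simulating word'' machinery of Lemmas~\ref{lem:fpfl-irrevocably_accepted}--\ref{lem:fpfl-initial_must_be_simulating} with the sharp bound from Theorem~\ref{thm:set_rewriting_maximal_legal_length}. The key observation is that a word $w$ is outside $L^*$ iff it is not accepted by $\mathcal{A}^*$, and by Lemma~\ref{lem:fpfl-irrevocably_accepted} every non-$f_0$-omitting word is irrevocably accepted (hence in $L^*$). So the words outside $L^*$ form a subset of the $f_0$-omitting words, and by Lemma~\ref{lem:fpfl-initial_must_be_simulating} these are all prefixes of words of the form $\alpha^i w'$ with $w'$ a simulating word for $\{p_i\}$. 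The problem thus reduces to bounding the supply of simulating words.

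For the direction ($\Leftarrow$) (contrapositive: mortal $\Rightarrow$ cofinite $L^*$), I would apply Theorem~\ref{thm:set_rewriting_maximal_legal_length}: in a mortal non-emptiable system every legal sequence of rules from any non-empty subset has length at most $2^{|P|}-2$. Consequently every simulating word has length at most $(\ell+1)(2^{|P|}-2)$, so every $f_0$-omitting word has length at most $\ell+(\ell+1)(2^{|P|}-2)$. There are therefore only finitely many $f_0$-omitting words, and hence only finitely many words outside $L^*$.

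For the direction ($\Rightarrow$), assume immortality and fix a non-empty $S\subseteq P$ and a non-empty legal sequence $r_{i_1},\ldots,r_{i_k}$ with $S\cdot r_{i_1}\cdots r_{i_k}=S$. Pick any $p_i\in S$ and consider, for every $N\ge 1$,
\[
 w_N \;=\; \alpha^{i}\,\bigl(r_{i_1}\alpha^{\ell}\,r_{i_2}\alpha^{\ell}\cdots r_{i_k}\alpha^{\ell}\bigr)^{N}.
\]
Applying Lemma~\ref{lem:set_rewriting_singleton} iteratively, since $\{p_i\}\subseteq S$, every prefix of the rule sequence that is legal for $S$ is also legal for $\{p_i\}$ and for all subsequent subsets obtained along the way; moreover non-emptiability guarantees that the active element-subset never becomes empty. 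Thus Lemma~\ref{lem:fpfl-simulating_word}, combined with the initial $\alpha^i$ step treated exactly as in the proof of Lemma~\ref{lem:fpfl-initial_must_be_simulating}, gives $\delta_{\mathcal{A}^*}(\{q_0\},w_N)=S_N\cup\{q_\mathrm{g}\}$ for some non-empty $S_N\subseteq S\subseteq Q_\mathrm{P}$. Since no state of $Q_\mathrm{P}$ and $q_\mathrm{g}$ is final in $\mathcal{A}^*$, the word $w_N$ is not accepted, i.e.\ $w_N\notin L^*$. The lengths $|w_N|=i+Nk(\ell+1)$ are unbounded in $N$, so infinitely many words lie outside $L^*$.

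The only subtle point is the forward construction: one must ensure that starting from the singleton $\{p_i\}$ (rather than from $S$ itself) the cycle can still be iterated indefinitely. This is exactly what Lemma~\ref{lem:set_rewriting_singleton} together with non-emptiability delivers, since the chain of descending subsets $\{p_i\}\supseteq S_1\supseteq S_2\supseteq\cdots$ stays legal for the fixed rule sequence and never collapses to $\emptyset$. With this in hand both implications are short, and the equivalence follows.
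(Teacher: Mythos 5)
Your proposal is correct and takes essentially the same route as the paper: the forward direction builds arbitrarily long non-accepted words by iterating a cycle restricted to a singleton via Lemma~\ref{lem:set_rewriting_singleton} and applying Lemma~\ref{lem:fpfl-simulating_word}, exactly as in the paper's proof. Your reverse direction is simply the contrapositive of the paper's: where the paper extracts a repeated subset (pigeonhole) from a long $f_0$-omitting word via Lemmas~\ref{lem:fpfl-irrevocably_accepted} and~\ref{lem:fpfl-initial_must_be_simulating}, you invoke the length bound of Theorem~\ref{thm:set_rewriting_maximal_legal_length}, which encapsulates the same counting argument, so there is no genuine difference and no gap.
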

\begin{proof}
Suppose that the set rewriting system is immortal.
For every $k>0$, we will construct a non-accepted word $w$ of length at least $k\cdot(\ell+1)$.
Since the system is immortal and by Lemma~\ref{lem:set_rewriting_singleton}, there exists a singleton $\{p_i\}$ and a legal sequence $r_{i_1},\ldots,r_{i_k}$ of $k$ rules for $\{p_i\}$.
Hence, $w = r_{i_1} \alpha^\ell \ldots r_{i_k} \alpha^\ell$ is a simulating word for $S=\{p_i\}$.
By Lemma~\ref{lem:fpfl-simulating_word}, we know that $\delta_{\mathcal{A}^*}(\{q_0\},\alpha^i w) \subseteq Q_\mathrm{P} \cup \{q_\mathrm{g}\}$, which does not contain any final states, thus $\alpha^i w$ is not accepted.

Conversely, suppose that $L^*$ is not cofinite.
Then there are infinitely many words that are not accepted, which, in particular, by Lemma~\ref{lem:fpfl-irrevocably_accepted}, must be $f_0$-omitting.
Let $w$ be a $f_0$-omitting word of length at least $\ell+(\ell+1)2^{|P|}$.
By Lemma~\ref{lem:fpfl-initial_must_be_simulating}, we know that $w$ has the form of $\alpha^i w'$, where $1 \le i \le \ell$ and $w'$ is a prefix of a simulating word for $\{p_i\}$.
This simulating word must have length at least $(\ell+1)2^{|P|}$, hence it contains a sequence of $k \ge 2^{|P|}$ rule letters.
We conclude that this sequence $r_{i_1},r_{i_2},\ldots,r_{i_k}$ is legal for $\{p_i\}$, and it does not lead to the empty set as the set rewriting system is non-emptying.
If we look at the sequence of the sets of elements $S_j = \{p_i\} \cdot r_{i_1} \cdot \ldots \cdot r_{i_j}$, for $j \in \{0,\ldots,2^{|P|}\}$, then we can find two distinct indices $x$ and $y$ such that $x < y$ and $S_x = S_y$.
Hence, the rewriting system is immortal due to $S_x$ and the sequence $r_{i_{x+1}}, r_{i_{x+2}}, \ldots, r_{i_y}$.
\end{proof}

We conclude this part with
\begin{theorem}\label{thm:fpfl-dfa}
Problem~\ref{pbm:fpfl} is PSPACE-hard if $L$ is specified by a DFA over a given alphabet.
\end{theorem}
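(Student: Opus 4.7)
The plan is to observe that the theorem is essentially a corollary of what has already been set up. Given a non-emptiable set rewriting system $(P,R)$, I would construct the DFA $\mathcal{A}$ exactly as in Section~\ref{subsec:fpfl-dfa} and argue that this construction is polynomial-time. The state set is the disjoint union of $\{q_0\}$, $Q_\mathrm{P}$, $Q_\mathrm{F}$, the setting states $\{s^{i,j}_x\}$, the guard $q_\mathrm{g}$, and the sink $q_\mathrm{s}$, and is dominated in size by the setting states, giving $O(|P|^2 \cdot |R|)$ total; the alphabet $R \cup \{\alpha\}$ has size $|R|+1$; and the transition function is given by an explicit case analysis, each of whose clauses is computable in time polynomial in $|P|+|R|$. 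Hence $\mathcal{A}$ can be written down in polynomial time.

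Next, I would invoke Lemma~\ref{lem:fpfl-equivalence}, which states that $(P,R)$ is immortal if and only if there are infinitely many words not accepted by $\mathcal{A}^*$. Because $\mathcal{A}^*$ recognizes $L^*$ where $L$ is the language of $\mathcal{A}$, this is exactly the condition that $L^*$ is \emph{not} cofinite. Hence the map $(P,R) \mapsto \mathcal{A}$ is a polynomial-time many-one reduction from the complement of Problem~\ref{pbm:immortality_set_rewriting} to Problem~\ref{pbm:fpfl} in its DFA-input variant.

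Finally, I would appeal to closure of PSPACE under complement: by Theorem~\ref{thm:immortality_set_rewriting}, Problem~\ref{pbm:immortality_set_rewriting} is PSPACE-complete, so its complement is also PSPACE-hard, and the reduction above therefore shows that Problem~\ref{pbm:fpfl} is PSPACE-hard when $L$ is specified by a DFA over a (growing) alphabet. There is no substantive obstacle at this stage, since Lemmas~\ref{lem:fpfl-irrevocably_accepted}--\ref{lem:fpfl-equivalence} have already discharged the correctness of the construction; the only thing to make explicit in the write-up is the polynomial bound on the size of $\mathcal{A}$ and the one-line appeal to $\mathrm{PSPACE} = \mathrm{coPSPACE}$.
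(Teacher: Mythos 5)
Your proposal is correct and matches the paper's intended argument: the paper states Theorem~\ref{thm:fpfl-dfa} as an immediate consequence of the construction in Subsection~\ref{subsec:fpfl-dfa} together with Lemma~\ref{lem:fpfl-equivalence}, leaving the polynomial size bound and the appeal to closure of PSPACE under complement implicit, which you simply make explicit. No substantive difference in approach.
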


\subsection{Binarization}\label{subsec:fpfl-binarization}

We show that the PSPACE-hardness still holds when the alphabet is restricted to two letters.
Note that a standard binarization of an arbitrary language, where we uniformly replace each letter with an equal-length binary encoding, does not work here.
Except for some trivial cases, if incomplete encodings are accepted, then the Kleene star will be always cofinite, and if they are not accepted, then it will never be cofinite.
Therefore, we have to use a different way and utilize specific properties of the construction.

First, we define the following binary encoding $\mathit{bin}\colon \Sigma \to \{0,1\}^*$:
let $\mathit{bin}(\alpha)=0$, $\mathit{bin}(r_i) = 1^i 0$ for all $0 \leq i \leq m-1$, and $\mathit{bin}(r_m) = 1^m$.
We extend the function $\mathit{bin}$ to a function $\mathit{bin}\colon \Sigma^* \to \{0,1\}^*$ in a natural way.
Note that our encoding is a maximal prefix code, which means that $\mathit{bin}(u) \neq \mathit{bin}(v)$ for $u \neq v$, and also, every binary word $w' \in \{0,1\}^*$ contains a unique maximal prefix that is the encoding of some word over $\Sigma$; this prefix has length at least $|w'|-(m-1)$.

\begin{figure}[htb]\begin{center}
\resizebox{12cm}{!}{\large
\tikzset{every state/.style={minimum size=1.35cm}}
\tikzset{every path/.style={-{Latex[width=1mm,length=1.75mm]}}}

\tikzstyle{accepting}=[path picture={
\draw let 
  \p1 = (path picture bounding box.east),
  \p2 = (path picture bounding box.center)
  in
    (\p2) circle (\x1 - \x2 - 2pt);
 }]
\begin{tikzpicture}

\node[state, inner sep=0pt] (p_2) {$p_i$};
\node[state, inner sep=0pt, draw=none] (p_1) [left= 5.0 of p_2] {$\ldots$};
\node[state, inner sep=0pt, draw=none] (p_left_space) [left=of p_2] {};
\node[state, inner sep=0pt, draw=none] (p_3) [right= 5.0 of p_2] {$\ldots$};
\path[]
(p_1) edge node [above] {$0$} (p_2)
(p_2) edge node [above] {$0$} (p_3)
;
\node[state, inner sep=0pt, draw=none] (p_0) [left= of p_left_space] {};
\node[state, inner sep=0pt] (c_0_1) [below=0.8 of p_0] {$c^{i,1}$};
\node[state, inner sep=0pt,draw=none] (c_0_2) [right= of c_0_1]
{$\ldots$};
\node[state, inner sep=0pt] (c_0_3) [right= of c_0_2] {$c^{i,m-2}$};
\node[state, inner sep=0pt] (c_0_4) [right= of c_0_3] {$c^{i,m-1}$};
\path[]
(p_2) edge node [above left] {$1$} (c_0_1)
(c_0_1) edge node [above] {$1$} (c_0_2)
(c_0_2) edge node [above] {$1$} (c_0_3)
(c_0_3) edge node [above] {$1$} (c_0_4)
;

\node[state, inner sep=0pt] (s_0_1) [below =0.8 of c_0_1] {$s^{i,1}_1$};
\node[state, inner sep=0pt] (s_0_3) [below =0.8 of c_0_3] {$s^{i,m-2}_1$};
\node[state, inner sep=0pt] (s_0_4) [below =0.8 of c_0_4] {$s^{i,m-1}_1$};
\node[state, inner sep=0pt] (s_0_5) [right =of s_0_4] {$s^{i,m}_1$};

\node[state,inner sep=0pt,draw=none] (s_1_1) [below = 0.8 of s_0_1] {$\vdots$};
\node[state,inner sep=0pt,draw=none] (s_1_3) [below = 0.8 of s_0_3] {$\vdots$};
\node[state,inner sep=0pt,draw=none] (s_1_4) [below = 0.8 of s_0_4] {$\vdots$};
\node[state,inner sep=0pt,draw=none] (s_1_5) [below = 0.8 of s_0_5] {$\vdots$};

\path[]
(s_0_1) edge node [left] {$0$} (s_1_1)
(s_0_3) edge node [left] {$0$} (s_1_3)
(s_0_4) edge node [left] {$0$} (s_1_4)
(s_0_5) edge node [left] {$0$} (s_1_5)
;

\path[]
(c_0_1) edge node [left] {$0$} (s_0_1)
(c_0_3) edge node [left] {$0$} (s_0_3)
(c_0_4) edge node [left] {$0$} (s_0_4)
(c_0_4) edge node [above right] {$1$} (s_0_5)
;
\end{tikzpicture}}
\end{center}\caption{The fragment of the binary DFA $\mathcal{B}$ with the choice states of a state $p_i$. The transitions on $1$ from the setting states are not drawn and go to $f_0$.}\label{fig:fpfl_DFA-binary}
\end{figure}

We modify the construction of $\mathcal{A}$ from Subsection~\ref{subsec:fpfl-dfa} using the same notation.
We construct a binary DFA $\mathcal{B}=(Q_\mathcal{B},\{0,1\},\delta_\mathcal{B},q_0,F_\mathcal{B})$, where $Q_\mathcal{B}$ is $Q_\mathcal{A}$ with some states added, and $q_0$ and the set of final states $F_\mathcal{B}=F_\mathcal{A}$ are the same as in the original $\mathcal{A}$.
All the transitions labeled by $\alpha$ are now labeled by $0$.
For each state $p_i \in Q_\mathrm{P}$, we introduce $m-1$ new intermediate \emph{choice states} in the way that the binary word encoding $\mathit{bin}(r_j)$ of a rule letter $r_j$ acts as $r_j$ on $p_i$ in $\mathcal{A}$.
The construction of these states is shown in Fig.~\ref{fig:fpfl_DFA-binary}.
Formally, we add states $c^{i,j}$ for all $i \in \{1,\ldots,\ell\}$ and $j \in \{1,\ldots,m-1\}$, and the related transitions for all $i$ are defined by:
\begin{itemize}
\item $\delta_\mathcal{B}(p_i,1) = c^{i,1}$.
\item $\delta_\mathcal{B}(c^{i,j},1) = c^{i,j+1}$ for all $j \in \{1,\ldots,m-2\}$.
\item $\delta_\mathcal{B}(c^{i,j},0) = \begin{cases}
s^{i,j}_1, & \text{if}\ r_j(p_i) \neq \bot\\
f_0, & \text{otherwise}
\end{cases}$\\
for all $j \in \{1,\ldots,m-1\}$.
\item $\delta_\mathcal{B}(c^{i,m-1},1) = \begin{cases}
s^{i,m}_1, & \text{if}\ r_m(p_i) \neq \bot\\
f_0, & \text{otherwise}.
\end{cases}$
\end{itemize}

The transitions of the rule letters on $Q_\mathcal{B} \setminus Q_\mathrm{P}$ are simply replaced with one transition labeled by $1$; all of those transitions in $\mathcal{A}$ are the same for every $r_j \in R$ and lead to either $f_0$ or $q_\mathrm{s}$.

The correctness of the binarization is observed through the following lemmas.
First, we state that for a word that is $f_0$-omitting in the original automaton, the corresponding binary encoding works the same in the binarized one.
Then we have a similar statement for not $f_0$-omitting words but restricted to keeping the property of being $f_0$-omitting.

\begin{lemma}\label{lem:fpfl-binary_f0-omitting}
If a word $w \in \Sigma^*$ is $f_0$-omitting for a subset $C \subseteq Q_{\mathcal{A}^*} \setminus Q_\mathrm{F}$ in $\mathcal{A}^*$, then $\delta_{\mathcal{B}^*}(C,\mathit{bin}(w)) = \delta_{\mathcal{A}^*}(C,w)$.
\end{lemma}
\begin{proof}
This can be observed by analyzing the transitions of each letter in $\Sigma$ from each state in $Q_{\mathcal{A}^*} \setminus Q_\mathrm{F}$ in $\mathcal{A}^*$ together with the actions of the corresponding binary encodings in $\mathcal{B}^*$.
\end{proof}

From Lemma~\ref{lem:fpfl-binary_f0-omitting}, in particular, if a word $w \in \Sigma^*$ is $f_0$-omitting for a subset $C$ in $\mathcal{A}$, then $\mathit{bin}(w)$ is also $f_0$-omitting for $C$ in $\mathcal{B}^*$.

\begin{lemma}\label{lem:fpfl-binary_not_f0-omitting}
If a word $w \in \Sigma^*$ is not $f_0$-omitting for a subset $C \subseteq Q_{\mathcal{A}^*} \setminus Q_\mathrm{F}$ in $\mathcal{A}^*$, then $\mathit{bin}(w)$ is not $f_0$-omitting for $C$ in $\mathcal{B}^*$.
\end{lemma}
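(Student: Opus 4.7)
The plan is to leverage Lemma~\ref{lem:fpfl-binary_f0-omitting} in a ``one more step'' fashion. Given that $w$ is not $f_0$-omitting for $C$ in $\mathcal{A}^*$, I would pick $u$ to be the \emph{shortest} prefix of $w$ for which $f_0 \in \delta_{\mathcal{A}^*}(C,u)$. By minimality, $u$ is non-empty and decomposes as $u = u' a$ for some prefix $u'$ and some letter $a \in \Sigma$, with $u'$ being $f_0$-omitting for $C$. In particular, $C' = \delta_{\mathcal{A}^*}(C, u') \subseteq Q_{\mathcal{A}^*} \setminus Q_\mathrm{F}$ (otherwise $f_0$ would have appeared earlier via an $\varepsilon$-transition from a final forcing state), so Lemma~\ref{lem:fpfl-binary_f0-omitting} applies and gives $\delta_{\mathcal{B}^*}(C, \mathrm{enc}(u')) = C'$, where $\mathrm{enc}(\cdot)$ denotes the binary encoding.

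Next, I would argue that processing $\mathrm{enc}(a)$ starting from the active subset $C'$ in $\mathcal{B}^*$ reaches the same subset as processing $a$ in $\mathcal{A}^*$, namely a subset containing $f_0$. For $a = \alpha$ this is immediate since $\mathrm{enc}(\alpha) = 0$ and the $0$-transitions in $\mathcal{B}^*$ were defined to copy the $\alpha$-transitions of $\mathcal{A}^*$. For $a = r_i$ the encoding is either $1^i 0$ (if $i \le m-1$) or $1^m$, and the binarization was engineered so that reading this encoded block from any state of $Q_{\mathcal{A}^*} \setminus Q_\mathrm{F}$ performs exactly the transition of $r_i$ in $\mathcal{A}^*$. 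This is where the two kinds of states in $Q_\mathcal{B} \setminus Q_\mathcal{A}$ play their role: from states in $Q_\mathrm{P}$ the intermediate auxiliary states route the computation so that the full block mimics $r_i$, and from all other states of $Q_\mathcal{B} \setminus Q_\mathrm{P}$ the single $1$-transition already collapses all rule letters, so any remaining $1$s or the final $0$ only keep the computation consistent. Hence $\delta_{\mathcal{B}^*}(C, \mathrm{enc}(u)) \supseteq \delta_{\mathcal{A}^*}(C, u) \ni f_0$.

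Since $\mathrm{enc}(u)$ is a prefix of $\mathrm{enc}(w) = w'$, I would conclude that $w'$ is not $f_0$-omitting for $C$ in $\mathcal{B}^*$, which is exactly the statement of the lemma.

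The only genuine obstacle I anticipate is the verification that the intermediate auxiliary states introduced in $Q_\mathcal{B} \setminus Q_\mathcal{A}$ neither create nor destroy activity of $f_0$ \emph{before} the encoded block is fully consumed: they must be non-final, they must not carry $\varepsilon$-transitions to $q_0$, and their action on $0$ and $1$ must extend the intended simulation. All of this follows by direct inspection of the binarization rules given just above the lemma, so this part is routine once it is written out state-by-state, analogously to the proof of Lemma~\ref{lem:fpfl-binary_f0-omitting}.
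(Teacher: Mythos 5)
Your skeleton --- take the shortest prefix $u'a$ of $w$ that activates $f_0$, apply Lemma~\ref{lem:fpfl-binary_f0-omitting} to $u'$, then handle the last letter $a$ --- is the same as the paper's. However, the key step for $a \in R$ is false as stated. It is not true that reading the encoded block $\mathrm{enc}(r_j)$ from an arbitrary state of $Q_{\mathcal{A}^*}\setminus Q_\mathrm{F}$ ``performs exactly the transition of $r_j$'', and the inclusion $\delta_{\mathcal{B}^*}(C,\mathrm{enc}(u)) \supseteq \delta_{\mathcal{A}^*}(C,u)$ you conclude with does not hold in general. Concretely (for $m\ge 2$), take $C=\{q_0\}$ and $u=r_1$, encoded by $10$: in $\mathcal{A}^*$ we have $\delta_{\mathcal{A}^*}(\{q_0\},r_1)=\{f_0,q_0\}$, while in $\mathcal{B}^*$ the letter $1$ yields $\{f_0,q_0\}$ and the final $0$ then moves $f_0$ to $f_1$ and $q_0$ to $p_1$, so $\delta_{\mathcal{B}^*}(\{q_0\},10)=\{f_1,p_1,q_0\}$, which does not contain $f_0$. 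In the very case that matters for this lemma --- an active $q_0$ or setting state being sent to $f_0$ by a rule letter --- the state $f_0$ is activated strictly inside the encoded block and does not persist to the end of the block, so your ``same subset at the end'' claim fails; your anticipated ``only genuine obstacle'' (the auxiliary states on the $Q_\mathrm{P}$-chains) is not where the difficulty lies.

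The lemma itself survives because being not $f_0$-omitting only requires \emph{some} prefix of $w'$ to activate $f_0$, and this forces the finer case split that the paper makes: if $a=\alpha$, the prefix $\mathrm{enc}(u')0$ activates $f_0$; if $a\in R$ and the state mapped to $f_0$ is $q_0$ or a setting state, then already $\mathrm{enc}(u')1$ activates $f_0$, since in $\mathcal{B}$ all rule transitions outside $Q_\mathrm{P}$ are collapsed to the single $1$-transition (every rule encoding starts with $1$); only if $a=r_j$ maps some $p_i$ with $r_j(p_i)=\bot$ to $f_0$ does one use the full prefix $\mathrm{enc}(u')\mathrm{enc}(a)$, via the auxiliary chain that makes the whole block act as $r_j$ on $Q_\mathrm{P}$. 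Replacing your exact-simulation claim by this three-way analysis repairs the argument; as a minor point, your parenthetical reason for $\delta_{\mathcal{A}^*}(C,u')\cap Q_\mathrm{F}=\emptyset$ is also off ($\varepsilon$-transitions go to $q_0$, not into $Q_\mathrm{F}$; the correct reason is that $f_x$ with $x\ge 1$ is only reachable from $f_{x-1}$ by $\alpha$), though that fact is not needed to invoke Lemma~\ref{lem:fpfl-binary_f0-omitting}.
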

\begin{proof}
Suppose that a prefix of $w$ activates $f_0$ when applied to $\{q_0\}$; let $ua$ be a shortest such a prefix, where $u \in \Sigma^*$ and $a \in \Sigma$.
Since $u$ is $f_0$-omitting, from Lemma~\ref{lem:fpfl-binary_f0-omitting}, we know that $T = \delta_{\mathcal{A}^*}(C,u) = \delta_{\mathcal{B}^*}(C,\mathit{bin}(u))$.
If $a = \alpha$, then $0$ applied to $T$ activates $f_0$ in $\mathcal{B}^*$, as $\alpha$ does it in $\mathcal{A}^*$.
If $a \in R$, then in $\mathcal{A}^*$, we can have $q_0 \in T$, a state $s^{i,j}_k \in T$, or a state $p_i \in T$ mapped to $f_0$ by the transition of $a$.
In the first two cases, in $\mathcal{B}^*$, letter $1$ activates $f_0$ from $T$, and in the third case, $\mathit{bin}(u)\mathit{bin}(a)$ activates $f_0$ from $T$.
Both $\mathit{bin}(u)1$ and $\mathit{bin}(u)\mathit{bin}(a)$ are prefixes of $\mathit{bin}(w)$.
\end{proof}

From Lemma~\ref{lem:fpfl-binary_f0-omitting}, one direction of the cofiniteness equivalence follows easily.
For the second, we still need to consider binary words that are not complete encodings of words over the original alphabet.
We also need to observe that Lemma~\ref{lem:fpfl-irrevocably_accepted} holds for $\mathcal{B}^*$ as well.

\begin{lemma}\label{lem:fpfl-binary_equivalence}
The language of $\mathcal{B}^*$ is cofinite if and only the language of $\mathcal{A}^*$ is cofinite.
\end{lemma}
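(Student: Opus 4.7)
I would prove the two directions separately, using Lemmas~\ref{lem:fpfl-binary_f0-omitting} and~\ref{lem:fpfl-binary_not_f0-omitting} to transfer the $f_0$-omitting property between the two automata. For the forward direction, take an infinite family of non-accepted words $w$ in $\mathcal{A}^*$; by the contrapositive of Lemma~\ref{lem:fpfl-irrevocably_accepted}, each $w$ is $f_0$-omitting, so Lemma~\ref{lem:fpfl-binary_f0-omitting} produces its binary encoding $w'$ that is $f_0$-omitting in $\mathcal{B}^*$ with $\delta_{\mathcal{B}^*}(\{q_0\},w') = \delta_{\mathcal{A}^*}(\{q_0\},w)$. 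This set contains no final state, hence $w'$ is non-accepted in $\mathcal{B}^*$, and injectivity of the encoding preserves distinctness of the family.

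For the converse I first need an analog of Lemma~\ref{lem:fpfl-irrevocably_accepted} for $\mathcal{B}^*$: every binary word that activates $f_0$ is irrevocably accepted. The two key structural facts in $\mathcal{B}$ are $q_0 \xrightarrow{1} f_0$ (inherited from $\delta(q_0, r_j) = f_0$ and the uniform binarization of rule-letter transitions out of $Q_\mathcal{B} \setminus Q_\mathrm{P}$) and $p_\ell \xrightarrow{0} f_0$ (inherited from the $\alpha$-chain on $Q_\mathrm{P}$). Together with the forcing chain $f_0 \xrightarrow{0} f_1 \xrightarrow{0} \cdots \xrightarrow{0} f_\ell$, they let me maintain by induction on the suffix length the invariant that every subset reachable from $\{f_0\}$ by a binary word contains $q_0$, some forcing state $f_j$, and the accumulated tail $\{p_1,\ldots,p_j\}$; a case split on the next letter verifies the propagation, with $p_\ell \xrightarrow{0} f_0$ compensating the sink step $f_\ell \xrightarrow{0} q_\mathrm{s}$ and $q_0 \xrightarrow{1} f_0$ restarting the chain after any $1$.

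Armed with this analog, take a non-accepted $w'$ in $\mathcal{B}^*$, which must therefore be $f_0$-omitting, and decompose $w' = u' v'$ where $u'$ is the longest prefix that is a valid encoding of some $u \in \Sigma^*$ and $v' = 1^i$ for some $0 \le i \le m-1$ (a proper prefix of a code word). The contrapositive of Lemma~\ref{lem:fpfl-binary_not_f0-omitting} applied to $u'$ gives that $u$ is $f_0$-omitting in $\mathcal{A}^*$, and Lemma~\ref{lem:fpfl-binary_f0-omitting} identifies $\delta_{\mathcal{A}^*}(\{q_0\}, u) = \delta_{\mathcal{B}^*}(\{q_0\}, u')$. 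If $u$ were accepted in $\mathcal{A}^*$, then $q_0$ would lie in this common set, and iterating $q_0 \xrightarrow{1} f_0 \xrightarrow{\varepsilon} q_0$ through $v' = 1^i$ would keep $q_0$ in the active set, contradicting that $w'$ is non-accepted. Hence $u$ is non-accepted; since at most $m$ binary words share a decoding, an infinite family of non-accepted $w'$ yields an infinite family of non-accepted $u$ in $\mathcal{A}^*$. The main obstacle is the $\mathcal{B}^*$-analog of Lemma~\ref{lem:fpfl-irrevocably_accepted}: unlike the $\alpha$-only case, one must handle both binary letters, and the sink step $f_\ell \xrightarrow{0} q_\mathrm{s}$ is what forces the invariant to carry an accumulated tail of $Q_\mathrm{P}$-states so that $p_\ell \xrightarrow{0} f_0$ can regenerate a forcing state at the right moment.
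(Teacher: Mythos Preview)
Your proposal is correct. The forward direction matches the paper's argument essentially verbatim. For the converse, however, you take a genuinely different route. The paper, after extracting the decodable prefix $u'$ and establishing that the decoded word $u$ is $f_0$-omitting in $\mathcal{A}^*$, does \emph{not} argue that $u$ itself is rejected; instead it takes $u$ long enough and re-invokes the combinatorial argument from the proof of Lemma~\ref{lem:fpfl-equivalence} to conclude that the underlying set rewriting system is immortal, and only then deduces non-cofiniteness of $\mathcal{A}^*$. Your argument is more direct: you use the identification $\delta_{\mathcal{A}^*}(\{q_0\},u)=\delta_{\mathcal{B}^*}(\{q_0\},u')$ together with the transition $q_0\xrightarrow{1}f_0$ in $\mathcal{B}$ to show that $u$ is actually rejected by $\mathcal{A}^*$, and then a pigeonhole over the at most $m$ possible tails $1^i$ turns infinitely many rejected $w'$ into infinitely many rejected $u$. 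This avoids the detour through the set rewriting system and keeps the lemma self-contained; the paper's route is terser on the page because it just cites earlier machinery, but it makes the lemma depend on that machinery. You are also more explicit than the paper about \emph{why} the $\mathcal{B}^*$-analog of Lemma~\ref{lem:fpfl-irrevocably_accepted} holds (the paper merely asserts it); your invariant tracking $q_0$, a forcing state $f_j$, and the tail $\{p_1,\dots,p_j\}$ is the right bookkeeping, with $p_\ell\xrightarrow{0}f_0$ replenishing the forcing chain at the moment $f_\ell$ falls into the sink.
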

\begin{proof}
From Lemma~\ref{lem:fpfl-binary_f0-omitting} and by the fact that all not $f_0$-omitting words are accepted by $\mathcal{A}^*$, we know that if a word $w \in \Sigma^*$ is not accepted by $\mathcal{A}^*$, then $\mathit{bin}(w)$ is not accepted by $\mathcal{B}^*$.
Thus, if infinitely many words are not accepted by $\mathcal{A}^*$, then the language of $\mathcal{B}^*$ is also not cofinite.

Assume now that the language of $\mathcal{B}^*$ is not cofinite.
For an integer $t \ge m$, let $w' \in \{0,1\}^*$ be a binary word not accepted by $\mathcal{B}^*$ that has length at least $t$.
Let $u'$ be the longest prefix of $w'$ that properly encodes a word $u \in \Sigma^*$, i.e., $\mathit{bin}(u)=u'$; then $u'$ is shorter by at most $m-1$ letters than $w'$.
Observe that Lemma~\ref{lem:fpfl-irrevocably_accepted} holds for $\mathcal{B}^*$; hence, since $w'$ is not accepted, $u'$ must be $f_0$-omitting.
From Lemma~\ref{lem:fpfl-binary_not_f0-omitting}, we know that $u$ also must be $f_0$-omitting.
The length of $u$ is at least $(t-m+1)/m$, since each letter from $\Sigma$ is encoded by at most $m$ letters from $\{0,1\}$.
Now we follow similarly as in the proof of Lemma~\ref{lem:fpfl-equivalence}.
From Lemma~\ref{lem:fpfl-initial_must_be_simulating}, we conclude that $u$ has a prefix of length at least $|u|-\ell$ that is not accepted.
Hence, for every length $k$, we can choose a suitable $t \ge m(k+\ell)+m-1$ to find a non-accepted word of length at least $k$, thus there are infinitely many non-accepted words by $\mathcal{A}^*$.
\end{proof}

This finishes the reduction to the case of a binary DFA.

\subsection{List of words}\label{subsec:fpfl-list_of_words}

Finally, we count the largest length and the number of words in the language accepted by $\mathcal{B}$.

\begin{lemma}\label{lem:fpfl-binary_numbers}
In the language of $\mathcal{B}$, the maximum length of words is equal to $3\ell+m+1$ and the number of words is at most $m \ell^2 + (1 + \ell m(2+\ell))(1+\ell) \in \varTheta(m^2 \ell^2)$.
\end{lemma}
\begin{proof}
The maximum length of words accepted by our binary DFA $\mathcal{B}$ is equal to 
$3\ell+m+1$, which is the length of the longest path from $q_0$ to a final state:
$q_0 \xrightarrow{0^{\ell}} p_\ell \xrightarrow{1^m} s^{\ell,m}_\ell \xrightarrow{0^{\ell-1}} s^{\ell,m}_1 \xrightarrow{0} q_\mathrm{g} \xrightarrow{0} f_0 \xrightarrow{0^{\ell}} f_\ell$.

For the number of words in the recognized language, we consider all final states.
The first type of final states is setting states.
Each setting state is reachable from $q_0$ by a unique word, which gives at most $m \ell^2$ words.
The second type is forcing states.
A forcing state $f_i$ is reachable by different words, but all such words have a prefix for reaching $f_0$ followed by the unique suffix $\alpha^i$ to map $f_0$ to $f_i$.
For the number of the first parts, observe that all the states in $Q_\mathcal{B} \setminus (Q_\mathrm{F} \cup \{q_\mathrm{g},q_\mathrm{s}\})$, whose number is at most $1+\ell m(1+\ell)$, are reachable by a unique word, and $q_\mathrm{g}$ is reachable by at most $m\ell$ words.
Also, from each of these states, only one transition leads directly to $f_0$, with the possible exception of the last choice states $c^{i,m-1}$ when both rules $r_{m-1}$ and $r_m$ are illegal for an element $p_i$; however, in this case, we count fewer words than when they are legal, as the chain of setting states does not exist.
Thus, combining with the second parts, there are at most $(1 + \ell m(2+\ell))(1+\ell)$ words of this type.
\end{proof}

We conclude with

\begin{theorem}\label{thm:fpfl-binary_dfa}
Problem~\ref{pbm:fpfl} is PSPACE-complete if $L$ is a finite list of binary words.
\end{theorem}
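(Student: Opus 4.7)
The plan is to convert the binary DFA $\mathcal{B}$ of Section~\ref{subsec:fpfl-binarization} into an explicit finite list of binary words and then chain the equivalences already established to inherit PSPACE-hardness from Theorem~\ref{thm:immortality_set_rewriting}. The crucial enabling fact is that $L(\mathcal{B})$ is finite and of polynomial size in $|P|+|R|$, which Lemma~\ref{lem:fpfl-binary_numbers} already makes explicit.

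The first step I would carry out is to verify that $\mathcal{B}$ is acyclic on $Q_\mathcal{B} \setminus \{q_\mathrm{s}\}$. Inspection of the transition function of $\mathcal{A}$ reveals a strict layered order $q_0 \prec Q_\mathrm{P} \prec \text{setting chains} \prec q_\mathrm{g} \prec Q_\mathrm{F} \prec q_\mathrm{s}$ in which every transition either moves strictly forward in this order or lands in the non-final sink $q_\mathrm{s}$; the binarization only refines single letters into short paths through fresh non-final intermediate states and therefore preserves acyclicity. Hence $L(\mathcal{B})$ is finite, and by Lemma~\ref{lem:fpfl-binary_numbers} it contains $O(\ell^2 m)$ words of length at most $3\ell+m+1$. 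A depth-first traversal of $\mathcal{B}$ from $q_0$ then enumerates $L(\mathcal{B})$ explicitly in polynomial time, producing a list $L$ of binary words of polynomial total length in $|P|+|R|$.

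The second step is to chain the equivalences. Since $L = L(\mathcal{B})$ as sets of words, $L^* = L(\mathcal{B}^*)$, so Lemma~\ref{lem:fpfl-binary_equivalence} equates cofiniteness of $L^*$ with cofiniteness of $L(\mathcal{A}^*)$, and Lemma~\ref{lem:fpfl-equivalence} equates this in turn with immortality of the original set rewriting system. Composing with the polynomial-time construction of $\mathcal{B}$ from a set rewriting system yields a polynomial-time reduction from Problem~\ref{pbm:immortality_set_rewriting} (PSPACE-complete by Theorem~\ref{thm:immortality_set_rewriting}) to Problem~\ref{pbm:fpfl} restricted to finite lists of binary words, as desired.

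There is no real obstacle: all the creative content lies in the binarization and equivalence lemmas of Sections~\ref{subsec:fpfl-dfa} and~\ref{subsec:fpfl-binarization}. The only residual point requiring care is the acyclicity check that turns $L(\mathcal{B})$ into an explicit polynomial list, which is a simple structural observation combined with the counts already present in Lemma~\ref{lem:fpfl-binary_numbers}.
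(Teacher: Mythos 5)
Your proposal is correct and matches the paper's own argument, which likewise treats this theorem as an immediate consequence of reading off the finite language of the binary DFA $\mathcal{B}$ as an explicit list of polynomially many, polynomially long binary words (Lemma~\ref{lem:fpfl-binary_numbers}) and chaining Lemma~\ref{lem:fpfl-binary_equivalence} and Lemma~\ref{lem:fpfl-equivalence} with the PSPACE-completeness of immortality (Theorem~\ref{thm:immortality_set_rewriting}). The only nit is quantitative: the bound of Lemma~\ref{lem:fpfl-binary_numbers} gives $O(m\ell^3)$ rather than $O(\ell^2 m)$ words, which is still polynomial and changes nothing in the reduction.
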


Using the construction, we can also infer the hardness for every fixed-sized alphabet larger than binary.
For this, it is sufficient to add a suitable number of additional letters to $\mathcal{B}$ with the transitions mapping $Q_\mathcal{B} \setminus (Q_\mathrm{F} \cup \{q_\mathrm{s}\})$ to $f_0$ and mapping $Q_\mathrm{F} \cup \{q_\mathrm{s}\}$ to $q_\mathrm{s}$, thus acting as rule letters that are always illegal in the set rewriting system.

\section{The factor universality problem}\label{sec:fu}

We follow similarly as in Section~\ref{sec:fpfl}.
We reduce from Problem~\ref{pbm:emptying_set_rewriting} (Emptying Set Rewriting) to Problem~\ref{pbm:factor_universality} (Factor Universality for a Finite Set of Words) when $L$ is given as a finite list of binary words.

\subsection{DFA construction}

In the first step, we reduce to Problem~\ref{pbm:factor_universality} when $L$ is specified as a DFA instead of a list of words.

We slightly modify the DFA construction $\mathcal{A}$ from Subsection~\ref{subsec:fpfl-dfa} as follows.
We remove the last forcing state $f_\ell$ and end the chain of the forcing states with $f_{\ell-1}$.
Thus, the set of forcing states $Q_\mathrm{F}$ becomes $\{f_i \mid i \in \{0,1,\ldots,\ell-1\}\}$, and we redefine the transition $\delta_\mathcal{A}(f_{\ell-1},\alpha) = q_\mathrm{s}$.
There are no other differences.

\subsubsection{The mechanism}

The idea of the modified construction is as follows.

As before, we build the NFA $\mathcal{A}^*$ recognizing the language $L^*$, where $L$ is the language of $\mathcal{A}$.
In the NFA $\mathcal{A}^*$, all states are reachable from the initial state $q_0$, and since we have also removed the sink state $q_\mathrm{s}$, the NFA meets the criterion for factor universality from Proposition~\ref{pro:fu_criterion}.
Thus the language of $\mathcal{A}^*$ is factor universal if and only if there is a $Q_{\mathcal{A}^*}$-emptying word.

Simulating words in our NFA correspond to applications of sequences of rules in the set rewriting system in the same way as before in Subsection~\ref{subsec:fpfl-dfa}.
In the modified construction, the forcing states have the property that whenever $f_0$ is activated, the only way to get rid of all active forcing states is to make the whole set $Q_\mathrm{P}$ active again.
When $f_0$ is active, this is done by applying the word $\alpha^\ell$.
In this way, the construction ensures that to map the whole set $Q_\mathrm{P}$ to the empty set, there must exist a simulating word whose sequences of rules is $P$-emptying in the set rewriting system.
A special case occurs at the beginning, where we start with all the states $Q_{\mathcal{A}^*}$ and first have to reduce the active set of states to $Q_\mathrm{P}$.

\subsubsection{Correctness}

The correctness is observed through the following lemmas.

Recall that a word $w \in \Sigma^*$ is \emph{$f_0$-omitting for a subset} $C \subseteq Q_{\mathcal{A}^*}$ if there is no prefix $u$ of $w$ such that $f_0 \in \delta_{\mathcal{A}^*}(C,u)$.
Since in this problem our starting set is $Q_\mathrm{P}$ instead of $\{q_0\}$, we redefine that a word is simply \emph{$f_0$-omitting} if it is $f_0$-omitting for $Q_\mathrm{P}$.

We start with a simple observation, which follows directly from the construction and allows reducing the problem of emptying the whole set of states to emptying $Q_\mathrm{P}$.

\begin{lemma}\label{lem:fu-beginning}
We have:
\begin{enumerate}
\item $\delta_{\mathcal{A}^*}(Q_{\mathcal{A}^*},r_j^2) = \{f_0,q_0\}$ for each $r_j \in R$, and
\item $\delta_{\mathcal{A}^*}(\{f_0\},\alpha^\ell) = Q_\mathrm{P}$.
\end{enumerate}
\end{lemma}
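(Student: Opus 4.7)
The plan is to work out both equalities by directly tracing the NFA $\mathcal{A}^*$, keeping the $\varepsilon$-closures induced by transitions from final states to $q_0$ explicit and remembering that transitions going to $q_\mathrm{s}$ in the DFA $\mathcal{A}$ are erased in $\mathcal{A}^*$.

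For part (1), I would first collect the image of $Q_{\mathcal{A}^*}$ under a single $r_1$. Using the DFA's transition table: $q_0 \mapsto f_0$; every setting state $s^{i,j}_x$ maps to $f_0$; each $p_i$ maps to $s^{i,1}_\ell$ when $r_1(p_i) \neq \bot$ and to $f_0$ otherwise; the guard state $q_\mathrm{g}$ and every forcing state $f_k$ go to $q_\mathrm{s}$, hence drop out. Because $f_0$ is final (as is every $s^{i,1}_\ell$ with $p_\ell \in r_1(p_i)$), the $\varepsilon$-closure adjoins $q_0$. Applying $r_1$ a second time: the active $p$-free states are all mapped to $f_0$ or to $q_\mathrm{s}$ (specifically, $q_0 \mapsto f_0$, every $s^{i,1}_\ell \mapsto f_0$, and $f_0 \mapsto q_\mathrm{s}$), leaving exactly $\{f_0\}$, whose $\varepsilon$-closure is $\{f_0, q_0\}$.

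For part (2), I would proceed by induction on $k$, starting from the closure of $\{f_0\}$ equal to $\{f_0, q_0\}$, and show that for every $1 \le k \le \ell-1$,
\[
\delta_{\mathcal{A}^*}(f_0, \alpha^k) \;=\; \{f_k,\; p_1, p_2, \ldots, p_k,\; q_0\}.
\]
The inductive step uses $\delta(f_{k-1},\alpha) = f_k$, $\delta(p_i, \alpha) = p_{i+1}$, $\delta(q_0, \alpha) = p_1$, and the fact that $f_k$ is final so $q_0$ is picked up in the closure. At $k=\ell$ the situation changes because of the modification in Section~\ref{sec:fu}: the chain of forcing states now ends at $f_{\ell-1}$, so $\delta(f_{\ell-1},\alpha) = q_\mathrm{s}$ is erased. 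The remaining states shift, yielding $\{p_1,\ldots,p_\ell\} = Q_\mathrm{P}$, and since no $p_i$ is final, the $\varepsilon$-closure adds nothing.

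The only delicate point, and the reason the modified construction is needed in this section, is ensuring that at the final step the closure does not reintroduce $q_0$; truncating the forcing chain at $f_{\ell-1}$ is exactly what prevents $f_\ell$ (a final state) from appearing after $\alpha^\ell$. Everything else is a mechanical unfolding of the transition function of $\mathcal{A}^*$.
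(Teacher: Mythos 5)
Your proof is correct: the paper states this lemma without proof, and your direct trace of the transition function of $\mathcal{A}^*$ (keeping track of the $\varepsilon$-closures adjoining $q_0$ and of the transitions into $q_\mathrm{s}$ that disappear in the NFA) is exactly the intended verification. You also correctly pinpoint the one delicate point, namely that truncating the forcing chain at $f_{\ell-1}$ (so that $\delta(f_{\ell-1},\alpha)=q_\mathrm{s}$) is what makes $\delta_{\mathcal{A}^*}(f_0,\alpha^\ell)$ equal exactly $Q_\mathrm{P}$ rather than $Q_\mathrm{P}\cup\{f_\ell,q_0\}$ as in the construction of Section~\ref{sec:fpfl}.
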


We show that when $f_0$ is activated, the only way to get rid of all forcing states is to activate the whole $Q_\mathrm{P}$ at some point.

\begin{lemma}\label{lem:fu-cleaning_f0}
Let $C \subseteq Q_{\mathcal{A}^*}$ be such that $f_0 \in C$, and let $w$ be a word such that $\delta_{\mathcal{A}^*}(C,w) \cap Q_\mathrm{F} = \emptyset$.
There exists a prefix $u$ of $w$ such that $Q_\mathrm{P} \subseteq \delta_{\mathcal{A}^*}(C,u)$.
\end{lemma}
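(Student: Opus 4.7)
The plan is to set $w=a_1 a_2\cdots a_n$, write $C_t=\delta_{\mathcal{A}^*}(C,a_1\cdots a_t)$, and let $t$ be the smallest index with $C_t\cap Q_\mathrm{F}=\emptyset$. Such $t$ exists ($t\le n$) and satisfies $t\ge 1$ since $f_0\in C=C_0$. I will show $u=a_1\cdots a_t$ is the desired prefix by proving $Q_\mathrm{P}\subseteq C_t$.

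First, I will analyze the last step. By minimality of $t$, some forcing state lies in $C_{t-1}$; since all forcing states are final, $q_0$ belongs to the $\varepsilon$-closure of $C_{t-1}$. If $a_t$ were a rule letter $r_j$, then $\delta(q_0,r_j)=f_0$ would put $f_0$ into $C_t$, a contradiction. Hence $a_t=\alpha$. Under $\alpha$, any $f_i$ with $i<\ell-1$ shifts to $f_{i+1}$, and $\delta(p_\ell,\alpha)=f_0$, so the disappearance of all forcing states at step $t$ forces $C_{t-1}\cap Q_\mathrm{F}=\{f_{\ell-1}\}$ and $p_\ell\notin C_{t-1}$.

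The core step is a backward trace of $f_{\ell-1}$. Inspection of $\delta$ shows the only transition into $f_i$ for $i\ge 1$ is $f_{i-1}\xrightarrow{\alpha}f_i$, so induction on $k$ gives $f_{\ell-1-k}\in C_{t-1-k}$ with $a_{t-k}=\alpha$ for $k=0,1,\ldots,\ell-1$, culminating in $f_0\in C_{t-\ell}$. I must first rule out $t<\ell$: otherwise the trace would halt at $C_0$ with $f_{\ell-t}\in C$ and $a_1=\cdots=a_{t-1}=\alpha$, and the $f_0\in C$ would also propagate along these $\alpha$'s to $f_{t-1}\in C_{t-1}$; since $t-1<\ell-1$, this is a forcing state distinct from $f_{\ell-1}$, contradicting the previous paragraph.

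Finally, I will propagate forward from $f_0\in C_{t-\ell}$. For each $k\in\{0,\ldots,\ell-1\}$, the state $f_k$ lies in $C_{t-\ell+k}$ and is final, so $q_0$ belongs to its $\varepsilon$-closure; the subsequent $\alpha$ then injects $p_1$ via $q_0\xrightarrow{\alpha}p_1$. An induction on $k$, combining these injections with the shifts $p_j\xrightarrow{\alpha}p_{j+1}$, yields $\{p_1,\ldots,p_k\}\subseteq C_{t-\ell+k}$; at $k=\ell-1$ this gives $\{p_1,\ldots,p_{\ell-1}\}\subseteq C_{t-1}$. Applying the final $\alpha$ at step $t$ produces $\{p_2,\ldots,p_\ell\}$ from the shifts together with $p_1$ from $q_0\xrightarrow{\alpha}p_1$ (using that $f_{\ell-1}$ keeps $q_0$ in the closure of $C_{t-1}$), so $Q_\mathrm{P}\subseteq C_t$ as required. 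I expect the main obstacle to be handling the backward trace cleanly despite the many possible sources of $f_0$ in the construction; the unique-predecessor property of $f_i$ for $i\ge 1$ is what makes the trace work, and the $t<\ell$ auxiliary argument above is what closes the remaining gap.
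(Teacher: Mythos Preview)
Your proof is correct, but it takes a different route from the paper's. The paper first observes that it suffices to treat the case $C=\{f_0\}$ (by monotonicity of $\delta_{\mathcal{A}^*}$: if $Q_\mathrm{P}\subseteq\delta_{\mathcal{A}^*}(\{f_0\},u)$ then also $Q_\mathrm{P}\subseteq\delta_{\mathcal{A}^*}(C,u)$), then replaces $w$ by a shortest word with the property, so that no non-empty prefix reactivates $f_0$. From there it argues \emph{forward}: any prefix $\alpha^i r_j$ with $i<\ell$ would send the active $q_0$ to $f_0$, so the word must begin with $\alpha^\ell$, and one computes directly $\delta_{\mathcal{A}^*}(\{f_0\},\alpha^\ell)=Q_\mathrm{P}$. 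You instead keep the general $C$ and $w$, locate the first time $t$ the forcing states disappear, and run a \emph{backward} trace using the unique-predecessor property of $f_i$ for $i\ge 1$ to exhibit an $\alpha^\ell$-block ending at step $t$, then propagate forward to obtain $Q_\mathrm{P}\subseteq C_t$. The paper's reduction to $C=\{f_0\}$ eliminates the need for the backward trace and the separate $t<\ell$ case, making for a shorter argument; your approach, while longer, has the minor advantage of explicitly identifying which prefix of the \emph{given} $w$ works without passing to a shortest one. One small omission: when you argue $C_{t-1}\cap Q_\mathrm{F}=\{f_{\ell-1}\}$, you should also note $q_\mathrm{g}\notin C_{t-1}$ since $\delta(q_\mathrm{g},\alpha)=f_0$; this does not affect the rest of your argument.
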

\begin{proof}
It is enough to prove the lemma for $C=\{f_0\}$.
Let $w$ be a shortest word such that $\delta_{\mathcal{A}^*}(\{f_0\},w) \cap Q_\mathrm{F} = \emptyset$.
Hence, there is no non-empty prefix $u$ of $w$ such that $f_0 \in \delta_{\mathcal{A}^*}(\{f_0\},u)$, as otherwise also $\delta_{\mathcal{A}^*}(\{f_0\},u') \cap Q_\mathrm{F} = \emptyset$ where $uu'=w$, and $u'$ would be shorter than $w$.

Observe that $\delta_{\mathcal{A}^*}(\{f_0\},\alpha^i) = \{f_i,q_0,p_1,\ldots,p_i\}$ for all $0 \le i \le \ell-1$.
Thus, if $w$ would start with $\alpha^i r_j$ for $0 \le i \le \ell-1$ and some rule letter $r_j$, then the active state $q_0$ would be mapped to $f_0$ by the transition of $r_j$, which yields a contradiction.
The only remaining possibility is that $w$ begins with the prefix $u=\alpha^\ell$, which is such that $\delta_{\mathcal{A}^*}(\{f_0\},u)=Q_\mathrm{P}$.
\end{proof}

We show the properties of a simulating word, in particular, that they correspond to rule applications in the set rewriting system.
A special case occurs when we reach the empty set; then we do not activate the guard state.

\begin{lemma}\label{lem:fu-simulating_word}
Let $C \subseteq Q_\mathrm{P} \cup \{q_\mathrm{g}\}$ be such that $S = C \cap Q_\mathrm{P}$ is non-empty, and let $w = r_{i_1} \alpha^\ell \ldots r_{i_k} \alpha^\ell$ ($k \ge 1$) be a simulating word for $S$. Then $w$ is $f_0$-omitting and
\[ \delta_{\mathcal{A}^*}(C,w) = \begin{cases}
  (S \cdot r_{i_1}\cdot\ldots\cdot r_{i_k}) \cup \{q_\mathrm{g}\}, & \text{if}\ S \cdot r_{i_1}\cdot\ldots\cdot r_{i_{k-1}} \neq \emptyset \\
  \emptyset = (S \cdot r_{i_1}\cdot\ldots\cdot r_{i_k}), & \text{otherwise.}
\end{cases}\]
\end{lemma}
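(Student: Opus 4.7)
The plan is to mimic the inductive argument in the proof of Lemma~\ref{lem:fpfl-simulating_word}, but with added bookkeeping to cover the possibility that the $Q_\mathrm{P}$-part empties along the way, which was precluded there by the non-emptiable hypothesis. I would write $S_j = S \cdot r_{i_1} \cdots r_{i_j}$ and $C_j$ for the active-state set after applying the prefix $r_{i_1}\alpha^\ell \cdots r_{i_j}\alpha^\ell$ to $C$, and induct on $k$.

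For the base case $k=1$, I would replay the routing analysis of Lemma~\ref{lem:fpfl-simulating_word} inside the modified automaton of Section~\ref{sec:fu}: the letter $r_{i_1}$ sends each $p_i \in S$ to $s^{i,i_1}_\ell$ (legality being built into \emph{simulating}) and sends $q_\mathrm{g}$, if present in $C$, to $q_\mathrm{s}$, which disappears in $\mathcal{A}^*$. The following $\alpha^\ell$ advances each setting chain; whenever a final $s^{i,i_1}_h$ is encountered, it $\varepsilon$-activates $q_0$, which is then slid by the remaining $\alpha^h$ to $p_h \in r_{i_1}(p_i)$. Every chain terminates on $q_\mathrm{g}$, so $C_1 = (S \cdot r_{i_1}) \cup \{q_\mathrm{g}\}$. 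This is Case A because $S_0 = S \neq \emptyset$; note in particular that if $S_1$ is empty the output collapses to $\{q_\mathrm{g}\}$, since $q_\mathrm{g}$ persists as long as at least one setting chain was entered.

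For the inductive step $k \ge 2$, split $w$ as $w_{k-1} \cdot r_{i_k}\alpha^\ell$ and apply the hypothesis to $w_{k-1}$. If $S_{k-2} \neq \emptyset$ and $S_{k-1} \neq \emptyset$, the hypothesis yields $C_{k-1} = S_{k-1} \cup \{q_\mathrm{g}\}$, and the base-case analysis applied to the remaining $r_{i_k}\alpha^\ell$ delivers Case A. If $S_{k-2} \neq \emptyset$ but $S_{k-1} = \emptyset$, the hypothesis yields $C_{k-1} = \{q_\mathrm{g}\}$; the transition of $r_{i_k}$ then sends $q_\mathrm{g}$ to the (removed) sink $q_\mathrm{s}$, after which $\alpha^\ell$ preserves the empty set, matching Case B. If $S_{k-2} = \emptyset$, the hypothesis already gives $C_{k-1} = \emptyset$, which is stable under $r_{i_k}\alpha^\ell$, and $S_{k-1}, S_k$ are empty as well, again Case B. The only delicate ingredient is the setting-chain traversal in the base case, and this argument transfers verbatim from Lemma~\ref{lem:fpfl-simulating_word} because Section~\ref{sec:fu} only deletes the state $f_\ell$ and leaves $Q_\mathrm{P}$, the setting states, and $q_\mathrm{g}$ with all their transitions intact; everything else is the routine bookkeeping of the one additional step in which $q_\mathrm{g}$ is itself killed immediately after the $Q_\mathrm{P}$-part empties.
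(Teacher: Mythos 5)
Your proof is correct and follows essentially the same route as the paper: reuse the routing argument of Lemma~\ref{lem:fpfl-simulating_word} while the $Q_\mathrm{P}$-part is non-empty, then observe that once it empties only $q_\mathrm{g}$ survives, the next rule letter sends it to the removed sink, and emptiness persists. The only difference is organizational — you package it as an explicit induction on $k$, whereas the paper argues the non-empty case by direct appeal to Lemma~\ref{lem:fpfl-simulating_word} and handles the empty case via the smallest index $j$ with $S\cdot r_{i_1}\cdots r_{i_j}=\emptyset$ — which does not change the substance.
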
 
\begin{proof}
In the case of $S \cdot r_{i_1}\cdot\ldots\cdot r_{i_{k-1}} \neq \emptyset$, the proof is the same as that of Lemma~\ref{lem:fpfl-simulating_word}, since for all $0 \le j \le k-1$, we have $S \cdot r_{i_1}\cdot\ldots\cdot r_j \neq \emptyset$, thus all preconditions apply.

Otherwise, let $j < k$ be the smallest index such that the set $S \cdot r_{i_1}\cdot\ldots\cdot r_{i_j}$ is empty.
By the argument for the first case, we know that $\delta_{\mathcal{A}^*}(S,r_{i_1}\alpha^\ell\ldots r_{i_j}\alpha^\ell) = \{q_\mathrm{g}\}$.
Applying the next letter $r_{i_{j+1}}$ removes this single state, yielding the empty set.
\end{proof}

For the other direction, words that are $f_0$-omitting must involve simulating words.
A special case occurs when we reach the empty set; then, there are no further restrictions, in particular, on that, we must continue with a simulating word.

\begin{lemma}\label{lem:fu-must_be_simulating}
Let $C = S \cup \{q_\mathrm{g}\}$, where $S \subseteq Q_\mathrm{P}$ is non-empty.
If a word $w$ is $f_0$-omitting for $C$, then either:
\begin{enumerate}
\item $w$ is a prefix of a simulating word for $S$, or
\item a prefix of $w$ is a simulating word for $S$ whose sequence of rules is $S$-emptying.
\end{enumerate}
\end{lemma}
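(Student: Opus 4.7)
The plan is to prove the lemma by induction on $|w|$, mirroring the argument of Lemma~\ref{lem:fpfl-must_be_simulating} but handling the new case in which a single rule letter already empties the simulated subset. The base case $w = \varepsilon$ is immediate: the empty word is a prefix of the empty simulating word for $S$ (the $k=0$ instance of the definition), so condition~(1) holds.

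For the inductive step with $|w| \ge 1$, I would first argue that the initial letter of $w$ cannot be $\alpha$, because $q_\mathrm{g} \in C$ and $\delta(q_\mathrm{g}, \alpha) = f_0$; hence $w$ starts with some rule letter $r_j$ which must moreover be legal for $S$, since otherwise some $p_i \in S$ would be sent to $f_0$. After $r_j$ is applied, the active set becomes $\{s^{i,j}_\ell : p_i \in S\}$ (with $q_\mathrm{g}$ mapped to the sink). From here, any rule letter would send a setting state to $f_0$, and more than $\ell$ consecutive $\alpha$'s would push the chain through $q_\mathrm{g}$ and into $f_0$. Therefore $w$ must be of the form $r_j \alpha^k w''$ with $0 \le k \le \ell$, where $w''$ is empty or begins with another rule letter.

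If $k < \ell$ or $w''$ is empty, then $w$ is a prefix of the simulating word $r_j \alpha^\ell$ for $S$, yielding~(1). Otherwise $k = \ell$ and $w''$ is nonempty; by Lemma~\ref{lem:fu-simulating_word} applied with one rule (using $S \ne \emptyset$), the active set after $r_j \alpha^\ell$ is $(S \cdot r_j) \cup \{q_\mathrm{g}\}$. If $S \cdot r_j = \emptyset$, the prefix $r_j \alpha^\ell$ of $w$ is already a simulating word for $S$ whose single-rule sequence is $S$-emptying, yielding~(2) regardless of how $w''$ continues. If $S' := S \cdot r_j$ is nonempty, then $w''$ is $f_0$-omitting for $S' \cup \{q_\mathrm{g}\}$ and strictly shorter than $w$, so the induction hypothesis applied to $S'$ and $w''$ gives either that $w''$ is a prefix of a simulating word for $S'$ (prepending $r_j \alpha^\ell$ then witnesses~(1) for $S$), or that some prefix of $w''$ is an $S'$-emptying simulating word for $S'$ (prepending $r_j \alpha^\ell$ yields an $S$-emptying simulating word prefix of $w$, condition~(2)).

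The only delicate point, which is really just bookkeeping, is the subcase $S \cdot r_j = \emptyset$: here $w$ may well extend arbitrarily past $r_j \alpha^\ell$ while remaining $f_0$-omitting (the active set drops to $\{q_\mathrm{g}\}$ and the next rule letter empties it entirely), but the prefix $r_j \alpha^\ell$ alone already certifies~(2), so no further analysis of the tail is required.
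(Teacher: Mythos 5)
Your proof is correct and takes essentially the same approach as the paper: the paper's argument forces the shape $r_j\alpha^\ell$ step by step, invokes Lemma~\ref{lem:fu-simulating_word} to get $(S\cdot r_j)\cup\{q_\mathrm{g}\}$, and iterates until either $w$ ends (case~1) or the simulated set becomes empty (case~2), which is exactly what your induction on $|w|$ formalizes, including the observation that once $S\cdot r_j=\emptyset$ the prefix already certifies condition~(2).
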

\begin{proof}
Following the proof of Lemma~\ref{lem:fpfl-must_be_simulating}, we observe that the word $w$ must start with $r_j \alpha^\ell$, unless it ends prematurely.
Then, by Lemma~\ref{lem:fu-simulating_word}, we have $\delta_{\mathcal{A}^*}(C,r_j \alpha^\ell) = (S \cdot r_j) \cup \{q_\mathrm{g}\}$. 
We apply this argument iteratively for the obtained subset and the remainder of $w$, until either $w$ ends, in which case~(1) holds, or the set of resulting active states becomes $\{q_\mathrm{g}\}$, in which case~(2) holds.
\end{proof}

Using our ingredients collected so far, we can show that the existence of a $Q_\mathrm{P}$-emptying word implies the existence of a $P$-emptying sequence of rules.

\begin{lemma}\label{lem:fu-must_contain_emptying}
Let $w$ be a word such that $\delta_{\mathcal{A}^*}(Q_\mathrm{P},w) = \emptyset$.
Then $w$ contains a factor $v$ which is a simulating word for $Q_\mathrm{P}$ whose sequence of rules is $P$-emptying.
\end{lemma}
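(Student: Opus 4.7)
The plan is to identify a suitable factor of $w$ by locating the last moment during the processing of $w$ where all of $Q_\mathrm{P}$ is simultaneously active, and then applying the simulating-word machinery to what comes after.

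Let $C_j = \delta_{\mathcal{A}^*}(Q_\mathrm{P}, w[1..j])$ for $0 \le j \le |w|$, and let $i^\star$ be the largest index with $Q_\mathrm{P} \subseteq C_{i^\star}$; such an index exists since $C_0 = Q_\mathrm{P}$. Write $w = u\,y$ with $|u| = i^\star$, and set $C = C_{i^\star}$. The suffix $y$ is non-empty, because $Q_\mathrm{P} \subseteq C$ while $\delta_{\mathcal{A}^*}(Q_\mathrm{P}, w) = \emptyset$.

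The first step is to show that $y$ is $f_0$-omitting for $C$. Suppose otherwise, and let $y_1$ be the shortest prefix of $y$ activating $f_0$; write $y = y_1 y_2$. Since $\delta_{\mathcal{A}^*}(C, y) = \emptyset$ contains no forcing states, Lemma~\ref{lem:fu-cleaning_f0} applied to $\delta_{\mathcal{A}^*}(C, y_1)$ yields a prefix $y_3$ of $y_2$ with $Q_\mathrm{P} \subseteq \delta_{\mathcal{A}^*}(C, y_1 y_3)$. But then $Q_\mathrm{P} \subseteq C_{i^\star + |y_1 y_3|}$, contradicting maximality of $i^\star$.

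The second step is to pin down $C$. Since $p_\ell \in Q_\mathrm{P} \subseteq C$, and $\delta(p_\ell, \alpha) = f_0$, the letter $\alpha$ applied to $C$ would activate $f_0$; hence any state whose only ``safe'' continuation is $\alpha$, or which would activate $f_0$ under some rule letter, cannot be in $C$. A case analysis of the transitions rules out $q_0$ (since $\delta(q_0, r_j) = f_0$), every setting state $s^{i',j'}_k$ (since $\delta(s^{i',j'}_k, r_y) = f_0$), and every forcing state $f_{i'}$ (since $f_{i'} \in F$ would force $q_0 \in C$ via the $\varepsilon$-transition). Consequently, $C \subseteq Q_\mathrm{P} \cup \{q_\mathrm{g}\}$, and combined with $Q_\mathrm{P} \subseteq C$, either $C = Q_\mathrm{P}$ or $C = Q_\mathrm{P} \cup \{q_\mathrm{g}\}$.

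The third step exposes the simulating factor. Since $p_\ell \in C$, the first letter of $y$ is not $\alpha$; let it be a rule letter $r_{i_1}$. After $r_{i_1}$, the active set is exactly $\{s^{i,i_1}_\ell \mid i=1,\ldots,\ell\}$ ($q_\mathrm{g}$ is erased by $\delta(q_\mathrm{g}, r_{i_1}) = q_\mathrm{s}$); and any letter other than $\alpha$ would now activate $f_0$, so $\alpha^\ell$ must follow. An argument parallel to Lemma~\ref{lem:fpfl-simulating_word} shows that after $r_{i_1}\alpha^\ell$ the active set is $(Q_\mathrm{P} \cdot r_{i_1}) \cup \{q_\mathrm{g}\}$, or $\{q_\mathrm{g}\}$ if $Q_\mathrm{P} \cdot r_{i_1} = \emptyset$. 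In the latter case $r_{i_1}\alpha^\ell$ is already the desired factor. Otherwise, write $y = r_{i_1} \alpha^\ell y'$ and apply Lemma~\ref{lem:fu-must_be_simulating} to $S' = Q_\mathrm{P} \cdot r_{i_1}$ and $y'$: either $y'$ is a prefix of a simulating word for $S'$, or a prefix of $y'$ is an $S'$-emptying simulating word for $S'$. The first alternative keeps $q_\mathrm{g}$ active throughout and can never reach $\emptyset$, contradicting $\delta_{\mathcal{A}^*}(C, y) = \emptyset$. Hence the second alternative holds, and concatenating $r_{i_1} \alpha^\ell$ with that prefix yields a factor $v$ of $w$ that is a simulating word for $Q_\mathrm{P}$ whose rule sequence empties $P$.

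The main obstacle is the second step: checking exhaustively that no extraneous states can accumulate in $C$ beyond $Q_\mathrm{P}$ and $q_\mathrm{g}$. Once $C$ is identified, the rest reduces to previously proved simulation lemmas.
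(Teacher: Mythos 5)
Your overall route is the same as the paper's: locate a point where all of $Q_\mathrm{P}$ is simultaneously active, use Lemma~\ref{lem:fu-cleaning_f0} to argue the remaining suffix is $f_0$-omitting, and then extract the $P$-emptying simulating factor via Lemmas~\ref{lem:fu-simulating_word} and~\ref{lem:fu-must_be_simulating}. (The paper is terser: it simply assumes without loss of generality that no non-empty prefix maps $Q_\mathrm{P}$ back onto $Q_\mathrm{P}$.) Your second and third steps are sound; the only cosmetic slip there is that after the first rule letter the active set may also contain $q_0$ via an $\varepsilon$-transition when some $s^{i,i_1}_\ell$ is final, which does not affect the argument.

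There is, however, a concrete hole in your first step: the contradiction with the maximality of $i^\star$ requires $|y_1 y_3|\ge 1$, and this fails precisely when $f_0\in C_{i^\star}$, because then the shortest prefix of $y$ activating $f_0$ is the empty one, and Lemma~\ref{lem:fu-cleaning_f0}, as stated, may return the empty prefix $y_3$ since $Q_\mathrm{P}\subseteq C_{i^\star}$ already holds. This case is not vacuous: sets containing both $Q_\mathrm{P}$ and $f_0$ are reachable from $Q_\mathrm{P}$ (for instance $\delta_{\mathcal{A}^*}(Q_\mathrm{P},\alpha^{\ell+2})$ contains $Q_\mathrm{P}$, $f_0$ and, by $\varepsilon$-closure, $q_0$), so it must be excluded explicitly. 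The fix is short: if $f_0\in C_{i^\star}$, then $q_0\in C_{i^\star}$ by $\varepsilon$-closure; if the next letter of $y$ is $\alpha$, then every $p_i$ shifts and $q_0$ refills $p_1$, so $Q_\mathrm{P}\subseteq C_{i^\star+1}$; if it is a rule letter, then $f_0\in C_{i^\star+1}$ (the image of $q_0$), and applying Lemma~\ref{lem:fu-cleaning_f0} at position $i^\star+1$ yields an index at least $i^\star+1$ whose set contains $Q_\mathrm{P}$. Either way maximality of $i^\star$ is contradicted, so in fact $f_0\notin C_{i^\star}$ and your argument goes through. Note that your second step implicitly depends on this (ruling out active forcing states uses that the first letter of $y$ cannot activate $f_0$), so the patch is genuinely needed rather than cosmetic.
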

\begin{proof}
It is enough to prove the lemma for words $w$ that do not have a non-empty prefix $u$ such that $Q_\mathrm{P} \subseteq \delta_{\mathcal{A}^*}(Q_\mathrm{P},u)$; otherwise, we can search for a factor $v$ in $w$ with $u$ removed.
Hence, by Lemma~\ref{lem:fu-cleaning_f0}, $w$ must be $f_0$-omitting.
By Lemma~\ref{lem:fu-must_be_simulating}, we have two possibilities (1) and~(2).
In case~(2), we immediately know that $w$ contains a prefix that is a simulating word for $Q_\mathrm{P}$ whose sequence of rules is $P$-emptying.
In case~(1), $w$ is a prefix of a simulating word for $Q_\mathrm{P}$.
If $w$ itself is a simulating word, let $v=w$; otherwise, write $w = v r_{i_{k+1}} \alpha^i$ for a simulating word $v = r_{i_1} \alpha^\ell \ldots r_{i_k} \alpha^\ell$ for $Q_\mathrm{P}$ ($k \ge 0$) and some $0 \le i < \ell$.
Let $C' = \delta_{\mathcal{A}^*}(Q_\mathrm{P},v)$.
By Lemma~\ref{lem:fu-simulating_word}, $C' \subseteq Q_\mathrm{P} \cup \{q_\mathrm{g}\}$ and $S' = C' \cap Q_\mathrm{P} = P\cdot r_{i_1}\cdot\ldots\cdot r_{i_k}$.
If $S' \neq \emptyset$, then the action of the possibly remaining suffix $r_{i_{k+1}} \alpha^i$ do not map $S'$ to $\emptyset$, which yields a contradiction with the assumption about $w$.
Therefore, $S' = \emptyset$, thus the sequence of rules in $v$ is $P$-emptying.
\end{proof}

Finally, we combine all the facts to show the equivalence between the reduced problems.

\begin{lemma}\label{lem:fu-equivalence}
The following conditions are equivalent:
\begin{enumerate}
\item The permissive set rewriting system $(P,R)$ admits a $P$-emptying sequence of rules.
\item There exists a $Q_\mathrm{P}$-emptying and $f_0$-omitting word for $\mathcal{A}^*$.
\item There exists a $Q_{\mathcal{A}^*}$-emptying word for $\mathcal{A}^*$.
\end{enumerate}
\end{lemma}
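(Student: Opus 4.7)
The plan is to establish a cycle of implications $(1) \Rightarrow (2) \Rightarrow (3) \Rightarrow (1)$; most of the combinatorial content has already been packaged into the preceding lemmas, so the proof is largely an orchestration of them.

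For $(1) \Rightarrow (2)$, given a $P$-emptying sequence $r_{i_1}, \ldots, r_{i_k}$, the natural candidate is the simulating word $w = r_{i_1} \alpha^\ell \cdots r_{i_k} \alpha^\ell$ for $Q_\mathrm{P}$. Permissiveness guarantees that every rule is legal, so $w$ really is a simulating word for $Q_\mathrm{P}$, and Lemma~\ref{lem:fu-simulating_word} applied with $S = Q_\mathrm{P}$ (together with $Q_\mathrm{P} \cdot r_{i_1}\cdots r_{i_k} = \emptyset$) yields $\delta_{\mathcal{A}^*}(Q_\mathrm{P}, w) = \emptyset$. The $f_0$-omitting property I would check block-by-block: after each complete block $r_{i_j} \alpha^\ell$ the active set sits inside $Q_\mathrm{P} \cup \{q_\mathrm{g}\}$ (or equals $\emptyset$ after the very last block) by Lemma~\ref{lem:fu-simulating_word}. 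Within a block the initial rule letter sends each active $p_i$ into a setting chain (never to $f_0$, thanks to permissiveness), and the subsequent $\alpha$'s slide these states along the chains towards $q_\mathrm{g}$. The $\varepsilon$-closure may insert $q_0$ into the active set whenever a final setting state is produced, but any such $q_0$-thread then travels at most $\ell$ steps through the $Q_\mathrm{P}$-chain during the remaining $\alpha$'s and hence stops short of $f_0$.

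For $(2) \Rightarrow (3)$, I would prepend the resetting prefix $r_1^2 \alpha^\ell$: by Lemma~\ref{lem:fu-beginning}(1) the subset $Q_{\mathcal{A}^*}$ collapses to $\{f_0, q_0\}$ after $r_1^2$, and by Lemma~\ref{lem:fu-beginning}(2) a further $\alpha^\ell$ pushes this subset to $Q_\mathrm{P}$ (the extra $q_0$ adds nothing new, being in the $\varepsilon$-closure of $f_0$). Composing with any $Q_\mathrm{P}$-emptying word $w$ from hypothesis~(2) yields $\delta_{\mathcal{A}^*}(Q_{\mathcal{A}^*}, r_1^2 \alpha^\ell w) = \emptyset$. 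For $(3) \Rightarrow (1)$, any $Q_{\mathcal{A}^*}$-emptying word is in particular $Q_\mathrm{P}$-emptying by monotonicity of $\delta_{\mathcal{A}^*}$ and the inclusion $Q_\mathrm{P} \subseteq Q_{\mathcal{A}^*}$, and Lemma~\ref{lem:fu-must_contain_emptying} then exhibits a factor of $w$ that is a simulating word for $Q_\mathrm{P}$ whose rule sequence is $P$-emptying.

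The main obstacle I anticipate is the $f_0$-omitting verification in $(1) \Rightarrow (2)$, where the interaction of the $\varepsilon$-transitions from final setting states with the $q_0$-initiated walk through $Q_\mathrm{P}$ requires careful bookkeeping; notably, one must exploit that exactly $\ell$ copies of $\alpha$ separate consecutive rule letters so that a spawned $q_0$ cannot reach $f_0$ before the next rule letter reshuffles the configuration. The other two implications are near-immediate consequences of Lemmas~\ref{lem:fu-beginning} and~\ref{lem:fu-must_contain_emptying}, so virtually all of the difficulty of the equivalence is concentrated in the first step.
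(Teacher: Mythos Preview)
Your approach is the same cyclic scheme $(1)\Rightarrow(2)\Rightarrow(3)\Rightarrow(1)$ as the paper, invoking the same auxiliary lemmas, and $(2)\Rightarrow(3)$ and $(3)\Rightarrow(1)$ are essentially identical to the paper's arguments.

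There is one genuine slip in $(1)\Rightarrow(2)$. You claim that Lemma~\ref{lem:fu-simulating_word}, together with $Q_\mathrm{P}\cdot r_{i_1}\cdots r_{i_k}=\emptyset$, yields $\delta_{\mathcal{A}^*}(Q_\mathrm{P},w)=\emptyset$. But the case split in that lemma is on whether $S\cdot r_{i_1}\cdots r_{i_{k-1}}$ (one step short) is empty, not $S\cdot r_{i_1}\cdots r_{i_k}$. If your $P$-emptying sequence is minimal then $Q_\mathrm{P}\cdot r_{i_1}\cdots r_{i_{k-1}}\neq\emptyset$, and the lemma gives $\delta_{\mathcal{A}^*}(Q_\mathrm{P},w)=\emptyset\cup\{q_\mathrm{g}\}=\{q_\mathrm{g}\}$, not $\emptyset$. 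The paper patches this by appending a single extra rule letter: $w\,r_1$ kills the residual guard state (since $\delta(q_\mathrm{g},r_1)=q_\mathrm{s}$, which is absent in $\mathcal{A}^*$), and $w\,r_1$ is the word witnessing~(2). Equivalently you could pad the rule sequence by one rule so that the ``otherwise'' branch of the lemma applies. Either fix is trivial once noticed. Your block-by-block verification of the $f_0$-omitting property is in fact more detailed than the paper, which simply asserts it; that part is fine and needs no change beyond extending it to cover the appended $r_1$ (which sends $q_\mathrm{g}$ to the sink, not to $f_0$).
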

\begin{proof}
\noindent (1)~$\Rightarrow$~(2): Suppose that for the set rewriting system there is a sequence of rules $r_{i_1},\ldots,r_{i_k}$ that is $P$-emptying.
We take the word $w = r_1 \alpha^\ell \ldots r_k \alpha^\ell$, which is a simulating word for $Q_\mathrm{P}$.
By Lemma~\ref{lem:fu-simulating_word}, we conclude that $\delta_{\mathcal{A}^*}(Q_\mathrm{P},w) \subseteq \{q_\mathrm{g}\}$.
Thus, $w r_1$ is a $Q_\mathrm{P}$-emptying and $f_0$-omitting word.
 
\noindent (2)~$\Rightarrow$~(3):
If $w$ is a $Q_\mathrm{P}$-emptying word, then, by Lemma~\ref{lem:fu-beginning}, $\delta_{\mathcal{A}^*}(Q_{\mathcal{A}^*},r_1^2 \alpha^\ell w)=\emptyset$.

\noindent (3)~$\Rightarrow$~(1):
If there exists a $Q_{\mathcal{A}^*}$-emptying word $w \in \Sigma^*$, then, in particular, $\delta_{\mathcal{A}^*}(Q_\mathrm{P},w) = \emptyset$.
By Lemma~\ref{lem:fu-must_contain_emptying}, $w$ contains a factor $v$ which is a simulating word for $Q_\mathrm{P}$ whose sequence of rules is $P$-emptying.
\end{proof}

We conclude this part with
\begin{theorem}\label{fu-dfa}
Problem~\ref{pbm:factor_universality} is PSPACE-hard if $L$ is specified by a DFA over a given alphabet.
\end{theorem}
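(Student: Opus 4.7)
The plan is to reduce from Problem~\ref{pbm:emptying_set_rewriting}, which is PSPACE-complete by Theorem~\ref{thm:emptying_PSPACE}. Given a permissive set rewriting system $(P,R)$, I would produce the DFA $\mathcal{A}$ exactly as described in this subsection. The construction is polynomial in $|P|+|R|$: the state set has size $O(|P|^2|R|)$ (dominated by the setting states), and the alphabet has size $|R|+1$. All the work needed to connect the two problems has been done in Lemma~\ref{lem:fu-equivalence}; what remains is to bridge between ``no $Q_{\mathcal{A}^*}$-emptying word'' and ``$L(\mathcal{A}^*)$ is factor universal'' via the criterion recalled in Subsection~\ref{subsec:set_rewriting_emptying}.

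To invoke that criterion on $\mathcal{A}^*$, I first need to verify its two preconditions: every state is reachable from $q_0$, and every state reaches some final state. Forward reachability from $q_0$ is witnessed by the following explicit words: $\alpha^i$ reaches $p_i$; $\alpha^i r_j \alpha^{\ell-x}$ reaches $s^{i,j}_x$ (well-defined for every $i,j$ because the system is permissive, so $r_j(p_i)\ne\bot$); $\alpha r_j \alpha^\ell$ reaches $q_\mathrm{g}$; and $\alpha^{\ell+1+i}$ reaches $f_i$ for $i\in\{0,\ldots,\ell-1\}$. Co-reachability to a final state is immediate for $q_0$ and for the forcing states (they are final), and from the remaining states one reaches $f_0$ by $\alpha^{\ell+1-i}$ from $p_i$, by $\alpha^{x+1}$ from $s^{i,j}_x$ (routing through $q_\mathrm{g}$), and by $\alpha$ from $q_\mathrm{g}$ itself.

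Once the preconditions are in place, the criterion says $L(\mathcal{A}^*)$ is factor universal if and only if no $Q_{\mathcal{A}^*}$-emptying word for $\mathcal{A}^*$ exists. Combining this with the equivalence $(1)\Leftrightarrow(3)$ of Lemma~\ref{lem:fu-equivalence}, the language $L^*=L(\mathcal{A}^*)$ is factor universal if and only if $(P,R)$ admits no $P$-emptying sequence of rules. Thus $(P,R)$ is a yes-instance of Problem~\ref{pbm:emptying_set_rewriting} if and only if $L(\mathcal{A}^*)$ is a no-instance of Problem~\ref{pbm:factor_universality}. Since PSPACE is closed under complementation, this polynomial-time reduction establishes the desired PSPACE-hardness.

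The main obstacle is not in the final wrapping step, which is bookkeeping, but in checking that the modifications made to $\mathcal{A}$ relative to Subsection~\ref{subsec:fpfl-dfa} (dropping $f_\ell$ and ending the forcing chain at the final state $f_{\ell-1}$) continue to support the semantic properties used in Lemmas~\ref{lem:fu-beginning}--\ref{lem:fu-must_contain_emptying} while also guaranteeing co-reachability of a final state from every state of $\mathcal{A}^*$. The latter would fail with the original construction of Section~\ref{sec:fpfl} due to the sink behavior of $f_\ell$, which is precisely why the modification was introduced; so the substance of the proof is really the verification done in the previous lemmas, and the argument above is simply the last step that packages them into a reduction.
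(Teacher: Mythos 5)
Your proposal is correct and follows essentially the same route as the paper: reduce from Problem~\ref{pbm:emptying_set_rewriting} via the modified DFA construction, apply the factor-universality criterion of \cite{RampersadShallitXu2012} to $\mathcal{A}^*$, and conclude with the equivalence $(1)\Leftrightarrow(3)$ of Lemma~\ref{lem:fu-equivalence} together with closure of PSPACE under complement. Your explicit verification of the reachability and co-reachability preconditions is a detail the paper only asserts, and it checks out against the transition table.
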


\subsection{Binarization}\label{subsec:fu-binarization}

We construct a binary DFA $\mathcal{B}$ exactly in the same way as in Subsection~\ref{subsec:fpfl-binarization} and use the same notation and the same binary encoding $\mathit{bin}$.
Note that the criterion for the factor universality from Proposition~\ref{pro:fu_criterion} still holds for $\mathcal{B}^*$.

We observe that Lemma~\ref{lem:fpfl-binary_f0-omitting} and Lemma~\ref{lem:fpfl-binary_not_f0-omitting} hold also in this case; it is because both constructions of $\mathcal{B}^*$ differ only on the set $Q_\mathrm{F}$, whose transitions are irrelevant for the observations.
Also, Lemma~\ref{lem:fu-cleaning_f0} holds for $\mathcal{B}^*$ as it holds for $\mathcal{A}^*$:

\begin{lemma}\label{lem:fu-binary_cleaning_f0}
Let $C \subseteq Q_{\mathcal{B}^*}$ be such that $f_0 \in C$, and let $w \in \{0,1\}^*$ be a word such that $\delta_{\mathcal{B}^*}(C,w) \cap Q_\mathrm{F} = \emptyset$.
There exists a prefix $u$ of $w$ such that $Q_\mathrm{P} \subseteq \delta_{\mathcal{B}^*}(C,u)$.
\end{lemma}
\begin{proof}
The proof is the same as that of Lemma~\ref{lem:fu-cleaning_f0}: it is sufficient to replace each $\alpha$ with $0$ and each $r_j$ with $1$.
\end{proof}

Now we show the equivalence of the existence of $Q_\mathrm{P}$-emptying words in both $\mathcal{A}^*$ and $\mathcal{B}^*$.

\begin{lemma}\label{lem:fu-binary_correspondance}
There is a $Q_\mathrm{P}$-emptying and $f_0$-omitting word for $\mathcal{A}^*$ if and only if there is such a word for $\mathcal{B}^*$.
In particular, if $w' \in \{0,1\}^*$ is such a word for $\mathcal{B}^*$, then $w'0 = \mathit{bin}(w)$ for some word $w \in \Sigma^*$ with this property for $\mathcal{A}^*$.
\end{lemma}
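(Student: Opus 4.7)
The plan is to reduce this to Lemma~\ref{lem:fpfl-binary_f0-omitting} and Lemma~\ref{lem:fpfl-binary_not_f0-omitting}, which (as remarked) also apply here since the two constructions differ only on the set $Q_\mathrm{F}$, whose transitions play no role in the statements of those lemmas. For the forward direction, I would simply take the binary encoding $w'$ of a witness word $w$ for $\mathcal{A}^*$; since $Q_\mathrm{P} \subseteq Q_{\mathcal{A}^*} \setminus Q_\mathrm{F}$, Lemma~\ref{lem:fpfl-binary_f0-omitting} immediately yields that $w'$ is $f_0$-omitting for $Q_\mathrm{P}$ in $\mathcal{B}^*$ and that $\delta_{\mathcal{B}^*}(Q_\mathrm{P},w') = \delta_{\mathcal{A}^*}(Q_\mathrm{P},w) = \emptyset$, as required.

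For the converse, which the lemma statement itself spells out as ``$w'0$ encodes a word with this property,'' the first step is to justify the encoding claim: $w'0$ is a valid binary encoding of some $w \in \Sigma^*$ because if $w'$ ends in a trailing maximal run $1^j$ with $1 \le j \le m-1$, then $w'0$ completes it to the encoding $1^j 0$ of $r_j$, and otherwise $w'$ is already a valid encoding and $w'0$ merely appends the encoding of $\alpha$. Next, since $\delta_{\mathcal{B}^*}(Q_\mathrm{P},w') = \emptyset$, extending by $0$ keeps the active subset empty, so $w'0$ remains $Q_\mathrm{P}$-emptying and $f_0$-omitting for $Q_\mathrm{P}$ in $\mathcal{B}^*$. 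The contrapositive of Lemma~\ref{lem:fpfl-binary_not_f0-omitting} then forces the decoded word $w$ to be $f_0$-omitting for $Q_\mathrm{P}$ in $\mathcal{A}^*$, and a final application of Lemma~\ref{lem:fpfl-binary_f0-omitting} gives $\delta_{\mathcal{A}^*}(Q_\mathrm{P},w) = \delta_{\mathcal{B}^*}(Q_\mathrm{P},w'0) = \emptyset$.

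I expect the only delicate point to be the padding trick with the final $0$: one must notice that $w'$ itself may fail to be a valid encoding of any word over $\Sigma$ (it can end mid-rule on a trailing $1^j$), yet appending a single $0$ is enough to turn it into one thanks to the prefix-code structure of the binary encoding; and one must also check that this padding bit is harmless, which follows at once from the fact that the image is already $\emptyset$. Everything else is routine bookkeeping by invoking the two auxiliary binarization lemmas in the appropriate direction.
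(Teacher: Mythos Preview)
Your proposal is correct and follows essentially the same approach as the paper's proof: both directions are handled by invoking Lemma~\ref{lem:fpfl-binary_f0-omitting} and (the contrapositive of) Lemma~\ref{lem:fpfl-binary_not_f0-omitting} with $C = Q_\mathrm{P}$, together with the observation that appending a single $0$ to $w'$ yields a valid encoding while preserving emptiness of the active subset. Your write-up is in fact slightly more detailed than the paper's, which simply asserts that $w'0$ ``must be an encoding of some word'' without spelling out the prefix-code argument.
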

\begin{proof}
Let $w$ be a $Q_\mathrm{P}$-emptying and $f_0$-omitting word for $\mathcal{A}^*$.
From Lemma~\ref{lem:fpfl-binary_f0-omitting}, we know that $\mathit{bin}(w)$ is $f_0$-omitting and such that $\delta_{\mathcal{B}^*}(Q_\mathrm{P},\mathit{bin}(w)) = \delta_{\mathcal{A}^*}(Q_\mathrm{P},w) = \emptyset$.

Conversely, assume that there is a $Q_\mathrm{P}$-emptying and $f_0$-omitting binary word $w'$ for $\mathcal{B}^*$.
Since $\delta_{\mathcal{B}^*}(Q_\mathrm{P},w')=\emptyset$, we know that $w'0$ has the same properties.
Furthermore, $w'0$ must be such that $w'0=\mathit{bin}(w)$ for some word $w \in \Sigma^*$ because $0$ can appear only at the end in $\mathit{bin}(a)$ for each $a \in \Sigma$.
Then, from Lemma~\ref{lem:fpfl-binary_not_f0-omitting}, $w$ must be $f_0$-omitting.
From Lemma~\ref{lem:fpfl-binary_f0-omitting}, we conclude that $w$ has to be also $Q_\mathrm{P}$-emptying as $w'0$ is.
\end{proof}

Finally, we show the equivalence of the existence of a $Q_{\mathcal{B}^*}$-emptying word and of a $Q_\mathrm{P}$-emptying word, which finishes the reduction to the case of a binary DFA.

\begin{lemma}\label{lem:fu-binary_contains_emptying}
For $\mathcal{B}^*$, there exists a $Q_\mathrm{P}$-emptying and $f_0$-omitting word if and only if there exists a $Q_{\mathcal{B}^*}$-emptying word.
In particular, a $Q_{\mathcal{B}^*}$-emptying word contains a factor that is $Q_\mathrm{P}$-emptying and $f_0$-omitting.
\end{lemma}
\begin{proof}
Assume that there is a $Q_\mathrm{P}$-emptying word $w$.
We have $\delta_{\mathcal{B}^*}(Q_{\mathcal{B}^*},1^{m+1})=\{f_0,q_0\}$ and $\delta_{\mathcal{B}^*}(\{f_0\},0^\ell)=Q_\mathrm{P}$.
Thus, $\delta_{\mathcal{B}^*}(Q_{\mathcal{B}^*}, 1^{m+1} 0^\ell w) = \emptyset$.

Conversely, let $w$ be a $Q_{\mathcal{B}^*}$-emptying word.
Let $u$ be the longest prefix of $w$ such that $Q_\mathrm{P} \subseteq \delta_{\mathcal{B}^*}(Q_{\mathcal{B}^*},u)$, and write $w = uv$.
By Lemma~\ref{lem:fu-binary_cleaning_f0}, $v$ has to be $f_0$-omitting, as otherwise $u$ would be longer.
Hence, $v$ is a $Q_\mathcal{P}$-emptying and $f_0$-omitting word, and it is a factor of $w$.
\end{proof}

\subsection{List of words}\label{subsec:fu-list_of_words}

\begin{lemma}\label{lem:fu-binary_numbers}
in the language of $\mathcal{B}$, the maximum length of words is equal to $3\ell+m$ and the number of words is at most $m \ell^2 + (1 + \ell m(2 + \ell)) \ell$.
\end{lemma}
\begin{proof}
We count the maximum length and the words as in the proof of Lemma~\ref{lem:fpfl-binary_numbers}, taking into account that the chain of forcing states is shorter by $1$.
\end{proof}

We conclude with

\begin{theorem}\label{thm:binary_dfa}
Problem~\ref{pbm:factor_universality} is PSPACE-complete when the alphabet is binary.
\end{theorem}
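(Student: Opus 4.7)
The plan is to reduce from Problem~\ref{pbm:emptying_set_rewriting} (Emptying Set Rewriting), which is PSPACE-complete by Theorem~\ref{thm:emptying_PSPACE}. Starting from a permissive set rewriting system $(P,R)$, I would build the DFA $\mathcal{A}$ described at the start of Section~\ref{sec:fu} (with the forcing chain truncated to end at $f_{\ell-1}$), then binarize it to obtain $\mathcal{B}$ exactly as in Subsection~\ref{subsec:fu-binarization}, and finally produce $L = L(\mathcal{B})$ explicitly as a finite list of binary words by enumerating accepting paths of $\mathcal{B}$.

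By Lemma~\ref{lem:fu-binary_numbers}, both the cardinality of $L$ and the length of its longest words are polynomial in $\ell$ and $m$, so this enumeration runs in polynomial time and the whole reduction is polynomial. For correctness, I would chain the established equivalences: by Lemma~\ref{lem:fu-equivalence}, $(P,R)$ admits a $P$-emptying sequence if and only if there is a $Q_\mathrm{P}$-emptying and $f_0$-omitting (for $Q_\mathrm{P}$) word for $\mathcal{A}^*$; by Lemma~\ref{lem:fu-binary_correspondance}, this transfers to the existence of such a word for $\mathcal{B}^*$; and by Lemma~\ref{lem:fu-binary_contains_emptying}, the latter is equivalent to the existence of a $Q_{\mathcal{B}^*}$-emptying word. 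Invoking the criterion recalled in Subsection~\ref{subsec:set_rewriting_emptying}, the language $L^* = L(\mathcal{B}^*)$ fails to be factor universal precisely when such an emptying word exists, which matches the starting set-rewriting condition.

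The main point that requires care is verifying the two preconditions of that criterion for $\mathcal{B}^*$: every state must be reachable from $q_0$ and must be able to reach a final state. Reachability from $q_0$ follows from the $0$-chain built by encoded $\alpha$-transitions together with the rule-letter encodings $1^i 0$ and $1^m$, which let us reach every state of $\mathcal{A}$, including all intermediate states introduced during binarization. Co-reachability of a final state is similarly routine, since every state lies on an outgoing path into a setting or forcing chain that contains final states. Once these straightforward checks are completed, the chain of equivalences above yields the PSPACE-hardness of Problem~\ref{pbm:factor_universality} over a binary alphabet.
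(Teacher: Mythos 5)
Your proposal follows exactly the paper's route: reduce from Emptying Set Rewriting via the modified DFA $\mathcal{A}$ of Section~\ref{sec:fu}, binarize to $\mathcal{B}$, list the words of $L(\mathcal{B})$ (polynomially many, of polynomial length, by Lemma~\ref{lem:fu-binary_numbers}), and chain Lemmas~\ref{lem:fu-equivalence}, \ref{lem:fu-binary_correspondance}, and \ref{lem:fu-binary_contains_emptying} together with the reachability/co-reachability criterion of \cite{RampersadShallitXu2012}. This is correct and essentially identical to the paper's argument, including the check that every state of $\mathcal{B}^*$ is reachable from $q_0$ and can reach a final state once the sink is removed.
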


As for the Frobenius monoid problem, in the same way, by adding a suitable number of letters, it is possible to show the hardness for every fixed-sized alphabet larger than binary.

\section{Lower bounds}

\subsection{The longest omitted words}

It is known that for each odd integer $n \geq 5$, there exists a set of binary words $L$, which have length at most $n$, such that $L^*$ is cofinite and the longest words not in $L^*$ are of length $\varOmega(n^2 2^{\frac{n}{2}})$ \cite{KaoShallitXu2008FPFM}.
However, the constructed $L$ contains exponentially many words in $n$, thus a large lower bound in terms of the size of $L$ could not be inferred.

We show a subexponential lower bound in $|L|$ and $\norm{L}_1$ on the length of the longest words not in $L^*$ when $L^*$ is cofinite.
The idea is to construct a list of binary words from a mortal set rewriting system whose longest legal sequences of rules have an exponential length (Theorem~\ref{thm:set_rewriting_maximal_legal_length}).

\begin{theorem}\label{thm:fpfl-length_lower_bound}
There exists an infinite family of finite sets $L$ of binary words such that $L^*$ is cofinite and the longest words not in $L^*$ are of length at least
$\frac{\norm{L}_\infty - 1}{4} \cdot 2^{\frac{\norm{L}_\infty - 1}{4}}$, and this length is $2^{\varOmega(\sqrt[4]{|L|})}$ in terms of $|L|$ and $2^{\varOmega(\sqrt[5]{\norm{L}_1})}$ in terms of $\norm{L}_1$.
\end{theorem}
\begin{proof}
For an $n \ge 2$, from Theorem~\ref{thm:set_rewriting_maximal_legal_length}, we take the set rewriting system $(P,R)$ with $|P|=|R|=n$ and the subset $S$ meeting the bound $2^n-2$.
Then we use the construction from Section~\ref{sec:fpfl} to create a binary DFA $\mathcal{B}$ and its list of binary words $L$.
Since the set rewriting system is mortal, $L^*$ is cofinite.

From Lemma~\ref{lem:fpfl-binary_numbers} ($\ell=m=n$), the length of the longest words in $L$ is equal to $4n + 1$ and there are at most $n^3+(1+n^2(2+n))(1+n) = n^4 + 4n^3 + 2n^2 + n + 1$ words, thus $|L| \in \O(n^4)$ and $\norm{L}_1 \in \O(n^5)$.

We take a binary simulating word $w'$ with the longest possible legal sequence of rules for $S$, thus also for some singleton $S' \subseteq S$.
From Lemma~\ref{lem:fpfl-simulating_word} and Lemma~\ref{lem:fpfl-binary_f0-omitting}, we know that $0^i w' \notin L^*$, for $1 \le i \le n$ corresponding to the initial singleton $S'$.
For $n \ge 2$, we can lower bound the length of the binary encoding of each rule letter by $2$.
Since the longest possible legal sequence of rules has length $2^n-2$ and one rule application corresponds to at least $n+2$ letters (i.e., $\mathit{bin}(r_j) 0^n$ for a rule letter $r_j$), the length of the word $0^i w'$ is at least $(2^n-2)\cdot(n+2) + 1$.
For $n \ge 2$, we have $(2^n-2)\cdot(n+2) + 1 \ge 2^n \cdot n$.

Since $n = \frac{\norm{L}_\infty - 1}{4}$, $n \in \varOmega(\sqrt[4]{|L|})$, and $n \in \varOmega(\sqrt[5]{\norm{L}_1})$, respectively, the length of the word $0^i w'$ is as in the theorem.
\end{proof}

\subsection{The shortest incompletable words}

We show that when $L^*$ is not factor universal, the length of the shortest incompletable words can be exponential in $\norm{L}_\infty$ and subexponential in $|L|$ and $\norm{L}_1$.
The idea is to construct a list of binary words from a permissive set rewriting system whose shortest legal sequences of rules that are $P$-emptying are of exponential length (Theorem~\ref{thm:long_emptying_set_rewriting}).

\begin{theorem}\label{thm:fu-length_lower_bound}
There exists an infinite family of finite sets $L$ of binary words such that the shortest incompletable words are of length at least
$\frac{\norm{L}_\infty}{4} \cdot 2^{\frac{\norm{L}_\infty}{4}}$, and this length is $2^{\varOmega(\sqrt[4]{|L|})}$ in terms of $|L|$ and $2^{\varOmega(\sqrt[5]{\norm{L}_1})}$ in terms of $\norm{L}_1$.
\end{theorem}
\begin{proof}
For an $n \ge 2$, from Theorem~\ref{thm:long_emptying_set_rewriting}, we take the set rewriting system $(P,R)$ with $|P|=|R|=n$ where the shortest $P$-emptying rule sequences have length equal to $2^n-1$.
Then we apply the construction from Subsection~\ref{sec:fu} to create a binary DFA $\mathcal{B}$ and its list of binary words $L$.
Since there exists a $P$-emptying sequence of rules, we know that there exists a $Q_{\mathcal{B}^*}$-emptying word in $\mathcal{B}^*$, thus $L^*$ is not factor universal.

We show a lower bound on the length of the shortest incompletable words.
Let $w' \in \{0,1\}^*$ be an incompletable word.
From the criterion from Proposition~\ref{pro:fu_criterion}, $w'$ is also $Q_{\mathcal{B}^*}$-emptying in $\mathcal{B}^*$.
From Lemma~\ref{lem:fu-binary_contains_emptying}, we know that $w'$ contains a factor $u'$ that is $Q_\mathrm{P}$-emptying and $f_0$-omitting for $Q_\mathrm{P}$.
From Lemma~\ref{lem:fu-binary_correspondance}, we know that the word $u \in \Sigma^*$ such that $\mathit{bin}(u)=u'0$ is $Q_\mathrm{P}$-emptying in $\mathcal{A}^*$.
By Lemma~\ref{lem:fu-must_contain_emptying}, $u$ contains as a factor a simulating word $v$ whose sequence of rules is $P$-emptying.
Since the shortest such a sequence of rules has length $2^n-1$, the word $v$ and also $u$ have length at least $(2^n-1)\cdot(n+2)$.
Moreover, both these words contain at least $(2^n-1)$ rule letters, as we have taken such a set rewriting system.
Since, for $n \ge 2$, each rule letter is encoded by at least two binary symbols, we conclude that $u'$, thus also $w'$, has length at least $(2^n-1) \cdot (n+2)-1$.
We have $(2^n-1) \cdot (n+2)-1 \ge 2^n \cdot n$.

Since $n = \frac{\norm{L}_\infty}{4}$, $n \in \varOmega(\sqrt[4]{|L|})$, and $n \in \varOmega(\sqrt[5]{\norm{L}_1})$, respectively, the length of every $Q_\mathrm{P}$-emptying word is as in the theorem.
\end{proof}

\section{Upper bounds}\label{sec:upper_bounds}

We show algorithms and upper bounds on the related lengths for both problems, which are exponential only in $\norm{L}_\infty$ while remain polynomial in $|L|$ thus also in $\norm{L}_1$.

For the Frobenius monoid problem, the upper bound $\frac{2}{2|\Sigma|-1}(2^{\norm{L}_\infty} |\Sigma|^{\norm{L}_\infty}-1)$ on the length of the longest words not in $L^*$ when $L^*$ is cofinite was known \cite[Theorem~6.1]{KaoShallitXu2008FPFM}.
We show an upper bound that involves both $\norm{L}_\infty$ and $|L|$.

\begin{theorem}\label{thm:fpfl_exptime}
Problem~\ref{pbm:fpfl} can be solved in time exponential in $\norm{L}_\infty$ while polynomial in $|L|$.
If $L^*$ is cofinite, then the longest words not in $L^*$ have length at most $|L|\cdot(2^{\norm{L}_\infty}-1)-1$.
\end{theorem}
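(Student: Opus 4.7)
The plan is to build the standard trie-based NFA $\mathcal{N}$ for $L^{*}$ and then tightly bound the reachable subsets in its subset determinisation. Concretely, I would take as states of $\mathcal{N}$ the $||L||_{\mathrm{sum}}+1$ nodes of the prefix trie of $L$, orient each trie edge as the corresponding letter transition, and add $\varepsilon$-transitions from every node that spells a word of $L$ back to the root; the root is both initial and final. This NFA recognises exactly $L^{*}$ and has size polynomial in $||L||_{\mathrm{sum}}$.

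The key observation is a chain structure of the reachable subsets. After reading any input $w$, an active trie node $u$ must be a prefix of some word in $L$, a suffix of $w$, and satisfy $wu^{-1}\in L^{*}$. In particular all active nodes are suffixes of $w$, so any two of them are suffixes of each other, meaning the active set is linearly ordered under ``is a suffix of''. It is therefore determined by its longest element $u^{*}$ (a trie node with $|u^{*}|\le ||L||_{\mathrm{max}}$) together with the subset of strict suffixes of $u^{*}$ that are also active. Since there are at most $||L||_{\mathrm{sum}}+1$ choices for $u^{*}$ and at most $2^{||L||_{\mathrm{max}}}$ subsets of its strict suffixes, counting the empty set yields at most $N \le (||L||_{\mathrm{sum}}+1)\cdot 2^{||L||_{\mathrm{max}}}+1$ reachable subsets.

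Armed with this bound, I would build the reachable part of the subset DFA by a breadth-first search starting from the $\varepsilon$-closure of the root, computing transitions on the fly in time polynomial in the NFA size; the total running time is therefore polynomial in $||L||_{\mathrm{sum}}$ times $2^{O(||L||_{\mathrm{max}})}$, as claimed. Deciding cofiniteness of $L^{*}$ then reduces to deciding finiteness of the language accepted by the complemented DFA, a polynomial-time task in the DFA size: search for a state that lies on some cycle reachable from the initial state and from which a complement-accepting state (i.e., a non-accepting state of the original DFA) can be reached.

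For the length bound, $L^{*}$ is cofinite iff $\overline{L^{*}}$ is finite, and the complemented DFA has the same bound $N$ on its number of reachable states. A finite regular language recognised by a DFA with $N$ reachable states contains only words of length at most $N-1$: any longer accepted word would revisit a state along its computation, yielding a pumpable cycle and hence infinitely many accepted words, a contradiction. This gives the stated bound $1+(||L||_{\mathrm{sum}}+1)\cdot 2^{||L||_{\mathrm{max}}}$ on the length of the longest word not in $L^{*}$. The one nontrivial step is the chain-structure observation limiting the reachable subsets; everything else is routine automata-theoretic bookkeeping.
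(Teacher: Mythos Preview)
Your proposal is correct and follows essentially the same approach as the paper: build the trie-based NFA for $L^{*}$, observe that reachable subsets are ``chains'' determined by a deepest active node (at most $||L||_{\mathrm{sum}}+1$ choices) together with a subset of its at most $||L||_{\mathrm{max}}$ proper suffixes, and use the resulting $(||L||_{\mathrm{sum}}+1)\cdot 2^{||L||_{\mathrm{max}}}+1$ bound on the determinised DFA. The paper phrases the chain observation in terms of ``levels'' of trie nodes rather than the suffix relation, but the argument and the resulting bounds are the same.
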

\begin{proof}
The proof uses a similar idea to that in~\cite[Theorem~6.1]{KaoShallitXu2008FPFM} (\cite[Theorem~3.2.5]{Xu2009FPFM}), but involves the number of words $|L|$.
We can assume that $\varepsilon \notin L$ and $L \neq \emptyset$.

We construct a DFA $\mathcal{A}=(Q_\mathcal{A},\Sigma,\delta_\mathcal{A},q_\varepsilon,F_\mathcal{A})$ recognizing $L$ in a standard way that it forms a tree.
Thus $Q_\mathcal{A} = \{q_u \mid u\text{ is a prefix of a word in }L\} \cup \{q_\mathrm{s}\}$, where $q_\mathrm{s}$ is the unique dead state.
For $q_u \in Q_\mathcal{A}$ and $a \in \Sigma$, we define $\delta_\mathcal{A}(q_u,a)=q_{ua}$ if $ua$ is a prefix of a word in $L$ and $\delta_\mathcal{A}(q_u,a)=q_\mathrm{s}$, otherwise.
Then each word $w \in L$ has the action mapping the initial state $q_0$ to a distinct state.

By the standard construction for the Kleene star (Section~\ref{sec:preliminaries}), we construct an NFA $\mathcal{A}^*=(Q_{\mathcal{A}^*},\Sigma,\delta_{\mathcal{A}^*},q_\varepsilon,\{q_\varepsilon\})$ recognizing $L^*$.
Hence $\mathcal{A}^*$ is $\mathcal{A}$ with added an $\varepsilon$-transition from every final state to $q_\varepsilon$ and with the dead state $q_\mathrm{s}$ removed.
For a word $w \in \Sigma^*$, let the set of \emph{active} states be $\delta_{\mathcal{A}^*}(\{q_\varepsilon\},w)$.

We are going to observe that for every word $w$, there are no more than $|Q_{\mathcal{A}^*}|\cdot 2^{\norm{L}_\infty}+1$ active states.
We define the \emph{level} of a state $q_u \in Q_{\mathcal{A}^*}$ to be the length of $u$.
For every state $q_u$ and a letter $a$, the set $\delta_{\mathcal{A}^*}(\{q_u\},a)$ contains $q_{ua}$, if this state exists, and $q_\varepsilon$, if $q_{ua}$ was final in the DFA.
Hence, for a subset $C \subseteq Q_{\mathcal{A}*}$ with at most one state for each level, the transition of every letter, thus also the action of every word, preserves this property.
Since we start with $\{q_\varepsilon\}$, after reading any word for every level at most one state can be active.
Moreover, if a state $q_u$ is the active state with the largest level, then the set of all the states with smaller levels that can be active is determined, as they are the states $q_{u'}$ with $u'$ being a prefix of $u$.
We call them the \emph{possibly active states} of $q_u$.
Hence, the number of reachable subsets from $\{q_0\}$ is bounded by the number of the choices for the set of possibly active states and for its subset with actually active states.

Note that for a state $q_u$ where $u$ is a proper prefix of some word $w \in L$, the set of possibly active states of $q_u$ is contained in that set of $q_w$.
Thus it is sufficient to count only the sets of the possibly active states of $q_u$ with $u \in L$.

Also, the initial state $q_\varepsilon$ is active if and only if a final state in the DFA is active, with the exception of the initial subset $\{q_\varepsilon\}$.
Altogether, we have at most $|L|$ choices for the set of possibly active states combined with at most $2^{\norm{L}_\infty}-1$ choices for the non-empty subset of actually active states.
Additionally, there are the empty set and the initial singleton.
We obtain the upper bound $2+|L|\cdot(2^{\norm{L}_\infty}-1)$ on the number of reachable subsets from $\{q_\varepsilon\}$.

The problem of whether $\mathcal{A}^*$ recognizes a non-cofinite language is equivalent to whether in the space of reachable subsets there exists a cycle such that a subset without the unique final state $q_\varepsilon$ is reachable from it.
Thus, we can check this in time exponential in $\norm{L}_\infty$ and polynomial in $|L|$.

If the language is cofinite, the empty subset is not reachable (which was counted in the upper bound), and there exists a reachable cycle from which we can reach only subsets with the final state.
Thus, the longest words not in $L^*$ have length at most $|L|\cdot(2^{\norm{L}_\infty}-1)-1$.
\end{proof}

For the factor universality problem, only the trivial upper bound $2^{\norm{L}_1-\norm{L}_\infty+1}$ was known \cite{GusevPribavkina2011}.
Note that it is doubly-exponential if represented only in terms of $\norm{L}_\infty$.

\begin{theorem}\label{thm:factor_universality_exptime}
Problem~\ref{pbm:factor_universality} can be solved in time exponential in $\norm{L}_\infty$ while polynomial in $|L|$.
If the set $L \neq \emptyset$ is not complete, then the shortest incompletable words have length at most $\norm{L}_\infty+|L|\cdot 2^{\norm{L}_\infty}$.
\end{theorem}
\begin{proof}
The statement is trivial when $L=\{\varepsilon\}$, and we can assume $\varepsilon \notin L$.
We construct a DFA $\mathcal{A}$ and an NFA $\mathcal{A}^*$ for $L^*$ as in the proof of Theorem~\ref{thm:fpfl_exptime}.
The language $L^*$ is factor universal if and only if there exists a $Q_\mathcal{A^*}$-emptying word (Proposition~\ref{pro:fu_criterion}).

We follow similarly as in the proof of Theorem~\ref{thm:fpfl_exptime}, obtaining an upper bound on the set of reachable subsets from $Q_{\mathcal{A}^*}$.
Note that for every word $w$ of length at least $\norm{L}_\infty$, in $\delta_{\mathcal{A}^*}(Q_{\mathcal{A}^*},w)$ there is at most one state for each level.
Thus, when restricted to such words, there are at most $1+|L|\cdot(2^{\norm{L}_\infty}-1)$ reachable subsets (not counting $\{q_\varepsilon\}$ this time, since it is not reachable from $Q_{\mathcal{A}^*}$ as long as $L \nsubseteq \{\varepsilon\}$).
Since we start from $Q_{\mathcal{A}^*}$, at the beginning there could be more reachable subsets by words shorter than $\norm{L}_\infty$.

If there exists a $Q_{\mathcal{A}^*}$-emptying word $w$, then for every word $u$, the word $uw$ is also $Q_{\mathcal{A}^*}$-emptying.
Hence, to solve the problem, we can start from an arbitrary word $u$ of length $\norm{L}_\infty$, and then check the reachability of the empty set.
The length of the shortest $Q_{\mathcal{A}^*}$-emptying words is at most $\norm{L}_\infty+|L|\cdot(2^{\norm{L}_\infty}-1)$.
\end{proof}

Under a fixed-sized alphabet (as otherwise $\norm{L}_1$ can be arbitrarily large with respect to $\norm{L}_\infty$), we have $|L| \le |\Sigma|^{\norm{L}_\infty}$.
We conclude that $2^{\O(\norm{L}_\infty)}$ is a tight upper bound on the lengths related to both problems.

\section*{Acknowledgments}

We thank Amir M. Ben-Amram for the idea of a simpler way for proving Theorem~\ref{thm:immortality_set_rewriting}.
We also thank all the anonymous reviewers for their comments.
This work was supported by the National Science Centre, Poland under project number 2017/25/B/ST6/01920.

\bibliographystyle{plainurl}
\bibliography{frobenius_and_factor_universality}
\section*{Appendix}

\subsection*{Large length of the shortest incompletable words}

We define explicitly the family from the proof Theorem~\ref{thm:fpfl-length_lower_bound} of sets of words $L$ for which the shortest incompletable words in $L^*$ are of exponential length $\frac{\norm{L}_\infty}{4} \cdot 2^{\frac{\norm{L}_\infty}{4}}$ in terms of $\norm{L}_\infty$ and subexponential length $2^{\varOmega(\sqrt[5]{\norm{L}_1})}$ in terms of $\norm{L}_1$.

For a given $n \ge 2$, the words in $L$ are as follows.
The paths in the construction from the initial state to a final state, which correspond to words in $L$, are also listed.
We rename the elements in the set $P = \{b_0,b_1,\ldots,b_{n-1}\}$ from the set rewriting system in the proof to the elements from $\{p_1, p_2, \ldots, p_n\}$ such that $b_i = p_{i+1}$ as in the reduction.
In this way, the construction keeps the property that if $s^{i,j}_k$ is final and $p_i$ is active, then after $1^j 0 0^n$ (or $1^j 0^n$ if $j = n$), $p_k$ will be active.

The words coming from final states $f_x$ for $x \in \{0,1,\ldots,n-1\}$:
\begin{itemize}
\item $1 0^x$ for $x \in \{0,\ldots,n-1\}$;\quad ($q_0 \xrightarrow{1} f_0 \xrightarrow{0^x} f_x$)
\item $0^n 0 0^x$ for $x \in \{0,\ldots,n-1\}$;\quad ($q_0 \xrightarrow{0^n} p_n \xrightarrow{0} f_0 \xrightarrow{0^x} f_x$)
\item $0^i 1^j 0 0^k 1 0^x$ for $i \in \{1,\ldots,n\}$, $j \in \{1,\ldots, n-1\}$, $k \in \{0,\ldots,n-1\}$, and $x \in \{0,\ldots,n-1\}$;\quad ($q_0 \xrightarrow{0^i} p_i \xrightarrow{1^j 0} s^{i,j}_n \xrightarrow{0^k} s^{i,j}_{n-k} \xrightarrow{1} f_0 \xrightarrow{0^x} f_x$)
\item $0^i 1^n 0^k 1 0^x$ for $i \in \{1,\ldots,n\}$, $k \in \{0,\ldots,n-1\}$, and $x \in \{0,\ldots,n-1\}$;\quad ($q_0 \xrightarrow{0^i} p_i \xrightarrow{1^n} s^{i,n}_n \xrightarrow{0^k} s^{i,n}_{n-k} \xrightarrow{1} f_0 \xrightarrow{0^x} f_x$)
\item $0^i 1^j 0 0^n 0 0^x$ for $i \in \{1,\ldots,n\}$, $j \in \{1,\ldots, n-1\}$, and $x \in \{0,\ldots,n-1\}$;\quad ($q_0 \xrightarrow{0^i} p_i \xrightarrow{1^j 0} s^{i,j}_n \xrightarrow{0^n} q_\mathrm{g} \xrightarrow{0} f_0 \xrightarrow{0^x} f_x$)
\item $0^i 1^n 0^n 0 0^x$ for $i \in \{1,\ldots,n\}$ and $x \in \{0,\ldots,n-1\}$;\quad ($q_0 \xrightarrow{0^i} p_i \xrightarrow{1^n} s^{i,n}_n \xrightarrow{0^n} q_\mathrm{g} \xrightarrow{0} f_0 \xrightarrow{0^x} f_x$)
\end{itemize}

The words coming from the final setting states corresponding to the transition $r_j(p_j) = \{p_i \mid i \in \{0,1,2,\ldots,j-1\}\}$:
\begin{itemize}
\item $0^j 1^j 0 0^{n-k}$ for $j \in \{1,\ldots,n-1\}$ and $k \in \{1,\ldots,j-1\}$;\quad ($q_0 \xrightarrow{0^j} p_j \xrightarrow{1^j 0} s^{j,j}_n \xrightarrow{0^{n-k}} s^{j,j}_k$) 
\item $0^n 1^n 0^{n-k}$ for $k \in \{1,\ldots,n-1\}$;\quad ($q_0 \xrightarrow{0^n} p_n \xrightarrow{1^n} s^{n,n}_n \xrightarrow{0^{n-k}} s^{n,n}_k$)
\end{itemize}

The words coming from the final setting states corresponding to the transition $r_j(p_i) = P$ for $i \in \{0,1,2,\ldots,j-1\}$:
\begin{itemize}
\item $0^i 1^j 0 0^{n-k}$ for $j \in \{1,2,\ldots,n-1\}$, $i \in \{1,\ldots,j-1\}$, and $k \in \{1,2,\ldots,n\}$;\quad ($q_0 \xrightarrow{0^i} p_i \xrightarrow{1^j 0} s^{i,j}_n \xrightarrow{0^{n-k}} s^{i,j}_k$)
\item $0^i 1^n 0^{n-k}$ for $i \in \{1,\ldots,n-1\}$ and $k \in \{1,2,\ldots,n\}$;\quad ($q_0 \xrightarrow{0^i} p_i \xrightarrow{1^n} s^{n,n}_n \xrightarrow{0^{n-k}} s^{n,n}_k$)
\end{itemize}

The words coming from the final setting states corresponding to the transition $r_j(p_i) = \{p_i\}$ for $i \in \{j+1,j+2,\ldots,n-1\}$:
\begin{itemize}
\item $0^i 1^j 0 0^{n-i}$ for $j \in \{1,2,\ldots,n-1\}$ and $i \in \{j+1,\ldots,n\}$;\quad ($q_0 \xrightarrow{0^i} p_i \xrightarrow{1^j 0} s^{i,j}_n \xrightarrow{0^{n-i}} s^{i,j}_{i}$)
\end{itemize}

A program generating these examples is also available at~\cite{this-arxiv} as a source file.

\end{document}